\documentclass[aps,reprint,amsfonts, amssymb, amsmath,pra, superscriptaddress, twocolumn,longbibliography,nofootinbib]{revtex4-1}

\usepackage{xcolor}

\usepackage{float}

\usepackage[shortlabels]{enumitem}

\usepackage{braket}
\usepackage{amsthm}
\usepackage{mathtools}
\usepackage{url}
\usepackage{physics}
\usepackage{graphicx}
\usepackage[left=16mm,right=16mm,top=35mm,columnsep=15pt]{geometry} 

\usepackage[T1]{fontenc}

\usepackage{bm}

\usepackage{silence}
\newcommand*{\eh}{\mathrm{End\, }(\mathcal{H})}

\def\ad{^{\dagger}}

\newcommand{\fsnull}[1]{}
\newcommand{\old}[1]{}

\addtocontents{toc}{\protect\setcounter{tocdepth}{1}}

\usepackage{hyperref}
\usepackage[toc,page,header]{appendix}

\addtocontents{toc}{\protect\setcounter{tocdepth}{1}}

\usepackage[makeroom]{cancel}
\definecolor{C1}{RGB}{52, 89, 149}
\definecolor{C2}{RGB}{251, 77, 61}
\definecolor{C3}{RGB}{3, 206, 164}
\definecolor{C4}{RGB}{202, 21, 81}
\definecolor{C5}{RGB}{202, 21, 81}
\hypersetup{colorlinks=true, linkcolor=C5, citecolor=C5, urlcolor=C5}

\usepackage{tikz}
\tikzset{every picture/.style=remember picture}

\usepackage[utf8]{inputenc}
\usepackage{graphicx}
\usepackage{xcolor}
\usepackage{amsmath}
\usepackage{amsthm}
\usepackage{bm}
\usepackage{bbm}
\usepackage{comment}
\usepackage{mathdots}
\usepackage{lipsum}
\usepackage{verbatim}
\usepackage{natbib}
\usepackage{nccmath}
\usepackage{amsfonts}
\usepackage{thm-restate}
\usepackage{thmtools}
\usepackage{ytableau}

\usepackage{amssymb}
\usepackage{dsfont}

\newcommand{\Var}{{\rm Var}}

\newcommand{\ot}{\otimes}
\newcommand{\ts}{^{\otimes 2}}

\newcommand{\bs}{\textsf{BS}}

\newcommand{\lm}{\lambda }

\newcommand{\Lm}{\Lambda }

\def\calL{\mathcal{L}}

\def\C{\mathbb{C}}
\newcommand{\mcl}{\mathcal{L}}

\newcommand{\mcw}{\mathcal{W}}

\newcommand{\mco}{\mathcal{O}}

\newcommand{\mch}{\mathcal{H}}
\newcommand{\mcm}{\mathcal{M}}

\newcommand{\mce}{\mathcal{E}}

\newcommand{\mcz}{\mathcal{Z}}

\newcommand{\mbc}{\mathbb{C}}
\newcommand{\mbr}{\mathbb{R}}

\newcommand{\mbe}{\mathbb{E}}
\newcommand{\mst}{\mathsf{T}}

\newcommand{\mfu}{\mathfrak{u}}

\def\be{\begin{equation}}
\def\ee{\end{equation}}
\def\bs{\begin{split}}
\def\e{\end{split}}
\def\ba{\begin{eqnarray}}
\def\bea{\begin{eqnarray}}

\def\tea{\end{eqnarray}}
\def\ea{\end{eqnarray}}
\def\eea{\end{eqnarray}}

\def\d{\delta}

\def\d{\delta}

\def\tk{^\otimes k}

\def\tt{^{\otimes 3}}

\def\tk{^{\otimes k}}

\newcommand{\id}{\mathds{1}}

\renewcommand{\d}{\delta}

\addtocontents{toc}{\protect\setcounter{tocdepth}{0}}

\DeclareMathOperator*{\expect}{\mathbb{E}}

\def\be{\begin{equation}}
\def\te{\end{equation}}
\def\ee{\end{equation}}
\def\ba{\begin{eqnarray}}
\def\bea{\begin{eqnarray}}

\def\tea{\end{eqnarray}}
\def\ea{\end{eqnarray}}
\def\eea{\end{eqnarray}}

\begin{document}

\makeatletter
\newif\ifinappendixtoc  

\newcommand{\tableofcontentsappendixonly}{%
  \begingroup
    \inappendixtocfalse

    \@ifundefined{l@section}{}{%
      \let\ao@l@section\l@section
      \def\l@section##1##2{\ifinappendixtoc \ao@l@section{##1}{##2}\fi}%
    }
    \@ifundefined{l@subsection}{}{%
      \let\ao@l@subsection\l@subsection
      \def\l@subsection##1##2{\ifinappendixtoc \ao@l@subsection{##1}{##2}\fi}%
    }
    \@ifundefined{l@subsubsection}{}{%
      \let\ao@l@subsubsection\l@subsubsection
      \def\l@subsubsection##1##2{\ifinappendixtoc \ao@l@subsubsection{##1}{##2}\fi}%
    }

    \def\AppendixTOCMark{\global\inappendixtoctrue}%

    \tableofcontents
  \endgroup
}
\makeatother

\title{
Particle-preserving fermionic shadows with mode-independent sample complexity
}

\author{Maxwell West}
\thanks{westm@lanl.gov}
\affiliation{Theoretical Division, Los Alamos National Laboratory, Los Alamos, New Mexico 87545, USA}
\affiliation{School of Physics, University of Melbourne, Parkville, VIC 3010, Australia}

\author{M. Cerezo}
\affiliation{Information Sciences, Los Alamos National Laboratory, Los Alamos, New Mexico 87545, USA}
\affiliation{Quantum Science Center, Oak Ridge, TN 37931, USA}

\author{Mart\'{i}n Larocca}
\affiliation{Theoretical Division, Los Alamos National Laboratory, Los Alamos, New Mexico 87545, USA}
\affiliation{Quantum Science Center, Oak Ridge, TN 37931, USA}

\begin{abstract}
We consider the problem of learning expectation values of particle-preserving operators with respect to  an unknown $\eta$-particle $n$-mode fermionic state  via classical shadows. Our main application is to estimating overlaps with arbitrary Slater determinant states: While it is known that such overlaps can, in the average case, be learnt to a fixed additive precision with a constant number of samples, the  best-known worst case bound is $\mco(\sqrt n \log n)$; here we improve this to  $\mco(\eta\log\eta)$, achieving a mode-independent sample cost. Our procedure is also computationally efficient, requiring only classical post-processing which for a generic dense orbital runs in time $\mco(n\eta^2)$. For the task of estimating the expectation value of a general particle-preserving quadratic
fermionic observable $h$, we prove a sample complexity bound of $\mco(\eta \|h_0\|_2^2)$, where $h_0$ is the traceless component of $h$; 
the associated classical post-processing scales as $\mco(n^2\eta )$. 
Finally, we discuss implementation of the required randomization: in a first-quantized encoding, approximate unitary designs give circuit depths polylogarithmic in the number of modes, contrasting with linear-depth requirements for nearest-neighbor second-quantized matchgate implementations. On the technical side, our proof reduces the extremal shadow variance to harmonic analysis on the AIII symmetric space $U(n)/(U(\eta)\times U(n-\eta))$ and evaluates the resulting integral using techniques from the theories of Jacobi ensembles and orthogonal polynomials, in a calculation which may be of independent interest.

\end{abstract}

\maketitle

\section{Introduction}
Learning certain classes of expectation values with respect to an unknown quantum state, while consuming as few copies of that state as possible, is a fundamental task in quantum information theory~\cite{huang2020predicting,aaronson2018shadow}. For example, one is often interested in learning observables (including indeed states themselves) that are somehow of ``low rank'', either in the literal linear algebraic sense, or more generally~\cite{haah2017sample,scharnhorst2025optimal}. In this spirit, when studying fermionic systems an important problem is efficiently learning $k$-reduced density matrices~\cite{low2022classical,wan2023matchgate,zhao2021fermionic}, or, more specifically, the overlap of the unknown state with Slater determinant states.  

In recent years, \textit{classical shadows}~\cite{huang2020predicting,aaronson2018shadow,west2026classical,bertoni2024shallow,low2022classical,zhao2021fermionic,wan2023matchgate,van2022hardware,ippoliti2024classical,zhao2024group,king2024triply,jerbi2023shadows,koh2022classical,chen2021robust,hearth2024efficient,chan2022algorithmic,sauvage2024classical,Helsen_2023,kunjummen2023shadow,denzler2023learningfermioniccorrelationsevolving,west2024random,grier2024sample,brandao2020fast,bertoni2022shallow,vitale2024estimation,somma2024shadow,west2025real,bringewatt2025classical} has emerged as a leading framework within which to perform such learning, with various instantiations of its general theory now understood to perform well for various types of target observables~\cite{west2026classical}. A classical shadow protocol involves measuring copies of the unknown state in randomized bases, with the specific randomization employed being protocol-dependent, and informed by which observables one wishes to learn. In the aforementioned fermionic case, both measurements based on evolving with random matchgates~\cite{wan2023matchgate,zhao2021fermionic} and particle-preserving fermionic unitaries~\cite{low2022classical} have been employed. 

The first main result of this work is to extend the analysis of Ref.~\cite{low2022classical} to show that particle-preserving fermionic shadows can learn (with a fixed high probability and to a fixed additive   precision, qualifiers we henceforth take as read) the overlap of an unknown $\eta$-particle $n$-mode fermionic state $\rho$ with an arbitrary Slater determinant state using $\mco(\eta\log\eta)$ copies of $\rho$. This can be thought of as a worst-case analogue of the result of Ref.~\cite{low2022classical}, which proves that   the corresponding average case problem can be solved with constant sample complexity (see Table~\ref{tab:tab1}). This bound is tight in the sense that we demonstrate the existence of states for which the sample complexity of this shadows framework indeed goes like $\eta\log\eta$ (up to a multiplicative constant which can be taken arbitrarily close to one as $n\to\infty$). This may also be compared with the (particle non-preserving) fermionic shadows  of Ref.~\cite{wan2023matchgate}, where the sample complexity of Slater overlap determination is bounded as $\mco(\sqrt{n}\log n)$. For $\eta\ll \sqrt n$, then, our bound is significantly tighter.  Our procedure also incurs only favorably scaling classical post processing costs, with the classical computation required to turn a single-shot shadow into an estimate of a Slater overlap scaling as $\mco(n\eta^2)$ for a generic dense orbital. This requires handling the various operators involved with some care, as they act on Hilbert spaces of (the generally much larger) dimension $\binom{n}{\eta}$. 

Secondly, we turn particle-preserving fermionic shadows to the task of  estimating arbitrary particle-preserving quadratic fermionic observables. For such an observable $h$, we show that the required number of samples  scales as $\mco(\eta \|h_0\|_2^2)$, where $h_0=h-(\Tr[h]/n)\id$ is the traceless component of $h$. Interestingly, while our approach to bounding the sample complexity in the Slater determinant case requires some reasonably technical results from the theory of zonal spherical functions on the so-called symmetric space of type AIII, our analysis for general particle-preserving quadratic fermionic observables requires only significantly more elementary techniques; we will see that this is a consequence of the different ways that the two cases interact with the representation theory of our setup.

Finally, we discuss the circuit depths required to implement the random unitaries involved in our shadow protocol. This  is the first moment at which it becomes important whether one is working in first or second quantization, with our previously quoted results agnostic to this detail. We employ recent results~\cite{schuster2024random,cui2025unitary,west2025no,grevink2025will} from the theory of random circuits to show that in first quantization one can achieve circuit depths of merely $\mco(\log\eta\log\log(\eta\log n))$, whereas in second quantization one has (under some natural to-be-detailed  assumptions) matching upper and lower bounds of $\Theta(n)$. 

\begin{table}[]
\begin{tabular}{|c|c|c|c|}
\hline
 Reference & Slater  & Quadratic $h$ & Case \\ \hline
 Ref.~\cite{low2022classical} & ${ \mco\big(\frac{n+1}{n-\eta+1}\big)}$ & -- & Average \\ \hline
 Ref.~\cite{wan2023matchgate} & $\mco(\sqrt{n}\log n)$  & $\mco(n\|h_0\|_2^2)$  & Worst   \\ \hline
 \ \ This work \ \ &\ \ $\mco(\eta\log\eta)$\ \ &\ \ $\mco(\eta\|h_0\|_2^2)$\ \ &  Worst   \\ \hline
\end{tabular}
\caption{Comparison of our sample complexity results with those of previously existing fermionic shadow schemes for our two families of observables, along with whether an average-case or worst-case bound is considered. 
The most direct comparison is with Ref.~\cite{wan2023matchgate}, over which we obtain clear improvements   in the case $\eta\ll \sqrt n$. 
}\label{tab:tab1}
\end{table}

\section{Preliminaries}

We begin by establishing our notation and reviewing some basic ideas from the theory of  classical shadows, in particular in the context of the example of the protocol given in Ref.~\cite{low2022classical}. We will be considering  $\eta$-particle states of an $n$-mode fermionic system,  the relevant Hilbert space for which is $\mch_\eta=\Lm^{\eta}(\mbc^n)$. We will always take $1\le\eta\le n/2$, although due to particle-hole symmetry our results carry through with only trivial modifications for $\eta>n/2$; in that regime our  bounds will be functions of $n-\eta$ instead of $\eta$. Given a basis $e_1,e_2,\ldots,e_n$ of $\mbc^n$,  a \textit{coordinate Slater determinant} $\ket Z$ is one of the occupation-basis vectors $\ket Z  = e_{z_1} \wedge \cdots \wedge  e_{z_\eta} $, where $Z=\{z_1<\cdots<z_\eta\}\subset[n]$; we denote  the corresponding density matrix by  $\Pi_Z=\ketbra Z$. We call the set of length-$\eta$ subsets of $[n]$ to which $Z$ belongs $\mcz_\eta$ (so that $|\mcz_\eta| = \binom{n}{\eta}$).  More generally, an $\eta$-particle Slater determinant $\ket\phi = U_\eta(v)\ket R $ may be obtained by rotating the one-particle orbitals; here $R=[\eta] $ is a fixed reference occupation and $U_\eta(v)=\Lambda^\eta v$. We will also make frequent use of the \textit{Johnson distance} between two $\eta$-subsets $S$ and $Z$, given by $d(S,Z)=\eta-|S\cap Z|$. 

We turn next to a brief review of the  points from classical shadows needed in this work; more details may be found in e.g. Refs.~\cite{huang2020predicting,west2026classical}. The setup is as follows. First, one picks (i) an ensemble $\mce$ of unitaries, acting on the Hilbert space $\mch$ of the unknown state via some representation $\pi: u\in\mce\mapsto \pi(u)\in U(\mch)$, the group of unitaries on $\mch$, and (ii) a measurement basis $\mcw$ of $\mch$. One then evolves copies of one's unknown state under the action of unitaries sampled from $\mce$, and measures the resulting state in the basis $\mcw$. Upon sampling a unitary $u\in\mce$ and measuring the outcome $\ket w\in\mcw$, one records the state $\pi(u) \ketbra{w}{w} \pi(u)\ad$. In this work we will, following Ref.~\cite{low2022classical}, exclusively make the choices $\pi=U_\eta$ and $\mcw=\mcz_\eta$. The upshot  of all this is to effect the \textit{measurement channel }
\begin{equation}\label{eq:chan}
    \mcm (\rho) = \hspace{-1mm} \expect_ {u\sim \mce } \sum_{Z\in\mcz_\eta} \Tr[\rho U_\eta(u)\ad \Pi_ZU_\eta(u)]U_\eta(u)\ad \Pi_Z U_\eta(u)\,. 
\end{equation}
While the (standard) description of the classical shadow channel that we have just given may  seem like something of an \textit{ad hoc} recipe, it is in fact describing a very natural map~\cite{west2026classical}. Indeed, let $\Xi(\rho) = \sum_{w\in\mcw} \Tr[\rho \Pi_w]\Pi_w$ be the completely dephasing channel with respect to the basis $\mcw$. Then Eq.~\eqref{eq:chan} is simply the (average over $\mce$) of the natural action of our  unitaries on $\Xi\in {\rm End}({\rm End}(\mch))$; that is, we have $\mcm = \expect_{u\sim \mce} u \cdot \Xi$.

Next, for a given observed pair $(u,Z)$, one defines the corresponding classical shadow of $\rho$ to be $\hat\rho(u,Z) = \mcm^{-1}(U_\eta(u)\ad \Pi_ZU_\eta(u))$, and obtains a corresponding estimator $\hat o (u,Z) = \Tr[\hat\rho (u,Z) O]$ of the expectation value $ o = \Tr[\rho O]$. In our case these estimators will always be unbiased; this follows from the fact, proved in Ref.~\cite{low2022classical}, that our measurement channel is tomographically complete. Given that taking the average of estimates coming from empirically sampled shadows will therefore converge to the true expectation values, a key remaining question is of the speed of convergence. This speed is controlled by the \textit{variance} of the estimators; indeed one can show that in order to estimate all of the expectation values $\{\Tr[\rho O_i]\}_{i=1}^K$ to within (with high probability) an additive precision $\varepsilon$   requires $\mco(\varepsilon^{-2}\log K\max_i \Var[\hat{o}_i])$ shadows~\cite{huang2020predicting}.  Calculating such variances is in this work our primary preoccupation.

\section{Results}
\noindent
We begin this section by sketching our proof of our main result:
\begin{restatable}{thm}{thmone}\label{thm:1} 
Let $\rho$ be an $\eta$-particle $n$-mode fermionic state, and let $\Pi_{\phi}$ be an $\eta$-particle Slater determinant state with $\eta\le n/2$. Then we have 
\begin{equation}
    \Var [\hat{\Pi}_{\phi}] \le (\eta + 1)H_{\eta+1}\,,
\end{equation}
where $H_m=\sum_{j=1}^m1/j\approx \log m$ is the $m$\textsuperscript{th} harmonic number. 
\end{restatable}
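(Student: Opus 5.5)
We bound the variance by the second moment of the estimator and reduce everything to a single integral over orbital rotations. Take $\mce$ to be Haar measure on $U(n)$, following Ref.~\cite{low2022classical}.

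\textbf{Step 1: the inverse channel.} The measurement channel $\mcm$ commutes with the $U(n)$ action. The representation $\mathrm{End}(\mch_\eta)$ decomposes multiplicity-freely into irreps $V_k$, $k=0,\dots,\eta$, labelled by the weights $(1^k,0^{n-2k},(-1)^k)$. Hence $\mcm=\sum_k \lambda_k P_k$, where $P_k$ is the isotypic projector. The eigenvalues $\lambda_k$ were computed in Ref.~\cite{low2022classical}; equivalently they follow from Johnson-scheme eigenvalues applied to the dephasing $\Xi$.

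\textbf{Step 2: reduce to a one-variable function.} By invariance we may take $\phi=R$. The second moment is then
\begin{equation}
\expect_{u}\sum_{Z}\langle Z|U_\eta(u)\rho U_\eta(u)^\dagger|Z\rangle\,\big(\langle Z|U_\eta(u)\mcm^{-1}(\Pi_R)U_\eta(u)^\dagger|Z\rangle\big)^2 .
\end{equation}
Here $f(u)=\langle Z|U_\eta(u)\mcm^{-1}(\Pi_R)U_\eta(u)^\dagger|Z\rangle=\sum_k \lambda_k^{-1}\langle Z|U_\eta(u)P_k(\Pi_R)U_\eta(u)^\dagger|Z\rangle$. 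This quantity is bi-invariant under $U(\eta)\times U(n-\eta)$, so it is a combination of zonal spherical functions on the AIII space $U(n)/(U(\eta)\times U(n-\eta))$. Such functions depend only on the principal angles between the $\eta$-planes $\mathrm{span}(R)$ and $u^\dagger\mathrm{span}(Z)$. Concretely $|\langle Z|U_\eta(u)|R\rangle|^2=\det(A^\dagger A)=\prod_i x_i$, with $A$ the $\eta\times\eta$ block and $x_i$ the squared cosines of the angles. The zonal functions are multivariate Jacobi polynomials in the $x_i$ (restricted to the one-row-type weights), so $f$ is a symmetric polynomial in $x_1,\dots,x_\eta$.

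\textbf{Step 3: worst case over $\rho$.} Bound the second moment by $\max_\rho$. For fixed $\rho$ the expression is linear in $\rho$, so the maximum over $\eta$-particle states is bounded by $\|\sum_Z U_\eta(u)^\dagger\Pi_Z U_\eta(u)\, f_Z(u)^2\|$ averaged. By invariance this operator is diagonalized by the $P_k$, which is Schur's lemma again. Its operator norm is attained on a Slater state, and I would conjecture it is attained at $\rho=\Pi_R$ itself; the tightness claim in the introduction suggests the extremal state is Slater. The quantity then becomes
\begin{equation}
\binom{n}{\eta}\int f(x)^2 \prod x_i \, d\mu_{\rm Jacobi}(x),
\end{equation}
where $\mu_{\rm Jacobi}$ is the Jacobi ensemble density $\prod|x_i-x_j|^2\prod(1-x_i)^{n-2\eta}$, the law of the squared cosines under Haar measure.

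\textbf{Step 4: evaluating the integral (main obstacle).} I expect this to be the hardest step. I would first simplify $f$. Since $\Pi_R$ projects onto $\prod x_i$, $f$ should be a single explicit polynomial, namely $\sum_k \lambda_k^{-1}\dim(V_k)\,\Phi_k(x)/\binom{n}{\eta}$, where $\Phi_k$ are the spherical functions. Expanding $f^2\prod x_i$ in the orthogonal Jacobi basis and using orthogonality then collapses the Selberg-type integral to a finite sum over $k$ of ratios of dimensions and eigenvalues. These ratios should telescope to terms like $(\eta+1)/(j)$, giving $(\eta+1)H_{\eta+1}$ after a bound that is uniform in $n$ (for $n\ge2\eta$). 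The products of the Jacobi polynomials involved are elementary symmetric type, which I would linearize with Pieri-type rules; if that fails, I would fall back on Aomoto's integral. Finally I would check that the $n$-dependence is monotone, so that the $n\to\infty$ limit gives the stated bound.
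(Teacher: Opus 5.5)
There is a genuine gap at Step 3, and it sits at the core of the theorem.

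First, the symmetry is misidentified. The second-moment operator $\Gamma_R=\int_u\sum_Z\widehat\Pi_R(u,Z)^2U_\eta(u)^\dagger\Pi_ZU_\eta(u)$ is an operator on $\mch_\eta$ that depends on $R$. It is not $U(n)$-covariant, and the isotypic projectors $P_k$ are superoperators on ${\rm End}(\mch_\eta)$, so they do not diagonalize it. What $\Gamma_R$ does commute with is $U_\eta(k)$ for $k\in K=U(\eta)\times U(n-\eta)$. Since $\mch_\eta=\bigoplus_s\mch_s$ splits into pairwise inequivalent $K$-irreps $\Lambda^{\eta-s}\mathbb{C}^\eta\otimes\Lambda^{s}\mathbb{C}^{n-\eta}$ (the Johnson shells), $\Gamma_R$ acts as a scalar $\mu_s=\binom{n}{\eta}\int_{G/K}F^2p_s$ on each shell.

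Second, and more seriously, you only conjecture that $\|\Gamma_R\|_\infty$ is attained at $\rho=\Pi_R$, i.e. that $\mu_0=\max_s\mu_s$. Evaluating at $\Pi_R$ gives only the lower bound $\mu_0\le\|\Gamma_R\|_\infty$. The tightness result cannot serve as evidence: it concerns only $\mu_0$, and it is measured against the very upper bound you are trying to prove.

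The paper proves this step by positivity of spherical expansions, in the following chain:
\begin{enumerate}[(i)]
\item $F=\binom{n}{\eta}^{-1}\sum_\lambda d_\lambda\phi_\lambda$. The channel eigenvalues exactly cancel the spherical coefficients of $p_0$, so the $\lambda_k^{-1}$ in your Step 4 formula should not be there, though positivity survives either way.
\item Hence $F$ is positive-definite, and so is $F^2$ by the Schur product theorem.
\item $p_0=|\langle R|U_\eta(g)|R\rangle|^2$ is a diagonal matrix coefficient, so it is also positive-definite.
\item Bochner--Godement then gives $F^2=\sum_\lambda c_\lambda\phi_\lambda$ and $p_0=\sum_\lambda b_\lambda\phi_\lambda$ with $b_\lambda,c_\lambda\ge0$.
\item Averaging over $K$ gives $p_s=\sum_\lambda b_\lambda\phi_\lambda(x_s)\phi_\lambda$.
\item Orthogonality yields $\mu_s=\binom{n}{\eta}\sum_\lambda b_\lambda c_\lambda\phi_\lambda(x_s)/d_\lambda\le\mu_0$, since $|\phi_\lambda|\le1$.
\end{enumerate}

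Step 4 is likewise a hope rather than an argument. Expanding $F^2$ in spherical functions requires the linearization coefficients of products of the $(1^k)$-type (one-column, not one-row) spherical functions that make up $F$. These products produce two-column partitions, which is exactly the third-order difficulty. Nothing in your sketch makes this ``collapse to ratios of dimensions and eigenvalues'' that telescope.

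The paper avoids this as follows:
\begin{enumerate}[(i)]
\item It absorbs $\binom{n}{\eta}\det X$ into the measure, so the eigenvalues form a Jacobi ensemble with one-variable weight $x(1-x)^{n-2\eta}$.
\item It writes $g_\eta(x)\Delta(x)=\sum_m(-1)^m\ell_mA_m(x)$ as a combination of alternants.
\item Andreief's identity then gives $\mu_0=\frac{\det H^{(\eta)}}{\det H^{(\eta-1)}}\,\ell^{\mst}(H^{(\eta)})^{-1}\ell$, where $H$ is the one-variable moment matrix.
\item Evaluating this quadratic form in the shifted Jacobi basis, with $\mcl(J_j)$ computed via Euler's integral, gives the exact single sum \eqref{eq:singsum}.
\end{enumerate}
The bound $(\eta+1)H_{\eta+1}$ then follows term by term from $0\le r_j\le\eta<N$ and $(n+2)(M+j+1)-N(M+2j+2)=M(\eta-j)\ge0$. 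No monotonicity in $n$ is needed; your plan would in any case have had to prove that.
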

By the discussion of the previous section, this translates directly to a statement of the sample-efficiency with which one can estimate overlaps with Slater determinants.  
The full proof of Theorem~\ref{thm:1}, which is spread over  Appendices~\ref{sec:sd} and~\ref{sec:zsf}, turns out to be somewhat technical; at a high-level, however, the outline of the argument is quite straightforward: 

\begin{proof}[Proof of Theorem~\ref{thm:1} (sketch)] 
We begin by showing that we can without loss of generality take $\ket\phi$  to be the reference state $\ket R=\wedge_i e_i$; this follows from the transitivity of the action of $U_\eta$ on the set of Slater determinant states and the invariance of the Haar measure on the unitary group. 

Next, we use the analytical characterisation of the channel Eq.~\ref{eq:chan} given in Ref.~\cite{low2022classical} to evaluate the inverse channel on $\Pi_R$, finding
\begin{equation}\label{eq:minvpr}
    \mcm^{-1}(\Pi_R) =   \sum_{t=0}^\eta \beta_t
  \sum_{\substack{T\in\mcz_\eta\\ d(T,R)=t}}\Pi_T\,,
\end{equation}
where $\beta_t=(-1)^t\binom{n-t}{\eta-t} / \binom{\eta}{t}$. We then argue that the inverse channel taking this form implies that  the estimator corresponding to an outcome $(u,Z)$ can be written as a function
\begin{equation}
    F(v_Z\ad u) = \Tr\left[\Pi_RU_\eta(v_Z\ad u)^\dagger \mcm^{-1}(\Pi_R) U_\eta(v_Z\ad u)\right]
\end{equation}
of the ``relative unitary'' $v_Z\ad u$, where $v_Z$ is such that $v_ZR=Z$. 

Now, from $\Var [\widehat\Pi_R] = \expect [\widehat\Pi_R^2] - \expect[\widehat\Pi_R]^2\le \expect [\widehat\Pi_R^2] $, it is sufficient to bound the operator norm of 
\begin{equation}\label{eq:gammar}
      \Gamma_R=\int_{u\sim  U(n)}\sum_{Z\in\mcz_\eta}
\widehat\Pi_R(u,Z)^2U_\eta(u)^\dagger\Pi_ZU_\eta(u)\, ;
\end{equation}
indeed,  $\expect [\widehat\Pi_R^2]=\Tr[\Gamma_R \rho]\le \|\Gamma_R\|_\infty \|\rho\|_1= \|\Gamma_R\|_\infty $, where we have recalled H{\"o}lder's inequality. Our analysis of the spectrum of $\Gamma_R$ is facilitated by the observation that it commutes with the relevant action of $K= U(\eta)\times U(n-\eta)\subset  U(n)$, i.e. $  U_\eta(k)\Gamma_RU_\eta(k)^\dagger=\Gamma_R
$ for $k\in K$. Under this action of $K$, the $\eta$-particle Hilbert space decomposes into (pairwise inequivalent) irreps as $\mch_\eta=\bigoplus_{s=0}^\eta\mch_s$, where $\mch_s={\rm span}\,\{|S\rangle:d(S,R)=s\}$ is the \textit{Johnson shell} at distance $s$ from $R$, of dimension $N_s= \binom{\eta}{s}\binom{n-\eta}{s}$. Schur's lemma then implies that we have $  \Gamma_R|_{\mch_s}=\mu_s \id_{\mch_s}$ for some scalars $\mu_s$; the operator norm of $\Gamma_R$ is $  \|\Gamma_R\|_{\infty }=\max_{0\le s\le\eta}|\mu_s|$.

Our evaluation of the eigenvalues $\mu_s$  proceeds by expressing them as integrals over the symmetric space $G/K =  U(n)/( U(\eta)\times U(n-\eta))$ of type AIII. Indeed, having defined for $x=gK\in G/K$ the average  transition probability from $R$ to $S$ under $g$ as 
\begin{equation}
    p_s(gK)=\frac{1}{N_s}
  \sum_{\substack{S\in\mcz_\eta\\d(S,R)=s}}|\langle S|U_\eta(g)|R\rangle|^2
\end{equation}
(which one checks is well-defined as a function on $G/K$) we find that
\begin{equation}\label{eq:muint}
    \mu_s=\binom n\eta\int_{x\sim G/K}F(x)^2p_s(x)\,.
\end{equation}
The benefit of casting the eigenvalues in this form is that we can invoke the powerful theory of \textit{zonal spherical functions} on AIII, which we use to show that $\mu_s\le\mu_0$ for all $s$ (note it follows from Eq.~\eqref{eq:muint} that all $\mu_s\ge 0$). 

Finally, it remains to evaluate $\mu_0$ itself. This turns out to  be a not particularly trivial integral involving various special functions, but some techniques from random matrix theory and the theory of orthogonal polynomials allow us to express it exactly as a relatively simple finite sum in one variable, namely
\begin{equation}\label{eq:singsum}
  \mu_0=\frac{\eta+1}{(n+2)N}\sum_{j=0}^\eta\frac{M+2j+2}{(j+1)(M+j+1)}\left(N-\frac{(\eta-j)_{j+1}}{(N+1)_j}\right)^2
\end{equation}
where $M=n-2\eta,\ N=n-\eta+1$, and $(a)_{j+1}=a(a+1)\cdots(a+j)$ is the Pochhammer symbol. 
We can then bound Eq.~\eqref{eq:singsum} by elementary means to obtain $\mu_0\le (\eta+1)H_{\eta+1}$, establishing Theorem~\ref{thm:1}.
\end{proof}

We can show that the bound of Theorem~\ref{thm:1} is tight, in the following sense:
\begin{restatable}{thm}{thmtight}\label{thm:tight} 
Let $\rho=\Pi_\phi$ be an $\eta$-particle $n$-mode Slater determinant state, with $n=c\eta$ for some constant $c>2$. Then we have 
\begin{equation}
    \Var [\hat{\Pi}_{\phi}] \sim \left(\frac{c-1}{c}\right)(\eta+1)H_{\eta+1}\,,
\end{equation}
where ``$\sim$'' denotes  asymptotic equality as $n\to\infty$. 
\end{restatable}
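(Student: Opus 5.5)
Since both the variance and the Slater state are invariant under $U_\eta(v)$, I would first reduce, as in the proof sketch of Theorem~\ref{thm:1}, to the case $\rho=\Pi_\phi=\Pi_R$. The estimator is unbiased, so $\Var[\hat\Pi_R]=\expect[\hat\Pi_R^2]-\Tr[\Pi_R\Pi_R]^2=\Tr[\Gamma_R\Pi_R]-1$. Since $\Pi_R$ is the projector onto the one-dimensional Johnson shell $\mch_0$, Schur's lemma gives $\Tr[\Gamma_R\Pi_R]=\mu_0$. Hence $\Var[\hat\Pi_R]=\mu_0-1$ exactly. The whole statement then reduces to the asymptotics of the closed form Eq.~\eqref{eq:singsum} along $n=c\eta$, $\eta\to\infty$. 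The $-1$ is negligible against $\eta\log\eta$.

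Next I would rewrite the summand using the partial fraction decomposition
\begin{equation}
  \frac{M+2j+2}{(j+1)(M+j+1)}=\frac{1}{j+1}+\frac{1}{M+j+1}.
\end{equation}
I would also factor out $N^2$ from the square, writing $(N-r_j)^2=N^2(1-r_j/N)^2$ with $r_j=(\eta-j)_{j+1}/(N+1)_j$. This gives
\begin{equation}
  \mu_0=\frac{(\eta+1)N}{n+2}\sum_{j=0}^{\eta}\left(\frac{1}{j+1}+\frac{1}{M+j+1}\right)\left(1-\frac{r_j}{N}\right)^2 .
\end{equation}
The prefactor satisfies $(\eta+1)N/(n+2)\sim(\eta+1)(c-1)/c$, because $N=n-\eta+1\sim(c-1)\eta$. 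The second harmonic piece is bounded: $\sum_{j}1/(M+j+1)\le \log\big((M+\eta+1)/M\big)+o(1)=O(1)$, since $M=(c-2)\eta$ with $c>2$. So it only contributes $O(\eta)$.

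The main step, and the only one needing care, is controlling the correction factor $(1-r_j/N)^2$. It is not uniformly close to $1$: already $r_0/N=\eta/N\to 1/(c-1)$. I would handle it through the product representation
\begin{equation}
  r_j=\eta\prod_{i=1}^{j}\frac{\eta-i}{N+i}.
\end{equation}
Every factor is at most $\eta/N<1/(c-1)+o(1)$. Hence $0\le r_j/N\le (\eta/N)^{j+1}$, which decays geometrically in $j$ uniformly in $\eta$, and the correction factor always lies in $[0,1]$.

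This gives two matching bounds. For the upper bound, $(1-r_j/N)^2\le 1$ yields $\mu_0\le \frac{(\eta+1)N}{n+2}\big(H_{\eta+1}+O(1)\big)$. For the lower bound, I would use $(1-x)^2\ge 1-2x$. The total loss is then at most $2\sum_j \frac{1}{j+1}(\eta/N)^{j+1}=O(1)$, since this is a convergent geometric-type series with ratio below $1/(c-1)+o(1)$. So $\mu_0\ge\frac{(\eta+1)N}{n+2}\big(H_{\eta+1}-O(1)\big)$. Since $H_{\eta+1}\to\infty$, both bounds give $\mu_0\sim\frac{c-1}{c}(\eta+1)H_{\eta+1}$, and subtracting $1$ yields the claim.

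The genuine obstacle is simply trusting Eq.~\eqref{eq:singsum} as an exact evaluation of $\mu_0$, which is established in the appendices. Given that identity, the remaining work is the elementary estimate above. The only point to watch is that the $O(1)$ errors stay uniform, which holds because $c>2$ fixes both $M/\eta$ and the geometric ratio $\eta/N$ away from degenerate values.
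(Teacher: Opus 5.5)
Your proof is correct. Its skeleton matches the paper's: reduce to $\Pi_\phi=\Pi_R$, use $\Var[\hat\Pi_R]=\mu_0-1$, and extract asymptotics from the closed form Eq.~\eqref{eq:singsum}. Your geometric control of the Pochhammer ratio, $r_j/N\le(\eta/N)^{j+1}$, is exactly the paper's $p_j\le q^{j+1}$.

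Where you differ is in handling the $j$-dependent weight. The paper keeps $b_j=N(M+2j+2)/((n+2)(M+j+1))$ intact. Since $b_j$ runs from roughly $(c-1)/c$ at $j=0$ up to $b_\eta=1$, the paper has to split the sum at $j=\delta\eta$. It then uses that most of the harmonic mass lies below $\delta\eta$, and takes $\eta\to\infty$ followed by $\delta\to0$.

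Your partial fraction identity $\frac{M+2j+2}{(j+1)(M+j+1)}=\frac{1}{j+1}+\frac{1}{M+j+1}$ shows instead that $b_j/(j+1)$ equals $\frac{N}{n+2}\cdot\frac{1}{j+1}$ plus a piece whose total sum is $O(1)$, because $M=(c-2)\eta$. So no cutoff or iterated limit is needed.

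This buys a quantitative statement that the paper's argument does not give: $\Var[\hat\Pi_\phi]=\frac{(\eta+1)N}{n+2}\bigl(H_{\eta+1}+O_c(1)\bigr)$. That is an additive error of $O_c(\eta)$, or a relative error of $O(1/\log\eta)$.

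The paper's $\delta$-splitting needs less structure in exchange. It only uses that $b_j$ is bounded and tends to $(c-1)/c$ for $j=o(\eta)$, so it would still work if the weight had no clean algebraic decomposition.
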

That is, when taking $\rho$  to be a Slater determinant state of some fixed filling ratio $1/c$, the variance of the estimator corresponding to $\rho$ itself saturates the bound of Theorem~\ref{thm:1} up to a multiplicative constant which can be taken arbitrarily close to one by letting $c\to\infty$.  
Intuitively, the fact that the variance bound is saturated when estimating the overlap with a Slater determinant state equal to the unknown state itself   may be traced to the fact that the maximal eigenvalue of $\Gamma_R$ (recall Eq.~\eqref{eq:gammar}) is $\mu_0$, corresponding to the shell at zero distance from $\ketbra\phi$; concentrating $\rho$ on that shell (that is, setting it equal to $\ketbra \phi$) then produces the strongest test of the bound. Indeed, asymptotically analysing the expression Eq.~\eqref{eq:singsum} under the assumption $n=c\eta$  yields Theorem~\ref{thm:tight}. 

So far, we have focused on the sample-efficiency of our protocol. Also important, however, is the computational complexity of estimating the overlaps given a collection of shadows. We find that this too can be performed efficiently; indeed we have

\begin{restatable}{thm}{thmcomp}\label{thm:comp} 
Given a shadow $\hat\rho(u,Z)$ and an arbitrary Slater determinant state $\Pi_\phi$, the computational cost of calculating $\Tr[\hat\rho(u,Z)\Pi_\phi]$ scales as $\mco(n\eta^2)$. 
\end{restatable}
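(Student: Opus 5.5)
To compute $\Tr[\hat\rho(u,Z)\Pi_\phi]$, I would first rewrite it as a quantity living on the $n$-dimensional single-particle space. By definition $\hat\rho(u,Z)=\mcm^{-1}(U_\eta(u)\ad\Pi_ZU_\eta(u))$. The channel $\mcm$ is self-adjoint with respect to the Hilbert--Schmidt inner product, since it is an average of dephasing channels conjugated by unitaries. Hence
\[
\Tr[\hat\rho(u,Z)\Pi_\phi]=\Tr[U_\eta(u)\ad\Pi_ZU_\eta(u)\,\mcm^{-1}(\Pi_\phi)].
\]
Choose $v$ with $\ket\phi=U_\eta(v)\ket R$. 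By the covariance of $\mcm$ under $U_\eta$, which follows from the Haar invariance used in the proof sketch of Theorem~\ref{thm:1}, we have $\mcm^{-1}(\Pi_\phi)=U_\eta(v)\mcm^{-1}(\Pi_R)U_\eta(v)\ad$. Substituting Eq.~\eqref{eq:minvpr} then gives
\[
\Tr[\hat\rho(u,Z)\Pi_\phi]=\sum_{t=0}^\eta\beta_t\sum_{d(T,R)=t}|\langle Z|U_\eta(w)|T\rangle|^2,\qquad w=uv.
\]
Each amplitude $\langle Z|U_\eta(w)|T\rangle=\det W_{Z,T}$ is an $\eta\times\eta$ minor of $w$. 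The sum, however, ranges over $\binom{n}{\eta}$ sets $T$, so it cannot be evaluated term by term. Forming the relevant part of $w$ costs $\mco(n\eta^2)$: only the $\eta$ rows of $w$ indexed by $Z$ are needed, giving an $\eta\times n$ matrix $A=(uv)_{Z,\cdot}$. Computing it requires $u_{Z,\cdot}$ times $v$; if $\phi$ is specified through its $\eta$ orbitals, it is better to work with $\langle Z|U_\eta(u)|\phi\rangle$-type overlaps, i.e.\ $u_{Z,\cdot}$ times the $n\times\eta$ orbital matrix, at cost $\mco(n\eta^2)$. Below I also handle the complementary block.

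The key step is to collapse the shell sums with a generating function. Split the columns of $A$ into $A_1=A_{\cdot,R}$ ($\eta\times\eta$) and $A_2$ ($\eta\times(n-\eta)$). Cauchy--Binet applied to $\det(A_1A_1\ad+xA_2A_2\ad)$ gives
\[
P(x):=\det(A_1A_1\ad+xA_2A_2\ad)=\sum_{t=0}^\eta x^t\sum_{d(T,R)=t}|\det A_{\cdot,T}|^2 .
\]
The coefficient $c_t$ of $x^t$ is therefore exactly the shell sum, and the estimator equals $\sum_t\beta_tc_t$. The Gram matrices $A_1A_1\ad$ and $A_2A_2\ad$ cost $\mco(n\eta^2)$ to form. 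The coefficients of $P$ are then obtained in $\mco(\eta^3)$ time, either as the characteristic-polynomial coefficients of a generalized eigenproblem or, more simply, by evaluating $P$ at $\eta+1$ points and interpolating. Point evaluation costs $\mco(\eta^4)$, which is still $\mco(n\eta^2)$ since $\eta\le n/2$. Finally, the $\beta_t$ (ratios of binomials) can be computed recursively in $\mco(\eta)$.

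I expect the main obstacle to be the interface with how $\phi$ is specified. If $v$ is given only through its $\eta$ occupied orbitals (an $n\times\eta$ isometry $V$), then $A_1=u_{Z,\cdot}V$ costs $\mco(n\eta^2)$. The complementary block can be avoided by unitarity: $A_1A_1\ad+A_2A_2\ad=AA\ad=\id_\eta$, because the rows of $w$ are orthonormal. Hence $A_2A_2\ad=\id-A_1A_1\ad$, and $P(x)=\det(x\id+(1-x)A_1A_1\ad)$. This needs only $A_1$, so no completion of $V$ is required and the total cost is $\mco(n\eta^2)$, dominated by the product $u_{Z,\cdot}V$. I would also double-check the self-adjointness and covariance claims, and the numerical conditioning of the interpolation step; computing eigenvalues of $A_1A_1\ad$ and expanding $\prod_i(x+(1-x)\lambda_i)$ is a stable alternative.
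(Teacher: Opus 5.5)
Your proposal is correct and is essentially the paper's argument: both form the $\eta\times\eta$ matrix $u_{Z,\cdot}\Phi$ at cost $\mco(n\eta^2)$ and read the estimator off the spectrum of its Gram matrix, and your Cauchy--Binet generating polynomial $\det(x\id+(1-x)A_1A_1^\dagger)$, contracted against the $\beta_t$, is exactly what Lemma~\ref{lem:hlem} packages as $g_\eta(A^\dagger A)$. The one slip is the interpolation aside: $\mco(\eta^4)$ is not $\mco(n\eta^2)$ when $\eta\gg\sqrt n$ (e.g.\ $\eta=n/2$), so the bound must rest on your $\mco(\eta^3)$ eigenvalue-then-expand route, which is also what the paper uses.
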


We emphasise that the (linear) $n$ dependence of Theorem~\ref{thm:comp} is unavoidable, as to even write down a general dense Slater determinant state takes time $\mco(n\eta)$. 

Next we turn to the task of estimating the expectation values of arbitrary particle-preserving quadratic fermionic observables. We begin with a bound on the sample complexity:

\begin{restatable}{thm}{thmham}\label{thm:ham} 
Let $\rho$ be an $\eta$-particle $n$-mode fermionic state, and let $h\in{\rm End}\,\mbc^n$ be a particle-preserving quadratic fermionic observable. Then we have 
\begin{equation}
    \Var [\hat{h}] \le (2\eta + 1)\|h_0\|_2^2\,,
\end{equation}
where $\|h_0\|_2^2$ is the (squared) 2-norm of the traceless component of $h$. 
\end{restatable}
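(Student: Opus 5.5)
The plan is to exploit the fact that a quadratic particle-preserving observable $h$ (acting on $\mch_\eta$ as $d\Lambda^\eta(h)=\sum_{ij}h_{ij}a_i^\dagger a_j$) lives in only two isotypic components of ${\rm End}(\mch_\eta)$ under the $U(n)$ action. The channel $\mcm$ commutes with this action, so by Schur's lemma it acts by a scalar on each of these components, and the variance computation reduces to a single degree-$2$ Haar integral.

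\emph{Step 1: split off the trace.} Write $h=h_0+(\Tr[h]/n)\id$, so that $d\Lambda^\eta(h)=d\Lambda^\eta(h_0)+(\eta\Tr[h]/n)\id$. The channel $\mcm$ is unital and trace preserving, so $\mcm^{-1}(\id)=\id$. Hence $\Tr[\hat\rho(u,Z)]=1$ for every outcome, and the identity part contributes a deterministic constant to $\hat h$. Therefore $\Var[\hat h]=\Var[\hat h_0]$, and it suffices to treat traceless $h_0$.

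\emph{Step 2: invert the channel.} The operators $d\Lambda^\eta(h_0)$ with $h_0$ traceless span a copy of the adjoint irrep of $U(n)$ inside ${\rm End}(\mch_\eta)$. It occurs with multiplicity one, being the $k=1$ summand of ${\rm End}(\mch_\eta)\cong\bigoplus_{k=0}^\eta V_{(1^k,0,\dots,0,-1^k)}$. So $\mcm(d\Lambda^\eta(h_0))=\lambda_1\, d\Lambda^\eta(h_0)$ for a scalar $\lambda_1$, which I would compute from the characterization of Ref.~\cite{low2022classical}. Equivalently,
\[
\lambda_1=\frac{\Tr[d\Lambda^\eta(h_0)\,\mcm(d\Lambda^\eta(h_0))]}{\Tr[d\Lambda^\eta(h_0)^2]},
\]
evaluated with the diagonal choice $h_0=e_{11}-e_{22}$. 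I expect $\lambda_1=\Theta(\eta/n)$ times a combinatorial factor. The single-shot estimator is then
\[
\hat h_0(u,Z)=\lambda_1^{-1}\sum_{i\in Z}g_{ii},\qquad g=u h_0 u^\dagger,
\]
which is cheap, and matches the $\mco(n^2\eta)$ postprocessing cost.

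\emph{Step 3: bound the second moment.} As in the proof of Theorem~\ref{thm:1}, use $\Var\le\expect[\hat h_0^2]=\Tr[\rho\,\Gamma_h]\le\|\Gamma_h\|_\infty$. Here
\[
\Gamma_h=\lambda_1^{-2}\int_u U_\eta(u)^\dagger\, d\Lambda^\eta(D_g)^2\, U_\eta(u),
\]
where $D_g$ is the diagonal part of $g$; this holds because $\sum_Z(\sum_{i\in Z}g_{ii})^2\Pi_Z=d\Lambda^\eta(D_g)^2$. Expand
\[
d\Lambda^\eta(D_g)^2=\sum_i g_{ii}^2 n_i+\sum_{i\neq j}g_{ii}g_{jj}n_in_j .
\]
Conjugating back by $u$ turns each term into a polynomial of degree $(2,2)$ in $u,\bar u$ acting on $d\Lambda^\eta$ and two-body operators. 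The Haar average can then be done with second-order Weingarten calculus, or more cleanly by noting that $\Gamma_h$ is built from $h_0\otimes h_0$ through a $U(n)$-equivariant map. The result must be a linear combination of $d\Lambda^\eta(h_0)^2$, $d\Lambda^\eta(h_0^2)$, and $\|h_0\|_2^2\id$, with explicit $n,\eta$-dependent coefficients.

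\emph{Step 4: bound the operator norms.} Use $\|d\Lambda^\eta(h_0^2)\|_\infty\le\|h_0\|_2^2$, since $h_0^2\ge0$ with trace $\|h_0\|_2^2$. For the first term, bound $\|d\Lambda^\eta(h_0)^2\|_\infty\le\eta\|h_0\|_2^2$ by Cauchy--Schwarz on the top $\eta$ eigenvalues of $h_0$. Then combine these with $\lambda_1^{-2}$ and the coefficients from Step 3 to reach $(2\eta+1)\|h_0\|_2^2$.

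The main obstacle is Step 3. The bookkeeping of the Weingarten terms, and especially the cancellations with $\lambda_1^{-2}$ that remove all explicit $n$-dependence, are where the constant $2\eta+1$ must emerge. If the direct expansion becomes unwieldy, I would instead decompose $h_0\otimes h_0$ into $U(n)$-irreps and track only the invariant contractions $\Tr[h_0^2]$, $\Tr[h_0]^2=0$ and $h_0^2$.
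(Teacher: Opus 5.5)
Your skeleton matches the paper's: split off the trace, apply Schur's lemma to the adjoint copy inside $\mathrm{End}(\mch_\eta)$, write $\Gamma_{h_0}$ in the span of $\|h_0\|_2^2\id$, $d{U_\eta}(h_0^2)$ and $d{U_\eta}(h_0)^2$, then bound norms. Your equivariance argument for the three-term form is sound. The gap is that the three coefficients, which carry the whole theorem, are never computed, and the method you propose for them is too weak.

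Since $\lambda_1^{-1}=n+1$, an $\mco(n\|h_0\|_2^2)$ bound is easy. The content of the theorem is that, after multiplying by $(n+1)^2$, the coefficient of $\|h_0\|_2^2\id$ is $\mco(\eta)$ and the other two are $\mco(1)$. Equivariance says nothing about this.

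Nor is the integral of degree $(2,2)$. In your own formula the weight $(\sum_{i\in Z}g_{ii})^2$ is already of degree $(2,2)$ in $(u,\bar u)$. Conjugating $n_i$ and $n_in_j$ back adds degree $(1,1)$ and $(2,2)$. Equivalently, $\Gamma_{h_0}=(n+1)^2\,\mathbb{E}_u[d{U_\eta}(k_u)^2]$ with $k_u=u^\dagger D_{uh_0u^\dagger}u$, and its two-body part needs $\mathbb{E}[k_u\otimes k_u]$, a fourth moment. Second-order Weingarten calculus gives you $\Tr\Gamma_{h_0}$, which is one linear relation among three unknowns. Also, $\lambda_1=1/(n+1)$ for every $\eta$ (Lemma~\ref{lem:mquad}, consistent with the qudit value at $\eta=1$), not $\Theta(\eta/n)$.

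The paper supplies the missing computation as follows:
\begin{itemize}
\item For diagonal $h$, torus equivariance makes $\Gamma_h$ diagonal in the occupation basis, with $\bra S\Gamma_h\ket S=(n+1)^2\binom n\eta\int_u\Tr[uQ_Su^\dagger h]^2\,|\det u_{S,S}|^2$.
\item The $|\det|^2$ weight turns the spectrum of $u_{S,S}u_{S,S}^\dagger$ into the Jacobi ensemble with weight $x(1-x)^{n-2\eta}$.
\item Conditioned on that spectrum, the residual $U(\eta)\times U(n-\eta)$ rotations are Haar, so only the second-order formula~\eqref{eq:fixedspec} is needed.
\item What remains is $\mathbb{E}[s_1]$, $\mathbb{E}[s_1^2]$ and $\mathbb{E}[s_2]$, which are obtained by integration by parts (Lemmas~\ref{lem:jactwo}--\ref{lem:gamhev}).
\end{itemize}
This yields $A_{n,\eta},B_{n,\eta},C_n\ge 0$. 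With these values your Step 4 works as written: $A_{n,\eta}+B_{n,\eta}+\eta C_n=\frac{(n+1)(n(2\eta+1)-\eta^2+\eta)}{n(n+2)}\le 2\eta+1$. So the paper's finer decomposition of the quadratic form is not even needed for this constant.

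Your pinching reformulation could also be completed without the Jacobi ensemble. Write $k_u=\sum_i c_iP_{v_i}$ with $v_i=u^\dagger e_i$ and $c_i=\langle v_i|h_0|v_i\rangle$. Then $\mathbb{E}[k_u^2]$ is a third moment of a single Haar vector. Likewise $\mathbb{E}[k_u\otimes k_u]=n\,\mathbb{E}_{v_1}\!\left[c_1P_{v_1}\otimes\mathbb{E}[k_u\mid v_1]\right]$ reduces, after averaging the remaining basis vectors over $v_1^\perp$, to moments of one Haar vector of order at most four. That calculation is, however, the proof itself, and the proposal does not carry it out.
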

Although still somewhat involved, the proof of Theorem~\ref{thm:ham} turns out to be considerably less demanding than that of the corresponding Theorem~\ref{thm:1}. At a high level, one can interpret this as a reflection of the differing ways that Slater determinants and particle-preserving quadratic operators interact with the representation theory of our setup. To see this, we begin by noting     that the measurement channel $\mcm:\eh\to\eh$ (recalling Eq.~\eqref{eq:chan}) is equivariant with respect to the natural action of $U(n)$ on $\eh$ (that is, $A\in\eh\mapsto U_\eta(u)AU_\eta(u)\ad$). Combined with the fact that (it turns out) $\eh$ decomposes multiplicity-freely with respect to that action of $U(n)$, it follows from Schur's lemma that $\mcm$ simply acts as a scalar on each $U(n)$-irrep in $\eh$~\cite{fulton1991representation}. Now, as we see in   Appendix~\ref{sec:ham}, the (traceless) quadratic particle-preserving   operators furnish exactly one of these irreps, so that their action under $\mcm$ is to simply be multiplied by a scalar (which we show in  Appendix~\ref{sec:ham} to be $1/(n+1)$). This is in stark contrast to the behaviour of the Slater determinants under the channel, which from Eq.~\eqref{eq:minvpr} we see to be significantly more complicated. In fact, having identified that they are eigenvectors of $\mcm^{-1}$ of eigenvalue $n+1$, an $\mco(n\|h_0\|_2^2)$ bound on the sample complexity of estimating our currently considered observables follows immediately from the techniques of Ref.~\cite{west2026classical}; this is also the scaling   attainable  from Ref.~\cite{wan2023matchgate}. The non-trivial content of Theorem~\ref{thm:ham} is then to reduce this from scaling  with $n$ to with $\eta$.

As in the case of Slater
determinants, we can by means of an explicit example show that our bound on the scaling is optimal:
\begin{restatable}{thm}{thmhamscaling}\label{thm:hamscaling} 
Let $n$ be even, and take $h_0=\id_{n/2}\oplus (-\id_{n/2})$ to be the diagonal observable consisting of $(n/2)$-many ones followed by $(n/2)$-many negative ones. Let $\ket\psi = (\ket{S_+}+\ket{S_-})/\sqrt 2$, where $S_+\subset[1,n/2]$ and $S_-\subset(n/2,n]$ are  $\eta$-particle subsets. Then we have 
\begin{equation}
    \Var [\hat{h}_0] = (n+1)\eta \in \Theta(\eta \|h_0\|_2^2)\,.
\end{equation}
\end{restatable}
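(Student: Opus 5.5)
The plan is to compute the variance exactly rather than bound it. It rests on two facts: $\hat h_0$ is a rescaled diagonal expectation, and the measured outcomes form a determinantal point process.

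\textbf{Step 1: the estimator.} As discussed before Theorem~\ref{thm:ham}, the traceless quadratic operators form a single $U(n)$-irrep on which $\mcm$ acts as $1/(n+1)$. Write $H_0$ for the second quantisation of $h_0$. Since $\mcm$ is self-adjoint, $\hat h_0(u,Z)=(n+1)\Tr[U_\eta(u)^\dagger\Pi_ZU_\eta(u)H_0]$. The state $U_\eta(u)^\dagger\ket Z$ is the Slater determinant with orbitals $u^\dagger e_z$. Setting $A=uh_0u^\dagger$ and $g_z=A_{zz}$, this gives
\begin{equation}
\hat h_0(u,Z)=(n+1)\sum_{z\in Z}g_z .
\end{equation}
The mean is $\bra\psi H_0\ket\psi=\tfrac12(\eta-\eta)=0$, because $H_0$ is diagonal and the cross terms vanish. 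Hence $\Var[\hat h_0]=\expect[\hat h_0^2]=(n+1)^2\bra\psi\Gamma\ket\psi$, where $\Gamma=\expect_u\sum_Z(\sum_{z\in Z}g_z)^2\,U_\eta(u)^\dagger\Pi_ZU_\eta(u)$.

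\textbf{Step 2: reduce to a single occupation.} $\Gamma$ is invariant under conjugation by $U_\eta(k)$ for every $k$ commuting with $h_0$, by Haar invariance. Take $k$ to be the phase $e^{i\theta}$ on the first block. This multiplies $\ket{S_+}$ by $e^{i\eta\theta}$ and fixes $\ket{S_-}$, so $\bra{S_+}\Gamma\ket{S_-}=0$ for $\eta\ge1$. The block swap sends $h_0\to-h_0$, which leaves the squared estimator unchanged. Together with permutations inside each block, this shows that both diagonal terms equal $\bra R\Gamma\ket R$ for an $\eta$-subset $R\subset[1,n/2]$. Therefore $\Var[\hat h_0]=(n+1)^2\,\expect_u\expect_{Z}(\sum_{z\in Z}g_z)^2$, with $Z$ drawn from the Born distribution of the rotated $\ket R$.

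\textbf{Step 3: DPP formula.} For fixed $u$, the outcome $Z$ is a projection determinantal point process with kernel $P=uP_Ru^\dagger$, a rank-$\eta$ projector. The standard linear-statistic formulas give
\begin{equation}
\expect_Z\Big(\sum_{z\in Z}g_z\Big)^2=\Big(\sum_z g_zP_{zz}\Big)^2+\sum_z g_z^2P_{zz}-\sum_{z,w}g_zg_w|P_{zw}|^2 .
\end{equation}
The key simplification is that $h_0P_R=P_R$, since $R$ lies in the $+1$ block. Hence $AP=PA=P$, and $A=P+(Q_+-P)-Q_-$, where $Q_\pm=uP_\pm u^\dagger$ are the rotated block projectors. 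All three terms are then low-degree polynomials in the matrix entries of the Haar-random projectors $P\le Q_+$ and $Q_-$.

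\textbf{Step 4: Haar averages (main obstacle).} The three Haar averages are where I expect the main difficulty. By permutation symmetry each term reduces to fixed indices $z$, or $z\ne w$, so one needs fourth-order moments of entries of a Haar unitary. Rather than use raw Weingarten calculus, I would first exploit the nested projector structure. A single column $x=u^\dagger e_z$ is a uniform unit vector, and $P_{zz}$ and $g_z=\bra x h_0\ket x$ are quadratic forms in it. These are Dirichlet-type (beta) moments that are easy to evaluate. The pair term $g_zg_w|P_{zw}|^2$ needs two orthonormal uniform vectors, and I would handle it with the second-order Weingarten function only. The target is $\expect[\cdots]=\eta/(n+1)$, which gives $\Var=(n+1)\eta$. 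Finally, $\|h_0\|_2^2=n$, so this is $\Theta(\eta\|h_0\|_2^2)$ and matches Theorem~\ref{thm:ham} up to a factor of two.
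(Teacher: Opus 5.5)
Steps 1--3 of your proposal are correct. The estimator $(n+1)\sum_{z\in Z}(uh_0u^\dagger)_{zz}$, the vanishing mean, the phase argument killing $\bra{S_+}\Gamma\ket{S_-}$, and the block swap plus intra-block permutations equating the two diagonal entries all hold. So do the identification of the Born distribution of $U_\eta(u)\ket{R}$ with the projection DPP of kernel $P=uP_Ru^\dagger$, the linear-statistic formula, and $AP=PA=P$.

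\textbf{The gap.} The theorem is an exact identity, and all of its content sits in Step~4. You only announce that step (``the target is $\eta/(n+1)$''); nothing in the proposal establishes the value. Your plan for Step~4 is also off as stated. In $g_zg_w|P_{zw}|^2$, and in the $z\neq w$ part of $(\sum_z g_zP_{zz})^2$, each of the columns $x_z,x_w$ of $u^\dagger$ occurs to degree $(2,2)$, so a direct Weingarten evaluation is fourth order, not second. Second order suffices only if you first condition on $x=x_z$ and average $y=x_w$ over the unit sphere of $x^\perp$:
\[
\mathbb{E}\big[(y^\dagger O_1y)(y^\dagger O_2y)\,\big|\,x\big]=\frac{\Tr(QO_1)\Tr(QO_2)+\Tr(QO_1QO_2)}{n(n-1)},\qquad Q=\id-xx^\dagger .
\]

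\textbf{Closing it along your route.} Using $h_0P_R=P_R$, everything reduces to moments of $a=\|P_Rx\|^2$ and $c=\|P_-x\|^2$. Here $g=1-2c$, $P_{zz}=a$, and $(a,\|(P_+-P_R)x\|^2,c)\sim\mathrm{Dir}(\eta,n/2-\eta,n/2)$. The middle term alone gives $n\,\mathbb{E}[(1-2c)^2a]=\eta/(n+1)$. The first and third terms both have expectation $\eta(\eta+1)(n+4)/((n+1)(n+2)(n+3))$ and cancel. Hence $\bra{R}\Gamma\ket{R}=\eta/(n+1)$ and $\Var[\hat h_0]=(n+1)\eta$, as you anticipated. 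You need to include this computation, or an equivalent one, for the proposal to be a proof.

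\textbf{Comparison with the paper.} The paper computes no new integral here. By Lemma~\ref{lem:gamhev},
\[
\Gamma_{h_0}=A_{n,\eta}\Tr(h_0^2)\id+B_{n,\eta}\,dU_\eta(h_0^2)+C_n\,dU_\eta(h_0)^2
\]
is diagonal in the occupation basis, which makes your phase argument unnecessary. For $S_\pm$ inside a single block its eigenvalue is $A_{n,\eta}n+B_{n,\eta}\eta+C_n\eta^2=(n+1)\eta$.

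Equivalently, after your Step~2 you could reduce $\sum_Z$ to $Z=R$ by transitivity, as in the proof of Lemma~\ref{lem:gamhev}. Then apply Lemma~\ref{lem:jacthree} with $T=n$ and $a_R=b_R=\eta$, which gives $\eta/(n+1)$ in one line.

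Your DPP/Dirichlet computation is a self-contained alternative that bypasses the Jacobi-ensemble moments of Lemma~\ref{lem:jactwo}. It is, however, tailored to this particular $h_0$ and is longer than reusing the general formula.
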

Finally, we have that the estimation of these general particle-preserving quadratic observables also involves only efficient classical post processing:
\begin{restatable}{thm}{thmhamcomp}\label{thm:hamcomp} 
Given a shadow $\hat\rho(u,Z)$ and an arbitrary  particle-preserving quadratic observable $h$, the computational cost of calculating $\Tr[\hat\rho(u,Z)h]$ scales as $\mco(n^2\eta)$. 
\end{restatable}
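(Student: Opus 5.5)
The plan is to use the fact, established in the discussion before Theorem~\ref{thm:ham}, that traceless quadratic operators are eigenvectors of $\mcm$ with eigenvalue $1/(n+1)$. This lets us avoid forming the $\binom{n}{\eta}$-dimensional shadow at all. Write $h = h_0 + (\Tr[h]/n)\id$. The identity component contributes the known constant $\eta\Tr[h]/n$, because every shadow has unit trace on $\mch_\eta$, and computing $\Tr[h]$ costs $\mco(n)$. For the traceless part, self-adjointness of $\mcm$ with respect to the Hilbert--Schmidt inner product gives
\begin{equation}
\Tr[\hat\rho(u,Z) H_0] = \Tr\big[U_\eta(u)\ad\Pi_Z U_\eta(u)\,\mcm^{-1}(H_0)\big] = (n+1)\bra{Z}U_\eta(u)H_0U_\eta(u)\ad\ket Z,
\end{equation}
where $H_0=\sum_{ij}(h_0)_{ij}a_i\ad a_j$ is the second-quantized lift of $h_0$. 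The whole task therefore reduces to evaluating one expectation value of a one-body operator in the Slater determinant $U_\eta(u)\ad\ket Z$.

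The key step is the standard formula for a one-body operator in a Slater determinant. The state $U_\eta(u)\ad\ket Z$ is the Slater determinant whose orbitals are the columns $u\ad e_{z_1},\dots,u\ad e_{z_\eta}$. Collect these into an $n\times\eta$ isometry $W$, so $W^\dagger W=\id_\eta$ (with the convention for $U_\eta$ fixed so that $U_\eta(u)a_i\ad U_\eta(u)\ad=\sum_k u_{ki}a_k\ad$). Then
\begin{equation}
\bra{Z}U_\eta(u)H_0U_\eta(u)\ad\ket Z=\Tr[W\ad h_0 W]=\Tr[h_0 P],\qquad P=WW\ad,
\end{equation}
where $P$ is the one-particle reduced density matrix. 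This follows by Wick's theorem, or directly: $\langle a_i\ad a_j\rangle$ in a Slater determinant with orbital matrix $W$ equals $(WW\ad)_{ji}$.

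The operation count is as follows.
\begin{enumerate}[(i)]
\item Extracting $W$ from $u$ means reading $\eta$ rows of $u$ and conjugating them, which costs $\mco(n\eta)$.
\item Forming $h_0W$ costs $\mco(n^2\eta)$.
\item Computing $\Tr[W\ad(h_0W)]$ costs $\mco(n\eta)$.
\end{enumerate}
The total is $\mco(n^2\eta)$. Subtracting the trace part of $h$ changes $h$ only on its diagonal, so forming $h_0$ costs $\mco(n)$. For comparison, forming $P$ explicitly would also be acceptable, since it costs $\mco(n^2\eta)$ as well.

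The main subtlety is bookkeeping rather than analysis. I would check the eigenvalue $(n+1)$ and the self-adjointness of $\mcm$; self-adjointness holds because $\mcm=\expect_u u\cdot\Xi$, $\Xi$ is self-adjoint, and the ensemble is Haar and hence inverse-closed. I would also fix the row-versus-column and conjugation conventions of $U_\eta$, which determine whether $W$ is built from rows of $u$ or of $u\ad$. None of these choices affects the scaling.
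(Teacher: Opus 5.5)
Your proposal is correct and follows essentially the paper's proof. There too, the $1/(n+1)$ eigenvalue of Lemma~\ref{lem:mquad} reduces the estimator to $\eta\Tr[h]/n+(n+1)\Tr[h_0u^\dagger Q_Zu]$ (Eq.~\eqref{eq:hestfull}), and the $\mco(n^2\eta)$ cost comes from factoring $u^\dagger Q_Zu=(u^\dagger)_{*,Z}u_{Z,*}$, which is exactly your $WW^\dagger$; as a minor aside, self-adjointness of $\mcm$ does not need inverse-closedness of the ensemble, since each term $u\cdot\Xi$ is already self-adjoint.
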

As in the case of estimating  Slater determinants, we note that this (now increased to quadratic) $n$-dependence is optimal (assuming an  explicit  dense  input representation), on account of the fact that to even read  a generic such observable takes time $\mco(n^2)$. Happily, Theorem~\ref{thm:hamcomp} shows that the linear $\eta$-dependence is not increased from the Slater determinant case. 

\section{Discussion}
The primary technical contribution of this work is to derive, in the context of a certain classical shadow protocol, tight bounds on the scaling of the sample complexity for the estimation of several classes of observables. Indeed in classical shadows there is something of a history of it being the case  that it is relatively straightforward to derive non-trivial but quite loose bounds on the sample complexity~\cite{west2026classical}, with tight bounds requiring significantly more effort. For example, in the original proposal of Ref.~\cite{huang2020predicting}, it is noted that for local Clifford shadows one can quite easily obtain sample complexity bounds for a general $k$-local operator scaling as $9^k$; the improvement in that work down to $4^k$ requires significant technical cleverness. A similar situation occurs in the case of real local Cliffords~\cite{west2025real}. From a certain point of view, the heart of the difficulty is that the key contribution to the variance of the estimators is a \textit{third-order} quantity, 
\begin{equation}
\mbe[\hat{o}^2]=\sum_{w\in \mcw}\Tr[(\rho \ot \mcm^{-1}(O)\ts)\expect_{g\sim G}(g\Pi_w g\ad)\tt]\,,\label{eq:var}
\end{equation}
where we sample unitaries from some group $G$ (c.f. Eq.~\eqref{eq:gammar}).
In the context of the local Clifford example, the bound  of $9^k$ can be derived by making the simple estimate $\Tr [\mcm^{-1}(O) u\Pi_w u\ad]\le \|\mcm^{-1}(O)\|_\infty $ in Eq.~\eqref{eq:var}. Intuitively, one would expect this to be, on average over $u$, a rather weak assumption, so that it is not too surprising that the resulting bounds fail to be tight. Representation theoretically, the additional challenge of exactly evaluating  Eq.~\eqref{eq:var} compared to the measurement channel itself (Eq.~\eqref{eq:chan}) may be seen as a reflection of the increased   difficulty of working with  the trivial irreps of a tensor cube instead of a tensor square; the operator $\expect_{g\sim G}(g\Pi_w g\ad)\tk$, after all, projects onto the $k$-fold commutant of $G$~\cite{mele2024introduction}. Indeed, in the particle-preserving fermionic shadow case, it is precisely this difficulty that is cited in Ref.~\cite{low2022classical} as preventing the variances from being calculated exactly. 

In light of this, then, it is highly interesting to explore new techniques that can lead to improved control of the integrals that characterize the variance of classical shadow estimators. To that end, our employment of the theory of symmetric spaces joins a growing list of their technical applications to quantum computing and information theory. For example, and as mentioned during the above sketch, our proof of Theorem~\ref{thm:1} relies critically on invoking the theory of zonal spherical functions on the symmetric space $U(n)/(U(\eta)\times U(n-\eta))$. These functions recently played a role in the construction of exact unitary designs, by inductively constructing designs on $U(n)$ from designs on $U(\eta)\times U(n-\eta)$~\cite{bannai2022explicit}. Such recursive considerations have also lead to symmetric spaces playing an important role in compiling unitaries into simple components suitable for implementing on a physical quantum computer~\cite{wierichs2025recursive,dagli2008general}.  In an entirely different application, analysis of a \textit{finite} symmetric space associated to the symmetric group was used to prove a lower bound on the quantum query complexity of a certain quantum state generation problem~\cite{lindzey2019tight}. The random properties of unitaries drawn from symmetric spaces have also recently been the study of some explicit attention, with applications to information scrambling~\cite{hunter2018operator}, quantum learning theory~\cite{west2026average} and classical shadows~\cite{chang2026classical}. More generally, further exploring how the powerful techniques of analysis on symmetric spaces can be used in quantum computing is a natural line of inquiry.

Along with its sample complexity and  classical post-processing costs, there is in general a third critical ingredient to ensure  the efficiency of a classical shadow protocol: the depths of the circuits required to implement the random unitary sampling. While we have spoken of sampling uniformly from (a certain representation of) the unitary group $U(n)$, classical shadows in general probes only the third moment of the employed distribution~\cite{huang2020predicting}, so that a 3-design over the same suffices. In order to ascertain what that means exactly for us, we need for the first time to consider exactly how we are representing our states and unitaries at the circuit level. There are two natural choices: first and second quantization. 

In first quantization, we represent  an $\eta$-particle $n$-mode fermionic state $\ket\psi\in\Lm^\eta(\mbc^n)$ using $\eta\lceil\log_2 n\rceil$ qubits, with a valid state completely anti-symmetric under (for example) the exchange of the first $\log n$ qubits with the qubits at positions $j\log n$ through $(j+1)\log n$. Given a valid initial state, then, to apply a random $u\sim U(n)$ is in this representation  simply to apply $u^{\ot \eta}$, with an individual $u$ acting as the standard representation of the unitary group on its register of $\sim \log n$ qubits. To obtain a  3-design over this ensemble it is therefore clearly sufficient to sample $u$ from  a $3\eta$-design over $U(n)$. This can be done in depth $\widetilde{\mco}(\eta\log\log n)$, where the tilde hides polylog factors in $\eta$, and we work to a fixed approximation error~\cite{schuster2024random}. If one allows ancilla qubits, initial bits of randomness, and long range 2-qubit gates, this can be improved to $\mco(\log\eta\log\log(\eta\log n))$~\cite{cui2025unitary}, which is fairly good $n$-dependence.

Interestingly, the situation differs   considerably  in second quantization. Here  one works with $n$ qubits, with a state of definite particle number $\eta$ taking the form of a superposition over bitstrings of Hamming weight $\eta$, and random unitaries     given by (particle-preserving) matchgates. Now, the exact Haar measure on the group of (1D nearest-neighbour) particle-preserving matchgates can be sampled from in depth $\Theta(n)$~\cite{braccia2025optimal}, but, remarkably, designs in sublinear depth consisting of ensembles of (1D nearest-neighbour) particle-preserving matchgates constructed from local gates are not possible. This is due to a simple argument very similar to those recently made in the aid of ruling out  sublinear depth designs over various groups in Refs.~\cite{west2025no,grevink2025will,schuster2024random}. Indeed, consider applying such a particle-preserving matchgate, of depth $L$, to the state in which exactly the first $\eta$ modes are occupied. A simple lightcone argument reveals that any modes beyond position $\eta+L$ will remain unoccupied, behaviour that is readily distinguished from even a 1-design over the group. Having said that, it may yet be possible to construct sublinear depth designs by looking beyond ensembles of particle-preserving matchgates; the current situation, however, suggests that moving from first to second quantization trades improved qubit numbers for deeper circuits. It would of course be very interesting to resolve this conclusively.

\section{Acknowledgments}

MW acknowledges the support of the IBM Quantum Hub  at the University of Melbourne.
ML and MC acknowledge support by the Laboratory Directed Research and Development (LDRD) program of LANL under project number 20260043DR, and by the LANL's ASC Beyond Moore’s Law project. 
This work was also supported by the Quantum Science Center (QSC), a National Quantum Information Science Research Center of the U.S. Department of Energy (DOE). 

\bibliography{refs, quantum}

\newpage

\onecolumngrid
\appendix

\makeatletter
\renewcommand*{\theHequation}{\theHsection.\arabic{equation}}
\makeatother



\section{Estimating Slater determinant overlaps}\label{sec:sd}
In this Appendix we prove  Theorems~\ref{thm:1},~\ref{thm:tight} and~\ref{thm:comp}. 
We begin by recalling some notation. We let $\mcz_\eta=\{Z\subset[n]: |Z|=\eta\}$, with $ D=|\mcz_\eta|=\binom n\eta$. To simplify the notation, we will sometimes use $M=n-2\eta,\ N=n-\eta+1$. We will always take $\eta\le n/2$. 
For a given $Z=\{z_1<\cdots<z_\eta\}$, we define \textit{the occupation-basis vector} $|Z\rangle=e_{z_1}\wedge\cdots\wedge e_{z_\eta}
  \in\mch_\eta:=\Lambda^\eta(\C^n)$, and call $R=\{1,2,\ldots,\eta\}$ the \textit{reference occupation}. 
For any $Z\in\mcz_\eta$, we write $\Pi_Z =\ketbra{Z}$, and define the \textit{Johnson distance} between two $\eta$-subsets to be $d(S,Z)=\eta-|S\cap Z|$. Following the notation of Ref.~\cite{low2022classical}, we use $U_\eta:  U(\C^n)\to  U(\mch_\eta)$ to denote the exterior power representation $U_\eta(u)=\Lambda^\eta u$, and $dU_\eta: \mfu(\C^n)\to \mfu(\mch_\eta)$ for its derivative. In the occupation basis one then has the standard formulae
\begin{equation}
  U_\eta(u)|Z\rangle=\sum_{T\in\mcz_\eta}\det(u_{T,Z})|T\rangle \,,
\label{eq:minors}
\end{equation}
where $u_{T,Z}$ is the $\eta\times\eta$ submatrix of $u$ with rows indexed by $T$ and columns indexed by $Z$, and 
\begin{equation}
    dU_\eta(h) \ket Z = \sum_{r=1}^\eta e_{z_1}\wedge\cdots\wedge(he_{z_r})\wedge\cdots\wedge e_{z_\eta}\,;
\end{equation}
equivalently, $dU_\eta(h) = \sum_{i,j}h_{i,j}a_i\ad a_j$. We will abuse notation slightly to apply $dU_\eta$ to hamiltonians $h\in {\rm End}\,\mbc^n$; as $\mfu(\C^n)$ consists of skew-hermitian matrices one would more properly write $dU_\eta(ih)=i\sum_{i,j}h_{i,j}a_i\ad a_j$.
We also  define, for $0\le t\le\eta$, the shell projector
\begin{equation}
  P_t=\sum_{\substack{T\in\mcz_\eta\\ d(T,R)=t}}\Pi_T.
\label{eq:pt}
\end{equation}
The shell size is given by $N_t=\Tr(P_t)=\binom{\eta}{t}\binom{n-\eta}{t}$; indeed, an element at a Johnson distance of $t$ from $R$ is specified exactly by choosing which $t$ elements of $R$ to remove, and with which $t$ elements  which to replace them.  Now, the classical shadow channel corresponding to our protocol is
\begin{equation}\label{eq:chanapp}
    \mcm (\rho) = \int_{u\sim  U(n)} \sum_{Z\in\mcz_\eta} \Tr[\rho U_\eta(u)\ad \Pi_ZU_\eta(u)]U_\eta(u)\ad \Pi_Z U_\eta(u)\,; 
\end{equation}
as discussed in the main text, we need to know its (inverse) action on an arbitrary Slater determinant state $\Pi_Z$. Indeed, our estimators are given by $\widehat\Pi_\phi (u,Z)= \Tr[\Pi_\phi \mcm^{-1}(U_\eta(u)\ad\Pi_ZU_\eta(u))] $. To that end, we have:
\begin{restatable}{lem}{leminv}\label{lem:inv} 
For every $Z\in\mcz_\eta$,
\begin{equation}
  \mcm^{-1}(\Pi_Z)=
  \sum_{t=0}^\eta \beta_t
\sum_{\substack{T\in\mcz_\eta\\ d(T,Z)=t}}\Pi_T
   \,,
\label{eq:inv}
\end{equation}
where $\beta_t=(-1)^t\frac{\binom{n-t}{\eta-t}}{\binom{\eta}{t}}$.
\end{restatable}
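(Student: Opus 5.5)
The plan is to pin down the form of $\mcm^{-1}(\Pi_Z)$ by symmetry, and then fix the coefficients by solving a linear system in the Bose--Mesner algebra of the Johnson scheme. By Ref.~\cite{low2022classical} the channel $\mcm$ is invertible. It is also equivariant under $A\mapsto U_\eta(u)AU_\eta(u)\ad$, and hence so is $\mcm^{-1}$.

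\textbf{Step 1: reduce to diagonal operators depending only on $d(T,Z)$.} Let $X_Z=\mcm^{-1}(\Pi_Z)$. The diagonal torus $\mathbb{T}\subset U(n)$ fixes $\Pi_Z$, so $X_Z$ commutes with $U_\eta(t)$ for all $t\in\mathbb{T}$. Distinct occupation vectors $\ket T$ carry distinct torus weights (the indicator vectors of $T$), so $X_Z$ is diagonal in the occupation basis. Permutation matrices $\sigma\in S_n$ with $\sigma(Z)=Z$ also fix $\Pi_Z$. They act transitively on $\{T: d(T,Z)=t\}$, so the diagonal entries depend only on $d(T,Z)$. This gives $X_Z=\sum_t\beta_t P^{Z}_t$ for some scalars $\beta_t$, where $P^Z_t$ is the shell projector of Eq.~\eqref{eq:pt} centred at $Z$. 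By $S_n$-covariance the $\beta_t$ are independent of $Z$. It remains to identify them.

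\textbf{Step 2: the linear system.} Apply $\mcm$ and take diagonal matrix elements. Write $c_d=\Tr[\Pi_S\mcm(\Pi_T)]$, which depends only on $d=d(S,T)$ by the same symmetry. Explicitly,
\begin{equation*}
c_d=\int_{u\sim U(n)}\sum_{W\in\mcz_\eta}|\det u_{W,S}|^2|\det u_{W,T}|^2 .
\end{equation*}
The condition $\mcm(X_Z)=\Pi_Z$ on the diagonal reads $\sum_T \beta_{d(T,Z)}c_{d(S,T)}=\delta_{S,Z}$. This is the statement $C B=I$ in the Bose--Mesner algebra of the Johnson scheme $J(n,\eta)$, where $C=\sum_d c_dA_d$ and $B=\sum_t\beta_tA_t$. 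The off-diagonal part needs no check: invertibility and Step 1 already guarantee that $\mcm^{-1}(\Pi_Z)$ has this form.

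\textbf{Step 3: invert $C$ using the Johnson eigenspaces.} The algebra is diagonalised by the eigenspaces $V_0,\dots,V_\eta$ of the Johnson scheme. The space $V_k$ consists of the torus-weight-zero (diagonal) vectors of the $k$-th $U(n)$-irrep in $\eh\cong\bigoplus_k \mathcal{W}_k$, which is multiplicity-free. By Schur's lemma $\mcm$ acts on $\mathcal{W}_k$ as a scalar $m_k$, which I will take from Ref.~\cite{low2022classical}. Hence $C$ acts on $V_k$ by $m_k$, and
\begin{equation*}
\beta_t=\frac{1}{D}\sum_{k=0}^\eta \frac{\dim V_k}{m_k}\,\frac{Q_k(t)}{\dim V_k},
\end{equation*}
with $\dim V_k=\binom nk-\binom n{k-1}$ and $Q_k$ the dual (Hahn/Eberlein) eigenvalues of $J(n,\eta)$.

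\textbf{Main obstacle.} The hard part is evaluating this sum in closed form to get $\beta_t=(-1)^t\binom{n-t}{\eta-t}/\binom\eta t$. I would not sum directly, but verify the guessed $\beta_t$ instead. I would compute the eigenvalue $\sum_t\beta_t P_k(t)$ of $B$ on $V_k$, where $P_k(t)$ are the Eberlein polynomials, and check that it equals $1/m_k$. With $\beta_t N_t=(-1)^t\binom{n-t}{\eta-t}\binom{n-\eta}{t}$ this becomes a terminating hypergeometric sum. I expect it to fall to Chu--Vandermonde or Pfaff--Saalsch\"utz.

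As a fallback, a direct check of $\sum_T\beta_{d(T,Z)}c_{d(S,T)}=\delta_{S,Z}$ needs the $c_d$ explicitly. These could be obtained from the $K$-bi-invariant integrals of Cauchy--Binet type used in Ref.~\cite{low2022classical}.
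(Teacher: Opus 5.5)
Your argument is sound as far as it goes, and it takes a genuinely different route from the paper. The paper's proof is essentially a citation. It reduces to $Z=R$ by conjugating with a permutation unitary $v_Z$, then reads $\mcm^{-1}(\Pi_R)$ directly off the explicit operator $E_{\eta,\eta}$ of Ref.~\cite{low2022classical}. The only step is the reindexing $s'=|S\cap R|=\eta-d(S,R)$, which turns $(-1)^{\eta+s'}\binom{n-\eta+s'}{s'}/\binom{\eta}{s'}$ into $\beta_t$.

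You instead derive the structure. Torus invariance and the stabiliser of $Z$ in $S_n$ force $\mcm^{-1}(\Pi_Z)$ to be diagonal and radial in Johnson distance. Since $\mcm$ preserves diagonal operators, inversion becomes $CB=I$ in the Bose--Mesner algebra of $J(n,\eta)$, with solution $\beta_t=D^{-1}\sum_k Q_k(t)/m_k$. This explains why only $d(T,Z)$ enters, and it needs only the spectrum of $\mcm$ rather than its full inverse, so it applies to any invertible $U(n)$-equivariant map of this kind. The paper's route is a one-liner, but all of its content is imported.

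As written, though, the proposal stops exactly where the content of the lemma lies: you neither state the $m_k$ nor verify the identity. You need $m_k=\binom{n+1}{k}^{-1}$, which is consistent with $m_0=1$ (trace preservation) and $m_1=1/(n+1)$ (Lemma~\ref{lem:mquad}). You then need $\sum_t\beta_tP_t(k)=\binom{n+1}{k}$ for $0\le k\le\eta$. This identity is true, and you can avoid the Eberlein double sum as follows.

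\begin{enumerate}[(i)]
\item Write $B=\sum_j x_jW_j^{\mathsf T}W_j$, where $W_j$ is the inclusion matrix of $j$-subsets versus $\eta$-subsets, so that $(W_j^{\mathsf T}W_j)_{S,T}=\binom{|S\cap T|}{j}$.
\item Matching $\beta_{\eta-s}=\sum_j x_j\binom sj$ and using Eq.~\eqref{eq:pregen} gives $x_j=(-1)^{\eta-j}\binom{n+1}{j}/\binom\eta j$. These are exactly the coefficients of $g_\eta$ in Lemma~\ref{lem:hlem}.
\item The standard (Wilson) eigenvalues $\binom{\eta-k}{j-k}\binom{n-j-k}{\eta-j}$ of $W_j^{\mathsf T}W_j$ on $V_k$ reduce the check to a single alternating sum over $j$.
\item Writing $(n-j-k)!/(n+1-j)!$ as $\frac{1}{k!}\int_0^1 x^{n-j-k}(1-x)^k\,dx$, that sum collapses to $\binom{n+1}{k}$.
\end{enumerate}

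You should also justify identifying $V_k$ with the zero-weight space of $\mathcal W_k$. One way: both $V_0\oplus\cdots\oplus V_k$ and the diagonal part of $\bigoplus_{j\le k}\mathcal W_j$ equal the span of products of at most $k$ distinct number operators, i.e.\ of the indicators $T\mapsto[A\subseteq T]$ with $|A|\le k$.
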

\begin{proof}
First,  by the manifest equivariance of the channel Eq.~\eqref{eq:chanapp}, $\mcm^{-1}(U_\eta(u)^\dagger\Pi_ZU_\eta(u))=U_\eta(u)^\dagger\mcm^{-1}(\Pi_Z)U_\eta(u)$, it clearly suffices to establish the result for the reference state $R=[\eta]$. Indeed, imagine that Eq.~\eqref{eq:inv} holds for the reference state. Then, for any  $Z=v_ZR$ with $v_Z$ a permutation unitary, we would have
\begin{align}
     \mcm^{-1}(\Pi_Z)&=\mcm^{-1}(U_\eta(v_Z)\Pi_RU_\eta(v_Z)\ad)\\
     &=U_\eta(v_Z)\mcm^{-1}(\Pi_R)U_\eta(v_Z)\ad\\
     &=\sum_{t=0}^\eta \beta_t\sum_{\substack{T\in\mcz_\eta\\ d(T,R)=t}}U_\eta(v_Z)\Pi_TU_\eta(v_Z)\ad\\
     &=\sum_{t=0}^\eta \beta_t\sum_{\substack{T\in\mcz_\eta\\ d(v_Z^{-1}T,R)=t}}\Pi_T=\sum_{t=0}^\eta \beta_t\sum_{\substack{T\in\mcz_\eta\\ d(T,v_ZR)=t}}\Pi_T=\sum_{t=0}^\eta \beta_t\sum_{\substack{T\in\mcz_\eta\\ d(T,Z)=t}}\Pi_T\,.
\end{align}
So, let $Z=R$. For $\eta$-particle occupation-number states $P,Q$, the single-shot estimator of an operator $O=\sum_{P,Q}o_{P,Q}\ketbra{P}{Q}$ corresponding to a sample $(u,Z)$ is  given in Ref.~\cite{low2022classical} as
\begin{equation}
    \langle \hat O \rangle = \Tr[O U_\eta(u)\ad \mcm^{-1}(\Pi_Z)U_\eta(u)] =  \Tr[O U_\eta(v_Z\ad u)\ad E_{\eta,\eta} U_\eta(v_Z\ad  u)]\,,
\end{equation}
where 
\begin{equation}
    E_{\eta,\eta} = \sum_{S\in\mcz_\eta} \ketbra{S}{S}(-1)^{\eta+s'} \frac{\binom{n-\eta+s'}{s'}}{\binom{\eta}{s'}}\,,
\end{equation}
with $s'=|S\cap [\eta]|$. Noticing $s'=\eta-d(S,[\eta])$ concludes the proof. 
\end{proof}
Now, by the manifest equivariance of the channel Eq.~\eqref{eq:chanapp}, $\mcm^{-1}(U_\eta(u)^\dagger\Pi_ZU_\eta(u))=U_\eta(u)^\dagger\mcm^{-1}(\Pi_Z)U_\eta(u)$, we have, for any outcome pair $(u,Z)$ and any permutation unitary $v_Z$ with $v_ZR=Z$, 
\begin{align}
\widehat\Pi_R(u,Z)&=\Tr\left[\Pi_RU_\eta(u)^\dagger \mcm^{-1}(\Pi_Z)U_\eta(u)\right]\\
  &=\Tr\left[\Pi_RU_\eta(u)^\dagger \mcm^{-1}(U_\eta(v_Z)\Pi_RU_\eta(v_Z\ad))U_\eta(u)\right]\\
  &=\Tr\left[\Pi_RU_\eta(u)^\dagger U_\eta(v_Z)\mcm^{-1}(\Pi_R)U_\eta(v_Z\ad)U_\eta(u)\right]\\
  &=\Tr\left[\Pi_RU_\eta(v_Z\ad u)^\dagger \mcm^{-1}(\Pi_R)U_\eta(v_Z\ad u)\right]\,. 
\end{align}
Defining
\begin{equation}\label{eq:h}
    F(w)=\Tr\left[\Pi_RU_\eta(w)^\dagger \mcm^{-1}(\Pi_R)U_\eta(w)\right],
\end{equation}
we therefore have $\widehat\Pi_R(u,Z)=F(v_Z\ad u)$. The function $F$ has a reasonably explicit form; indeed we find
\begin{restatable}{lem}{lemh}\label{lem:hlem} 
 $F(w)=g_\eta(w_{R,R}^\dagger w_{R,R})$, where
\begin{equation}
  g_\eta(X)=\sum_{r=0}^\eta(-1)^{\eta-r}\frac{\binom{n+1}{r}}{\binom{\eta}{r}}e_r(X)\,,
\label{eq:g}
\end{equation}
and $e_r(X)$ denotes the $r$\textsuperscript{th} elementary symmetric polynomial in the eigenvalues of the matrix $X$, with $e_0(X)=1$.
\end{restatable}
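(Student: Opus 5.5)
The plan is to compute $F$ directly from Lemma~\ref{lem:inv} and reduce it to a generating-function identity plus one binomial sum. Inserting Eq.~\eqref{eq:inv} with $Z=R$ into Eq.~\eqref{eq:h}, and using Eq.~\eqref{eq:minors} to write $\langle T|U_\eta(w)|R\rangle=\det(w_{T,R})$, gives
\begin{equation*}
F(w)=\sum_{t=0}^\eta\beta_t Q_t,\qquad Q_t:=\sum_{\substack{T\in\mcz_\eta\\ d(T,R)=t}}|\det(w_{T,R})|^2 .
\end{equation*}
So it suffices to express each shell sum $Q_t$ through the elementary symmetric polynomials of $X:=w_{R,R}^\dagger w_{R,R}$.

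For this I will use a weighted Cauchy--Binet formula. Let $A=w_{[n],R}$ be the $n\times\eta$ isometry formed by the first $\eta$ columns of $w$, and let $D_s={\rm diag}(\id_R,\,s\,\id_{R^c})$. Cauchy--Binet gives
\begin{equation*}
\det(A^\dagger D_sA)=\sum_T s^{|T\cap R^c|}|\det A_T|^2=\sum_t s^tQ_t ,
\end{equation*}
since $|T\cap R^c|=d(T,R)$. On the other hand, $A^\dagger A=\id$ implies $A^\dagger D_s A=X+s(\id-X)=(1-s)X+s\id$. Expanding the determinant in the eigenvalues of $X$ then yields
\begin{equation*}
\sum_t s^tQ_t=\sum_{r=0}^\eta e_r(X)(1-s)^rs^{\eta-r}.
\end{equation*}
Extracting the coefficient of $s^t$ gives
\begin{equation*}
Q_t=\sum_r e_r(X)(-1)^{j}\binom{r}{j},\qquad j=t-\eta+r .
\end{equation*}
Substituting into $F=\sum_t\beta_tQ_t$ and collecting terms, the coefficient of $e_r(X)$ is
\begin{equation*}
\sum_{j=0}^r(-1)^j\binom rj\beta_{\eta-r+j}.
\end{equation*}
The signs combine as $(-1)^j(-1)^{\eta-r+j}=(-1)^{\eta-r}$, which already matches the sign in Eq.~\eqref{eq:g}.

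It then remains to prove the combinatorial identity
\begin{equation*}
\sum_{j=0}^r\binom rj\frac{\binom{n-\eta+r-j}{r-j}}{\binom{\eta}{\eta-r+j}}=\frac{\binom{n+1}{r}}{\binom\eta r}.
\end{equation*}
I expect this to be the only real obstacle, and I would handle it by reindexing. Set $k=r-j$, use $\binom{\eta}{\eta-k}=\binom\eta k$, and apply $\binom rk/\binom\eta k=\binom{\eta-k}{r-k}/\binom\eta r$. The claim becomes
\begin{equation*}
\sum_{k}\binom{n-\eta+k}{n-\eta}\binom{\eta-k}{\eta-r}=\binom{n+1}{r}.
\end{equation*}
This is the standard Chu--Vandermonde variant $\sum_k\binom{a+k}{a}\binom{b-k}{c}=\binom{a+b+1}{a+c+1}$, taken with $a=n-\eta$, $b=\eta$, $c=\eta-r$. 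It can be proved by comparing coefficients in $(1-x)^{-a-1}(1-x)^{-c-1}=(1-x)^{-a-c-2}$, or by counting $(a+c+1)$-subsets of $[a+b+1]$ according to the position of their $(a+1)$-th element. The right-hand side of that identity is $\binom{n+1}{n-r+1}=\binom{n+1}{r}$, which completes the argument.

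The other steps need some care but are routine. One has to check the ranges of $j$ and $k$, in particular that terms with $t<\eta-r$ vanish, and confirm that $e_r(X)$ is symmetric in the eigenvalues.
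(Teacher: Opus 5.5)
Your proof is correct. It follows the paper's skeleton: expand $F=\sum_t\beta_t q_t$, get a determinantal generating function for the shell weights, extract coefficients, and reduce to the binomial identity Eq.~\eqref{eq:pregen}.

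The genuine difference is how you obtain the generating function. The paper first notes that $q_t$ is $K$-bi-invariant. It then uses the $KAK$ (cosine--sine) decomposition on AIII to put $w$ in the normal form $we_i'=\sqrt{\lambda_i}e_i'+\sqrt{1-\lambda_i}f_i'$. Expanding $\bigwedge_i(\sqrt{\lambda_i}e_i'+\sqrt{1-\lambda_i}f_i')$ term by term gives $\sum_t q_t y^{\eta-t}=\det(\id_\eta-X+yX)$.

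You instead apply Cauchy--Binet to $A^\dagger D_sA$ and use only $A^\dagger A=\id_\eta$. This is exactly the computation the paper does later for $Q_g(z)$ in the proof of Lemma~\ref{lem:psinv}. Your identity $\sum_t s^tQ_t=\det(s\id_\eta+(1-s)X)$ is equivalent to the paper's via the substitution $s=1/y$.

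Your route is shorter and purely algebraic. It needs neither the structure theory of the symmetric space nor a reduction to normal form. The paper's route gives a geometric reading: the $\lambda_i$ are squared cosines of the principal angles between $R$ and $wR$, and $q_t$ is the total weight of picking the $R^\perp$-component exactly $t$ times. That picture fits the symmetric-space viewpoint used in Appendix~\ref{sec:zsf}.

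The closing combinatorics are equivalent in substance. Both arguments use the rewriting $\binom rk/\binom\eta k=\binom{\eta-k}{r-k}/\binom\eta r$, which the paper applies without comment. The paper then reads off the coefficient of $z^r$ in $\sum_s\binom{n-\eta+s}{s}z^s(1+z)^{\eta-s}=(1+z)^{n+1}$. You instead invoke the Chu--Vandermonde variant $\sum_k\binom{a+k}{a}\binom{b-k}{c}=\binom{a+b+1}{a+c+1}$ with $a=n-\eta$, $b=\eta$, $c=\eta-r$, proved by convolving negative-binomial series or by counting subsets.
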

\begin{proof}
By Lemma~\ref{lem:inv} we  have $ F(w)=\sum_{t=0}^\eta\beta_tq_t(w)$, where 
\begin{equation}\label{eq:qdef}
q_t(w)=\Tr\left[P_tU_\eta(w)\Pi_RU_\eta(w)^\dagger\right]\,,    
\end{equation}
and $P_t=\sum_{T\in\mcz_\eta\,|\,d(T,R)=t}\Pi_T$ is the projector onto the shell at distance $t$ from $R$. Our first observation is that, with $X=w_{R,R}^\dagger w_{R,R}$, we have the generating function
\begin{equation}
  \sum_{t=0}^\eta q_t(w)y^{\eta-t}=\det(\id_\eta-X+yX)\,.\label{eq:gen}
\end{equation}
Let us prove Eq.~\eqref{eq:gen}. To do so, we make our first (but not last) explicit appeal to a property of the AIII symmetric space $G/K= U(n)/( U(\eta)\times U(n-\eta))$. Allowing $K$ to act in the natural way (that is, so as to preserve the splitting $ \C^n={\rm span}\{e_1,\ldots,e_\eta\}\oplus{\rm span}\{e_{\eta+1},\ldots,e_n\}$) we see that for $k\in K$ we have both that  $U_\eta(k)P_tU_\eta(k)^\dagger=P_t$ and that  $U_\eta(k)|R\rangle$ is simply a phase times $|R\rangle$, because the first block of $k$ acts on $e_1\wedge\cdots\wedge e_\eta$ by its determinant. It follows that $q_t(k_1wk_2)=q_t(w)$  
for $k_1,k_2\in K$, and so $q_t(w)$ depends only on the double coset $KwK$. We can therefore use the $KaK$ theorem on AIII to without loss of generality assume that $w$ is in the ``cosine-sine form'', so that there exist   orthonormal vectors
$e_1',\ldots,e_\eta'\in R$, $f_1',\ldots,f_\eta'\in R^\perp$, such that  $w e_i'=\sqrt{\lambda_i}\,e_i'+\sqrt{1-\lambda_i}\,f_i'$ for $i=1,\ldots,\eta$, and where   the numbers $\lambda_i\in[0,1]$ are the eigenvalues of $X$. We then have
\begin{equation}\label{eq:wwedge}
    U_\eta(w)|R\rangle=\bigwedge_{i=1}^\eta\left(\sqrt{\lambda_i}e_i'+\sqrt{1-\lambda_i}f_i'\right)\,.
\end{equation}
Now, expanding out the brackets in Eq.~\eqref{eq:wwedge} and inserting the result into
Eq.~\eqref{eq:qdef},  we see that $P_t$ will kill any term in the expansion that did not involve ``choosing the $f$ term'' exactly $t$ times, and act as the identity on those terms which do, whence we obtain
\begin{equation}
  q_t(w)=\sum_{\substack{I\subset\{1,\ldots,\eta\}\\ |I|=t}}\prod_{i\in I}(1-\lambda_i)\prod_{i\notin I}\lambda_i.
\end{equation}
Multiplying by $y^{\eta-t}$ and summing over $t$ then yields 
\begin{align}
\sum_{t=0}^\eta q_t(w)y^{\eta-t}&=\sum_{t=0}^\eta y^{\eta-t} \sum_{\substack{I\subset\{1,\ldots,\eta\}\\ |I|=t}}\prod_{i\in I}(1-\lambda_i)\prod_{i\notin I}\lambda_i\\
&=\sum_{t=0}^\eta \sum_{\substack{I\subset\{1,\ldots,\eta\}\\ |I|=t}}\prod_{i\in I}(1-\lambda_i)\prod_{i\notin I}y\lambda_i\\
&=  \sum_{ I\subset\{1,\ldots,\eta\} }\prod_{i\in I}(1-\lambda_i)\prod_{i\notin I}y\lambda_i\\
  &=\prod_{i=1}^\eta\bigl((1-\lambda_i)+y\lambda_i\bigr)\\
  &=  \det(\id_\eta-X+yX)\,,
\end{align}
which is Eq.~\eqref{eq:gen}. Now, from the readily verified identity $  \det(\id+zX)=\sum_{r=0}^\eta e_r(X)z^r$, where $e_r(X)$ denotes the $r$\textsuperscript{th} elementary symmetric polynomial in the eigenvalues of the matrix $X$, we thus have
\begin{align}
    \sum_{t=0}^\eta q_t(w)y^{\eta-t}&=\sum_{r=0}^\eta e_r(X)(y-1)^r\\
    &=\sum_{r=0}^\eta e_r(X)\sum_{a=0}^r\binom r a y^a(-1)^{r-a}\\
    &=\sum_{a=0}^\eta\sum_{r=a}^\eta e_r(X)\binom r a y^a(-1)^{r-a}\\
    &=\sum_{t=0}^\eta\left[\sum_{r=\eta-t}^\eta e_r(X)\binom {r} {\eta-t} (-1)^{r-\eta-t}\right]y^{\eta-t}
\end{align}
so that $q_t(w)=\sum_{r=\eta-t}^\eta e_r(X)\binom {r} {\eta-t} (-1)^{r-\eta-t}$. Continuing on, we have
\begin{align}
    F(w)&=\sum_{t=0}^\eta\beta_tq_t(w)\\
    &=\sum_{t=0}^\eta \left((-1)^t\frac{\binom{n-t}{\eta-t}}{\binom{\eta}{t}}\right)\left[\sum_{r=\eta-t}^\eta e_r(X)\binom {r} {\eta-t} (-1)^{r-\eta-t}\right] \\
    &=\sum_{s=0}^\eta\sum_{r=s}^\eta (-1)^{r-\eta} \frac{\binom{n-\eta+s}{s}}{\binom{\eta}{s}}  \binom {r} {s}  e_r(X)\\
    &=\sum_{r=0}^\eta\sum_{s=0}^r (-1)^{r-\eta} \frac{\binom{n-\eta+s}{s}}{\binom{\eta}{s}}  \binom {r} {s}  e_r(X)\\
    &=\sum_{r=0}^\eta\sum_{s=0}^r (-1)^{r-\eta} \frac{\binom{n-\eta+s}{s}}{\binom{\eta}{r}}  \binom {\eta-s} {\eta-r}  e_r(X)
\end{align}
where we made the change of variables $s=\eta-t$. So we will be done if we can show
\begin{equation}\label{eq:pregen}
    \sum_{s=0}^r \binom{n-\eta+s}{s}  \binom {\eta-s} {\eta-r} = \binom{n+1}{r}\,,
\end{equation}
which one can do via a relatively painless generating function argument. Indeed, the left hand side of Eq.~\eqref{eq:pregen} is exactly the coefficient of $z^r$ in the series
\begin{align}
  \sum_{s\ge0}\binom{n-\eta+s}{s}z^s(1+z)^{\eta-s}
  &=(1+z)^\eta\sum_{s\ge0}\binom{n-\eta+s}{s}
  \left(\frac{z}{1+z}\right)^s  \\
  &=(1+z)^\eta\left(1-\frac{z}{1+z}\right)^{-(n-\eta+1)}\\
  &=(1+z)^{n+1}\\
  &=\sum_{r=0}^{n+1}\binom{n+1}{r}z^r\,,
\end{align}
and (having used the identity $  \sum_{s\ge0}\binom{a+s}{s}x^s=(1-x)^{-a-1}$~\cite{wilf2005generatingfunctionology}) we are done. 
\end{proof}
We now turn directly to the consideration of the variance of our estimators. We define the second moment operator
\begin{equation}\label{eq:gamma}
    \Gamma_R=\int_{u\sim  U(n)}\sum_{Z\in\mcz_\eta}
\widehat\Pi_R(u,Z)^2U_\eta(u)^\dagger\Pi_ZU_\eta(u)\, ;
\end{equation}
as explained in the main text, we have $\Var [\hat{\Pi}_{Z}] \le \|\Gamma_R\|_\infty$, and our aim is then to characterise the largest eigenvalue of $\Gamma_R$. To that end, we have the following lemma: 
\begin{restatable}{lem}{lemgamma}\label{lem:gamma} 
 The spectral decomposition of $\Gamma_R$ reads $\Gamma_R\cong \bigoplus_{s=0}^\eta \mu_s\id_{\mch_s}$, where $\mch_s={\rm span}\,\{|S\rangle:d(S,R)=s\}$ is the \textit{Johnson shell} at distance $s$ from $R$.
\end{restatable}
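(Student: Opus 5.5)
The plan is to show that $\Gamma_R$ commutes with $U_\eta(k)$ for every $k\in K=U(\eta)\times U(n-\eta)$. Then we decompose $\mch_\eta$ into $K$-irreps, identify them with the Johnson shells, show they are pairwise inequivalent, and apply Schur's lemma.

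\textbf{Step 1 ($K$-invariance).} Using $\widehat\Pi_R(u,Z)=F(v_Z\ad u)$ and Lemma~\ref{lem:hlem}, observe that $F(wk)=F(w)$ for $k\in K$. Indeed, $(wk)_{R,R}=w_{R,R}k_1$, where $k=k_1\oplus k_2$, so $X\mapsto k_1^\dagger X k_1$ and the eigenvalues are unchanged. Now substitute $u\mapsto uk^\dagger$ in the Haar integral defining $\Gamma_R$. This gives
\begin{equation}
U_\eta(k)\Gamma_R U_\eta(k)^\dagger=\int_{u}\sum_Z F(v_Z\ad u k^\dagger)^2\,U_\eta(uk^\dagger)^\dagger\Pi_Z U_\eta(uk^\dagger)=\Gamma_R ,
\end{equation}
by right-invariance of the Haar measure together with $F(v_Z\ad uk^\dagger)=F(v_Z\ad u)$.

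\textbf{Step 2 (decomposition under $K$).} Restrict $\Lambda^\eta(\C^n)=\Lambda^\eta(A\oplus B)$, with $A=\C^\eta$ and $B=\C^{n-\eta}$, to $K$:
\begin{equation}
\Lambda^\eta(A\oplus B)\cong\bigoplus_{s=0}^\eta \Lambda^{\eta-s}A\otimes\Lambda^{s}B .
\end{equation}
The summand $\Lambda^{\eta-s}A\otimes\Lambda^sB$ is spanned exactly by the $|S\rangle$ having $\eta-s$ indices in $R$ and $s$ outside, so it is precisely $\mch_s$. Each summand is an irrep of $U(\eta)\times U(n-\eta)$, being an outer tensor product of irreducible exterior powers; this needs $s\le n-\eta$, which holds since $\eta\le n/2$.

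\textbf{Step 3 (inequivalence and conclusion).} The summands are pairwise inequivalent: for instance, the central element $(1,e^{i\theta}\id)\in K$ acts on $\mch_s$ by $e^{is\theta}$, which separates different $s$. By Schur's lemma, any operator commuting with $K$ is block diagonal in this decomposition and scalar on each block, so $\Gamma_R|_{\mch_s}=\mu_s\id_{\mch_s}$. Since $\Gamma_R$ is Hermitian and positive semidefinite, the $\mu_s$ are real and nonnegative.

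The main obstacle is Step 1, specifically confirming that the invariance of $F$ holds under right multiplication by $K$. Lemma~\ref{lem:hlem} supplies this directly, because $e_r(k_1^\dagger Xk_1)=e_r(X)$. The remaining steps are standard representation theory.
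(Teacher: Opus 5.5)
Your proposal is correct and follows essentially the same route as the paper: you establish $K$-invariance of $\Gamma_R$ from right-invariance of the Haar measure together with right-$K$-invariance of the estimator, then apply Schur's lemma with the Johnson shells $\mch_s\cong\Lambda^{\eta-s}(\C^\eta)\otimes\Lambda^{s}(\C^{n-\eta})$ as the $K$-irreps. The differences are minor. You derive the right-$K$-invariance from the eigenvalue formula of Lemma~\ref{lem:hlem}, whereas the paper uses $U_\eta(k)|R\rangle=\det(k_{R,R})|R\rangle$ together with equivariance of $\mcm^{-1}$. You also supply an explicit central-character argument for the pairwise inequivalence of the shells, which the paper only asserts.
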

\begin{proof}
Consider the natural exterior action of the group $K= U(\eta)\times U(n-\eta)$ on $\mch_\eta$. This action commutes with $\Gamma_R$; indeed, for $k\in K$ we may calculate:
\begin{align}
    U_\eta(k)\Gamma_RU_\eta(k)^\dagger&=U_\eta(k) \left[\int_{u\sim  U(n)}\sum_{Z\in\mcz_\eta}\widehat\Pi_R(u,Z)^2U_\eta(u)^\dagger\Pi_ZU_\eta(u)\right]U_\eta(k)^\dagger\\
    &=\int_{u\sim  U(n)}\sum_{Z\in\mcz_\eta}\widehat\Pi_R(u,Z)^2U_\eta(uk\ad)^\dagger\Pi_ZU_\eta(uk\ad)\\
    &=\int_{u\sim  U(n)}\sum_{Z\in\mcz_\eta}\widehat\Pi_R(uk,Z)^2U_\eta(u)^\dagger\Pi_ZU_\eta(u)\\
    &=\int_{u\sim  U(n)}\sum_{Z\in\mcz_\eta}\widehat\Pi_R(u,Z)^2U_\eta(u)^\dagger\Pi_ZU_\eta(u)\\
    &=\Gamma_R
\end{align}
where we have used the invariance  of the Haar measure on $ U(n)$ and that 
\begin{align}
    \widehat\Pi_R(uk,Z)  &= \Tr[\Pi_R \mcm^{-1}(U_\eta(uk)\ad\Pi_ZU_\eta(uk))]\\
     &= \Tr[\Pi_R \mcm^{-1}(U_\eta(k)\ad U_\eta(u)\ad\Pi_ZU_\eta(u)U_\eta(k))]\\
    &= \Tr[U_\eta(k)\Pi_RU_\eta(k\ad) \mcm^{-1}(U_\eta(u)\ad\Pi_ZU_\eta(u))]\\
    &=\Tr[\Pi_R \mcm^{-1}(U_\eta(u)\ad\Pi_ZU_\eta(u))]\\
    &= \widehat\Pi_R(u,Z)\, ,
\end{align}
which follows from the fact that for $k\in K$ we have $U_\eta(k)\ket R = \det (k_{R,R}) \ket R$ (which follows in turn from Eq.~\eqref{eq:minors}). So, by Schur's lemma, $\Gamma_R$ will act non-trivially only on the multiplicity spaces of the decomposition of $\mch_\eta$ into $K$-irreps. As the action of $K$ cannot change the Johnson distance of an occupation basis vector from the reference state (that is, if  $Z$ satisfies $d(Z,R)=s$, then $K\ket Z$ is a linear combination of basis vectors which are also at distance $s$), we at least have the decomposition $\mch_\eta \cong_K \bigoplus_{s=0}^\eta \mch_s$, where $\mch_s={\rm span}\,\{|S\rangle:d(S,R)=s\}$. These subspaces turn out to not decompose further, but are rather (pairwise inequivalent) irreps of $K$. Indeed, $K$ acts on  $\mch_s $ as $\Lambda^{\eta-s}( U(\mbc^\eta))\otimes \Lambda^{s}( U(\mbc^{n-\eta}))$. As each exterior power $\Lambda^r(\mbc^m)$ is an irrep of $ U(m)$~\cite{fulton1991representation}, and as the tensor product of irreps is itself an irrep with respect to the action of the corresponding product group, we see that each $\mch_s$ is irreducible as a $K$-representation. This establishes the claim of the lemma. 
\end{proof}

We find it useful to introduce (writing $G= U(n)$) the functions $p_s: G/K\to \mbr$ given by
\begin{equation}\label{eq:ps}
    p_s(gK)=\frac{1}{N_s}
  \sum_{\substack{S\in\mcz_\eta\\d(S,R)=s}}|\langle S|U_\eta(g)|R\rangle|^2\,,
\end{equation}
i.e. the average  transition probability from $R$ to $S$ under $g$. For example, we have $p_0(gK)=\lvert\langle R|U_\eta(g)|R\rangle\rvert^2=\lvert\det(g_{R,R})\rvert^2$. The fact that $p_s$ is well-defined as a function on $G/K$ follows from the previously seen $U_\eta(k)\ket R = \det(k_{R,R})\ket R$.  We will need:
\begin{restatable}{lem}{lempsinv}\label{lem:psinv} 
 The functions $p_s$ satisfy $  p_s(gK)=p_s(g^{-1}K)$ for all $g\in G$. 
\end{restatable}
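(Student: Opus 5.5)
We will show that $p_s(gK)$ depends only on the eigenvalues of $X=g_{R,R}^\dagger g_{R,R}$, and that these eigenvalues are unchanged under $g\mapsto g^{-1}$.

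\textbf{Step 1: $p_s$ depends only on the spectrum of $X$.} Recall from the proof of Lemma~\ref{lem:hlem} that
\begin{equation}
q_s(g)=\Tr[P_sU_\eta(g)\Pi_RU_\eta(g)^\dagger]=\sum_{\substack{S\in\mcz_\eta\\ d(S,R)=s}}|\langle S|U_\eta(g)|R\rangle|^2 .
\end{equation}
Hence $p_s(gK)=q_s(g)/N_s$. The generating-function identity Eq.~\eqref{eq:gen} expresses $q_s(g)$ as an explicit polynomial in the elementary symmetric polynomials $e_r(X)$, with $X=g_{R,R}^\dagger g_{R,R}$. So $p_s(gK)$ is a function only of the eigenvalues of $X$.

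\textbf{Step 2: the spectrum of $X$ is invariant under $g\mapsto g^{-1}$.} For unitary $g$ we have $g^{-1}=g^\dagger$, so $(g^{-1})_{R,R}=(g_{R,R})^\dagger$. Writing $A=g_{R,R}$, the matrix associated to $g^{-1}$ is therefore $AA^\dagger$ rather than $A^\dagger A$. For a square matrix $A$, $AA^\dagger$ and $A^\dagger A$ have the same characteristic polynomial. Hence $e_r(AA^\dagger)=e_r(A^\dagger A)$ for every $r$. Combined with Step 1, this gives $q_s(g^{-1})=q_s(g)$, and so $p_s(gK)=p_s(g^{-1}K)$.

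\textbf{Step 3: well-definedness.} It remains to check that $p_s(g^{-1}K)$ is a legitimate evaluation, i.e.\ that the right-hand side does not depend on the coset representative. This holds because $p_s$ is already well defined on $G/K$, as noted above.

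There is one point to verify in Step 1. Eq.~\eqref{eq:gen} was derived after reducing to cosine-sine form via the $KaK$ decomposition. We therefore need that $q_s(g)$ and the spectrum of $g_{R,R}^\dagger g_{R,R}$ are both invariant under $g\mapsto k_1gk_2$.
\begin{itemize}
\item For $q_s$, this invariance was shown in the proof of Lemma~\ref{lem:hlem}.
\item For the spectrum, note that $(k_1gk_2)_{R,R}=a\,g_{R,R}\,b$ with $a,b\in U(\eta)$ the $R$-blocks of $k_1,k_2$. Then $(a g_{R,R} b)^\dagger(a g_{R,R} b)=b^\dagger X b$, which is unitarily conjugate to $X$ and so has the same eigenvalues.
\end{itemize}
With this in hand, Eq.~\eqref{eq:gen} holds for all $g$ and the argument is complete.
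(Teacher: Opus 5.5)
Your proof is correct and follows essentially the same route as the paper: both show that $N_s p_s(gK)$ is a coefficient of a determinantal generating polynomial that depends only on the spectrum of $A^\dagger A$ with $A=g_{R,R}$, and both conclude using $(g^{-1})_{R,R}=A^\dagger$ and the fact that $A^\dagger A$ and $AA^\dagger$ are isospectral. The only difference is that the paper re-derives the generating polynomial $\det(A^\dagger A+z(\id_\eta-A^\dagger A))$ directly via Cauchy--Binet and unitarity, which avoids the $KaK$ reduction, whereas you reuse Eq.~\eqref{eq:gen} and correctly supply the bi-invariance check that reduction needs.
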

\begin{proof}
Let us write $g$ in block form relative to $\C^n=R\oplus R^\perp$ as $g=\begin{pmatrix}A&B\\ C&D\end{pmatrix}$, 
where (e.g.) $A$ is the $\eta\times\eta$ block with rows and columns in $R$, and collect the (unnormalised) shell sums into the generating polynomial
\begin{equation}
  Q_g(z)=\sum_{s=0}^\eta N_sp_s(g)z^s =\sum_{|S|=\eta}z^{|S\cap R^\perp|}\lvert\det(g_{S,R})\rvert^2\,.
\end{equation}
We also introduce the diagonal matrix $W_z=\begin{pmatrix}\id_R&0\\0&z\id_{R^\perp}\end{pmatrix}$. Now, recall that for any $\eta \times n$ matrix $X$ and $n \times \eta$ matrix $Y$, the Cauchy--Binet formula states 
\begin{equation}\label{eq:cb}
    \det(XY) = \sum_{S\in\binom{[n]}{\eta}} \det(X_{[\eta],S})\det(Y_{S,[\eta]})\,.
\end{equation}
Defining the $n\times \eta$ matrix $g_{*,R}$ containing the first $\eta$ columns of $g$ and applying Eq.~\eqref{eq:cb} to the product $g_{*,R}^\dagger (W_zg_{*,R})$ then gives
\begin{align}
  Q_g(z)&=\det(g_{*,R}^\dagger W_zg_{*,R})\\
  &=\det(A^\dagger A+zC^\dagger C  )\\
  &=\det(A^\dagger A+z (\id_\eta-A^\dagger A)  )
\end{align}
where we have used the unitarity of $g$.
Now, in the case of $g^{-1}$, the same formula gives
\begin{equation}
  Q_{g^{-1}}(z)=\det(AA^\dagger+z(\id_\eta-AA^\dagger))\,.
\end{equation}
But the matrices $A^\dagger A$ and $AA^\dagger$ have the same eigenvalues, so that $Q_{g^{-1}}(z)=Q_g(z)$, and therefore  $p_s(g^{-1})=p_s(g)$, which is what we wanted to prove.
\end{proof}
Our next goal is to express $\mu_s$ as an integral over $G/K$. With $F$ as in Eq.~\eqref{eq:h}, we find:
\begin{restatable}{lem}{lemmuint}\label{lem:muint} 
 For every $s$,
\begin{equation}
  \mu_s=\binom n\eta\int_{x\sim G/K}F(x)^2p_s(x)\,.
\label{eq:mus}
\end{equation}
\end{restatable}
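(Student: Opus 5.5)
Since Lemma~\ref{lem:gamma} gives $\Gamma_R|_{\mch_s}=\mu_s\id_{\mch_s}$, we will compute $\mu_s$ as a normalized trace, $\mu_s=\Tr[P_s\Gamma_R]/N_s$, and then rewrite this trace as an integral over $G/K$.

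Inserting the definition Eq.~\eqref{eq:gamma} and the identity $\widehat\Pi_R(u,Z)=F(v_Z\ad u)$ gives
\begin{equation*}
\Tr[P_s\Gamma_R]=\int_{u\sim U(n)}\sum_{Z\in\mcz_\eta}F(v_Z\ad u)^2\,\Tr\bigl[P_sU_\eta(u)^\dagger U_\eta(v_Z)\Pi_RU_\eta(v_Z)^\dagger U_\eta(u)\bigr].
\end{equation*}
For each fixed $Z$ we change variables $u\mapsto v_Zu$. Left-invariance of the Haar measure then makes every summand equal to
\begin{equation*}
\int_{w\sim U(n)}F(w)^2\,\Tr\bigl[P_sU_\eta(w)^\dagger\Pi_RU_\eta(w)\bigr].
\end{equation*}
Summing over $Z$ therefore contributes a factor of $|\mcz_\eta|=\binom n\eta$.

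Next we identify the trace. Since $\Tr[P_sU_\eta(w)^\dagger\Pi_RU_\eta(w)]=\sum_{d(S,R)=s}|\langle S|U_\eta(w^{-1})|R\rangle|^2=N_s\,p_s(w^{-1}K)$, Lemma~\ref{lem:psinv} turns this into $N_s\,p_s(wK)$. Dividing by $N_s$ yields
\begin{equation*}
\mu_s=\binom n\eta\int_{w\sim U(n)}F(w)^2p_s(wK).
\end{equation*}

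It remains to check that $F$ descends to a function on $G/K$, i.e.\ that $F(wk)=F(w)$ for $k\in K$. This follows because $U_\eta(k)$ fixes $\Pi_R$ up to conjugation by a phase, $U_\eta(k)\ket R=\det(k_{R,R})\ket R$. Alternatively one can use Lemma~\ref{lem:hlem}: $(wk)_{R,R}=w_{R,R}k_{R,R}$, so $X$ is replaced by $k_{R,R}^\dagger Xk_{R,R}$, whose eigenvalues are the same. The Haar integral over $G$ then pushes forward to the invariant probability measure on $G/K$, which gives Eq.~\eqref{eq:mus}.

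The main subtlety is orientation. The trace naturally produces $p_s$ evaluated at $w^{-1}$, and $F$ is naturally right-$K$-invariant. Lemma~\ref{lem:psinv} is exactly what lets both functions be viewed on the same coset space $G/K$, so care is needed to place it before the pushforward to $G/K$. (In fact $F$ is bi-$K$-invariant, so either convention would ultimately work.)
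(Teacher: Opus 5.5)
Your proposal is correct and follows the paper's proof essentially step by step: you extract $\mu_s=N_s^{-1}\Tr[P_s\Gamma_R]$, left-translate by $v_Z$ so that every $Z$-summand is identical (giving the factor $\binom{n}{\eta}$), recognize the trace as $N_s\,p_s(w^{-1}K)$ and flip it with Lemma~\ref{lem:psinv}, and then descend to $G/K$ using right-$K$-invariance. Your explicit check that $F$ is right-$K$-invariant is a useful detail the paper only asserts; the closing parenthetical is not needed, and bi-$K$-invariance alone would not replace an inversion-symmetry input such as Lemma~\ref{lem:psinv}, but this does not affect your argument.
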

\begin{proof}
To begin, we know from Lemma~\ref{lem:gamma} that the eigenvalue $\mu_s$ may be extracted by contracting $\Gamma_R$ with the projector $P_s = \sum_{S\in\mcz_\eta\,:\,d(S,R)=s}\ketbra{S}$ onto the corresponding eigenspace, namely via $\mu_s=N_s^{-1}\Tr[P_s\Gamma_R]$. Substituting in our expression Eq.~\eqref{eq:gamma} for $\Gamma_R$ then gives
\begin{align}
 \mu_s&=\frac1{N_s}   \int_{u\sim  U(n)}\sum_{Z\in\mcz_\eta}\widehat\Pi_R(u,Z)^2\Tr[P_s U_\eta(u)^\dagger\Pi_ZU_\eta(u)]\\
 &=\frac1{N_s}   \int_{u\sim  U(n)}\sum_{Z\in\mcz_\eta}\widehat\Pi_R(u,Z)^2\sum_{\substack{S\in\mcz_\eta\\d(S,R)=s}}\lvert\braket{Z|  U_\eta(u)}{S}\rvert^2
\end{align}
Now, for every fixed $Z$, choose a permutation unitary $v_Z$ with $v_ZR=Z$, and let $v=v\ad_Z u$. By Eq.~\eqref{eq:h}, $\widehat\Pi_R(u,Z)=F(v)$, and by the invariance of the Haar measure on $ U(n)$ under $u\mapsto v\ad_Z u=v$ we have
\begin{align}
    \mu_s &= \frac1{N_s} \sum_{Z\in\mcz_\eta}  \int_{v\sim  U(n)}F(v)^2\sum_{\substack{S\in\mcz_\eta\\d(S,R)=s}}\lvert\braket{R|  U_\eta(v)}{S}\rvert^2\\
    &=  \sum_{Z\in\mcz_\eta}  \int_{v\sim  U(n)}F(v)^2p_s(v^{-1}K) \\
    &=  \sum_{Z\in\mcz_\eta}  \int_{v\sim  U(n)}F(v)^2p_s(vK) \\
    &=  \binom{n}{\eta} \int_{v\sim  U(n)}F(v)^2p_s(vK) 
\end{align}
where we have used Lemma~\ref{lem:psinv}. Finally,  both $F(v)$ and $p_s(vK)$ are right $K$-invariant, so that the integral over $U(n)$ descends to the integral over $G/K$.  
\end{proof}
So, our task is to bound $\max_{0\le s \le \eta} \mu_s$, with $\mu_s$ given by Eq.~\eqref{eq:mus}. This is  greatly facilitated by the following lemma:
\begin{restatable}{lem}{lemmuzero}\label{lem:muzero} 
 The scalars $\mu_s$ satisfy $ \mu_0 =\max_{0\le s \le \eta} \mu_s$. 
\end{restatable}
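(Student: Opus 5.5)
The plan is to expand everything in terms of zonal spherical functions of the Gelfand pair $(G,K)=(U(n),U(\eta)\times U(n-\eta))$, using the multiplicity-free decomposition $\eh\cong\bigoplus_\lambda V_\lambda$ under the conjugation action of $U(n)$ already invoked in the main text. I will work in $\eh$ with the Hilbert--Schmidt inner product and write $g\cdot A=U_\eta(g)AU_\eta(g)^\dagger$. I would also use the standard fact that each $V_\lambda$ contains at most a one-dimensional space of $K$-fixed vectors. Let $w_\lambda$ be a unit $K$-fixed vector in $V_\lambda$ and $\varphi_\lambda(g)=\langle w_\lambda,g\cdot w_\lambda\rangle$ the associated zonal spherical function. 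Then $|\varphi_\lambda|\le1$, and $\varphi_\lambda(e)=1$.

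\textbf{Step 1: expand $\Pi_R$ and $P_s/N_s$.} Both $\Pi_R$ and $P_s$ are $K$-invariant, since $U_\eta(k)|R\rangle$ is a phase times $|R\rangle$ and $K$ preserves the Johnson shells. Hence $\Pi_R=\sum_\lambda a_\lambda w_\lambda$, and we may fix the phase of $w_\lambda$ so that $a_\lambda\ge0$. The key observation is the averaging identity
\[
P_s/N_s=\int_{k\sim K} k v_S\cdot\Pi_R ,
\]
where $v_S$ is a permutation unitary with $v_SR=S$ and $d(S,R)=s$. This holds because the right-hand side is $K$-invariant, supported on $\mch_s$, and has unit trace, and $\mch_s$ is $K$-irreducible by Lemma~\ref{lem:gamma}, so Schur's lemma forces it to equal $P_s/N_s$. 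Projecting onto $w_\lambda$ and using the $K$-invariance of $w_\lambda$ then gives
\[
b_\lambda(s):=\langle w_\lambda,P_s/N_s\rangle=a_\lambda\,\overline{\varphi_\lambda(v_S)} .
\]
Consequently $p_s(g)=\langle P_s/N_s,g\cdot\Pi_R\rangle=\sum_\lambda a_\lambda \overline{b_\lambda(s)}\varphi_\lambda(g)=\sum_\lambda a_\lambda^2\varphi_\lambda(v_S)\varphi_\lambda(g)$. The special case $s=0$ has $v_S=e$, so $\varphi_\lambda(v_S)=1$ there.

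\textbf{Step 2: positivity of the spherical transform of $F^2$.} Since $\mcm$ is $U(n)$-equivariant and multiplicity-free, it acts on $V_\lambda$ by a scalar $m_\lambda$. This scalar is positive, because $\mcm$ is a positive semidefinite superoperator with respect to Hilbert--Schmidt and invertible by tomographic completeness. Therefore
\[
F(g)=\langle\Pi_R,g\cdot\mcm^{-1}(\Pi_R)\rangle=\sum_\lambda a_\lambda^2 m_\lambda^{-1}\varphi_\lambda(g)
\]
is a nonnegative combination of positive-definite functions, hence a positive-definite function on $G$. It is also real, since both operators are Hermitian. Its square $F^2$ is again positive definite (Schur product theorem) and $K$-bi-invariant. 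By the Bochner/Godement theory for Gelfand pairs, a continuous positive-definite bi-invariant function has nonnegative spherical Fourier coefficients. Hence
\[
F^2=\sum_\lambda c_\lambda\varphi_\lambda,\qquad c_\lambda\ge0,
\]
and by Schur orthogonality $\int_G F^2\,\overline{\varphi_\lambda}=c_\lambda/d_\lambda$ with $d_\lambda=\dim V_\lambda$. If one prefers to avoid the general theorem, $c_\lambda\ge0$ can be obtained directly by writing $F^2(g)=\langle A\otimes A,(g\otimes g)\cdot(B\otimes B)\rangle$, with $A=\Pi_R$ and $B=\mcm^{-1}(\Pi_R)$, and decomposing $V_\lambda\otimes V_{\lambda'}$.

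\textbf{Step 3: compare.} Inserting Steps 1 and 2 into Lemma~\ref{lem:muint}, I would obtain
\[
\mu_s=\binom n\eta\sum_\lambda a_\lambda^2\,\frac{c_\lambda}{d_\lambda}\,\varphi_\lambda(v_S).
\]
To make the conjugations match I would use $\overline{\varphi_\lambda(g)}=\varphi_\lambda(g^{-1})$ and Lemma~\ref{lem:psinv}; since $\mu_s$ is real, taking real parts is harmless. All weights $a_\lambda^2c_\lambda/d_\lambda$ are nonnegative and $\mathrm{Re}\,\varphi_\lambda(v_S)\le1=\varphi_\lambda(e)$, so $\mu_s\le\mu_0$.

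\textbf{Main obstacle.} I expect the hardest step to be Step 2: justifying cleanly that $F^2$ has nonnegative spherical coefficients. This requires checking the positivity of the channel eigenvalues $m_\lambda$ and handling the conjugation conventions. Step 1's averaging identity should be routine given Lemma~\ref{lem:gamma}.
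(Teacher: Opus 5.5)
Your proposal is correct and follows essentially the same route as the paper: nonnegative spherical expansions of $p_0$, of $p_s$ (obtained, as in Lemma~\ref{lem:psdef}, by $K$-averaging the projector onto a coordinate plane at distance $s$ and applying Schur's lemma on the irreducible shell $\mch_s$), and of $F^2$ (positive-definite by the Schur product theorem, then Bochner--Godement), followed by Schur orthogonality and $|\varphi_\lambda(v_S)|\le 1$. The only difference is cosmetic: you work on the operator side in $\mathrm{End}(\mch_\eta)$, so the coefficients of $p_0$ and $F$ are manifestly $a_\lambda^2\ge 0$ and $a_\lambda^2/m_\lambda\ge 0$ (with $m_\lambda>0$ from the positivity and invertibility of $\mcm$), whereas the paper works in $L^2(G/K)$, applies Bochner--Godement to $p_0$, and computes the eigenvalues of $T=\binom n\eta\Phi\Phi^*$ to obtain $F=\binom n\eta^{-1}\sum_\lambda d_\lambda\phi_\lambda$; this agrees with your expansion because $m_\lambda=\binom n\eta a_\lambda^2/d_\lambda$.
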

The proof of Lemma~\ref{lem:muzero} requires something of a diversion through the theory of \textit{zonal spherical functions}, and is relegated to Appendix~\ref{sec:zsf}. The remaining task, then, is to evaluate $\mu_0=\binom{n}{\eta}\int_{x\sim G/K}F(x)^2p_0(x)$. 

Let us introduce, for $u\in U(n)$  Haar random, the random matrix $X=u_{R,R}\ad u_{R,R}$. The matrix $X$  is then distributed according to the complex matrix beta type I distribution, $X\sim {\rm Beta}_\eta^{\rm I}(\eta,n-\eta)$~\cite{kieburg2016singular,mathai2022matrix}; multiplying the corresponding density by $\binom{n}{\eta} p_0(uK)=\binom{n}{\eta} \det X$ produces the distribution $X\sim {\rm Beta}_\eta^{\rm I}(\eta+1,n-\eta)$~\cite{mathai2022matrix}. 
By Lemma~\ref{lem:hlem} and Eq.~\eqref{eq:ps} we then have
\begin{align}
    \mu_0 &= \binom{n}{\eta}\int_{u\sim  U(n)}g_\eta(X)^2\det X\\
    &= \expect_{X\sim {\rm Beta}_\eta^{\rm I}(\eta+1,n-\eta)}[g_\eta(X)^2]\label{eq:muexp}
\end{align}
Now, the eigenvalues of $X$ under the det-biased measure are distributed as~\cite{kieburg2016singular}
\begin{equation}\label{eq:mario}
    p(x_1,\ldots x_\eta) \propto \prod_{i<j}(x_i-x_j)^2\prod_j x_j(1-x_j)^{n-2\eta} {\rm d} x_j = \Delta(x)^2\prod_j x_j(1-x_j)^{n-2\eta} {\rm d} x_j\,,
\end{equation}
where we have introduced the Vandermonde determinant $\Delta$. This is the Jacobi unitary ensemble with weight $w(x)=x(1-x)^{M}$, where we introduce $M=n-2\eta$. To make further progress, we  turn to the theory of orthogonal polynomials. 

We can define an inner product on the functions on $[0,1]$ with respect to $w$, namely $\braket{f}{g}_w=\int_{0}^{1}f(x)g(x)w(x){\rm d}x$. Somewhat remarkably, the \textit{shifted Jacobi polynomials}
\begin{equation}
    J_j(x) = (j+1){}_2F_1(-j, j+M+2;2;x)
\end{equation}
are orthogonal with respect to this inner product; specifically we have
\begin{equation}\label{eq:jnorm}
    \int_{0}^{1}J_j(x)J_k(x)w(x){\rm d}x = \d_{jk} \frac{j+1}{(M+2j+2)(M+j+1)}\,.
\end{equation}
Let us define also, for $d\ge 0$, the $(d+1)\times(d+1)$ moment matrix for the weight $w(x)=x(1-x)^M$, with matrix elements
\begin{equation}
  H^{(d)}_{ab}=\int_0^1x^{a+b}w(x)\,{\rm d} x
  =B(a+b+2,M+1)\,,
\label{eq:mommatrix}
\end{equation}
where $B(\cdot,\cdot)$ is the beta function.  We have:
\begin{restatable}{lem}{lemmulin}\label{lem:mulin} 
Let $\mcl$ be the linear functional   on polynomials of degree at most $\eta$ that acts as $\calL(x^m)=
  \binom{n+1}{\eta-m}/\binom{\eta}{m}
  =:\ell_m$. Then
\begin{equation}\label{eq:mulem}
    \mu_0 = \frac{\det H^{(\eta)}}{\det H^{(\eta-1)}} \sum_{j=0}^\eta\frac{\calL(J_j)^2}{\|J_j\|^2}\,.
\end{equation} 
\end{restatable}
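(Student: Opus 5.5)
The idea is to write $g_\eta$ as the functional $\calL$ applied to the characteristic polynomial of $X$. Then $\mu_0$ becomes $\calL\otimes\calL$ applied to the two-point average $\expect[D(t)D(s)]$ of characteristic polynomials in the Jacobi unitary ensemble. That average is a classical quantity: it equals $h_\eta$ times a Christoffel--Darboux kernel.

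\emph{Step 1 (linearising $g_\eta$).} Let $x_1,\dots,x_\eta$ be the eigenvalues of $X$ and set $D_x(t)=\prod_{i=1}^\eta(x_i-t)=\sum_{r=0}^\eta(-1)^{\eta-r}e_r(x)\,t^{\eta-r}$. Apply $\calL$ in the variable $t$, using $\calL(t^{\eta-r})=\ell_{\eta-r}=\binom{n+1}{r}/\binom{\eta}{\eta-r}=\binom{n+1}{r}/\binom{\eta}{r}$. Comparing with Eq.~\eqref{eq:g} gives exactly $g_\eta(X)=\calL_t[D_x(t)]$. Hence $g_\eta(X)^2=(\calL_t\otimes\calL_s)[D_x(t)D_x(s)]$. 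The product $D_x(t)D_x(s)$ equals $\prod_i(t-x_i)(s-x_i)$ because the two factors $(-1)^\eta$ cancel. By Eq.~\eqref{eq:muexp} and linearity,
\begin{equation*}
\mu_0=(\calL_t\otimes\calL_s)\Big[\expect\textstyle\prod_i(t-x_i)(s-x_i)\Big],
\end{equation*}
where the expectation is over the Jacobi unitary ensemble of Eq.~\eqref{eq:mario} with weight $w(x)=x(1-x)^M$.

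\emph{Step 2 (two-point formula).} I would prove the following, with $\pi_j$ the monic orthogonal polynomials for $w$ and $h_j=\|\pi_j\|_w^2$:
\begin{equation*}
\expect_\eta\Big[\textstyle\prod_i(t-x_i)(s-x_i)\Big]=h_\eta\sum_{j=0}^\eta\frac{\pi_j(t)\pi_j(s)}{h_j}.
\end{equation*}
First, $\prod_i(t-x_i)\Delta(x)=\pm\Delta(x_1,\dots,x_\eta,t)$. By column operations this equals $\det[\pi_{j}(y_k)]_{j,k=0}^{\eta}$, where $y=(x,t)$.

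Next, integrate the product of the two such determinants (one with last variable $t$, the other with $s$) against $\prod_i w(x_i)$. The Andreief/Cauchy--Binet identity, together with orthogonality, reduces the $(\eta+1)\times(\eta+1)$ integrated determinant to a sum over which index $j$ is carried by the unintegrated variables. This gives $\eta!\prod_{k=0}^{\eta}h_k\sum_j\pi_j(t)\pi_j(s)/h_j$. The normalisation is $Z_\eta=\eta!\prod_{k=0}^{\eta-1}h_k$, so the ratio is the claimed formula.

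\emph{Step 3 (identifying constants).} Use the standard Hankel identity $\det H^{(d)}=\prod_{k=0}^{d}h_k$, which follows from Gram--Schmidt on the monomials with the moment matrix of Eq.~\eqref{eq:mommatrix}. It gives $h_\eta=\det H^{(\eta)}/\det H^{(\eta-1)}$. The sum $\sum_j p_j(t)p_j(s)/\|p_j\|^2$ is invariant under rescaling each $p_j$, so we may replace $\pi_j$ by the shifted Jacobi polynomials $J_j$, which are orthogonal for $w$ with $\deg J_j=j$. Finally, apply $\calL\otimes\calL$; it acts factorwise and is well defined since every degree is at most $\eta$. This yields Eq.~\eqref{eq:mulem}.

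The main obstacle is Step 2: getting the Andreief bookkeeping and the signs right, and confirming that only the diagonal terms $j=j'$ survive. Step 1 is a pleasant coincidence that I would double-check coefficient by coefficient against Eq.~\eqref{eq:g}.
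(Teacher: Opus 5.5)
Your proposal is correct. The three steps all check out:
\begin{itemize}
\item $\calL_t\bigl[\prod_i(x_i-t)\bigr]$ reproduces Eq.~\eqref{eq:g} term by term.
\item Expanding the two bordered determinants along the $t$- and $s$-columns and applying Andreief to the $\eta\times\eta$ minors leaves only the diagonal terms. Deleting row $a$ and column $b\neq a$ from the diagonal Gram matrix leaves row $b$ identically zero.
\item The normalisation $\eta!\prod_{k<\eta}h_k$ gives the prefactor $h_\eta$.
\end{itemize}

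Compared with the paper, the real difference is when you diagonalise. The paper proceeds in the monomial basis throughout the integration:
\begin{itemize}
\item it writes $g_\eta\Delta=\sum_m(-1)^m\ell_mA_m$ using alternants;
\item it applies Andreief in the monomial basis, producing minors of the Hankel matrix;
\item it uses Jacobi's cofactor formula to reach $\eta!\det H^{(\eta)}\,\ell^{\mst}(H^{(\eta)})^{-1}\ell$;
\item only then does it change to the basis $J_j$, in which this quadratic form is diagonal.
\end{itemize}

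Your bordered Vandermonde encodes the same thing. Expanding $\Delta(x,t)$ along its last column in monomials and applying $(-1)^\eta\calL_t$ recovers exactly Eq.~\eqref{eq:ga}. The difference is that you pass to the orthogonal basis $\pi_j$ before integrating, so the Gram matrix is diagonal from the start. This removes the cofactor/inverse step and the basis-independence argument. It also makes your intermediate identity the standard Christoffel--Darboux (Br\'ezin--Hikami) formula for the two-point average of characteristic polynomials, which you could simply cite.

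In exchange, the paper's route yields the basis-free form $\ell^{\mst}(H^{(\eta)})^{-1}\ell$, the squared dual norm of $\calL$ on polynomials of degree at most $\eta$. It also postpones the Hankel identity $\det H^{(d)}=\prod_{k\le d}h_k$ to Lemma~\ref{lem:musum}. You need that identity already here, to rewrite $h_\eta$ as $\det H^{(\eta)}/\det H^{(\eta-1)}$.
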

\begin{proof}
    Our  starting point is that from Eqs.~\eqref{eq:muexp} and~\eqref{eq:mario} we have
\begin{equation}\label{eq:muratio}
      \mu_0=\frac{\int_{[0,1]^\eta}g_\eta(x)^2\Delta(x)^2\prod_i x_i(1-x_i)^{n-2\eta} {\rm d} x_i}{\int_{[0,1]^\eta}\Delta(x)^2\prod_i x_i(1-x_i)^{n-2\eta}{\rm d} x_i}\,.
\end{equation}
We will begin by analysing the product $g_\eta(x)\Delta(x)$. To that end, we define the \textit{alternant} $A_m(x)=\det[x_i^a]_{1\le i\le \eta;\,0\le a\neq m \le \eta}$. For example, the Vandermonde determinant is equivalently written $\Delta=A_\eta$, and more generally we have  $  e_{\eta-m}(x)\Delta(x)=A_m(x)$. Indeed, introducing some formal variable $z$ one sees:
\begin{equation}
    \sum_{m=0}^\eta z^me_m(x) \Delta(x) = \left[\prod_{m=1}^\eta (1+z x_m)\right] \Delta(x) = \det[x_i^a+zx_i^{a+1}]_{1\le i\le \eta;\,0\le a < \eta}
\end{equation}
Now, consider expanding out the determinant using  column multilinearity. For a particular term in the expansion, one for each column ``chooses'' either the original column or the shifted column. Having chosen, say, the $j$\textsuperscript{th} column to be shifted, however, to obtain a non-zero contribution to the determinant all subsequent columns must also be chosen to be shifted, lest two columns be proportional. We obtain therefore $\eta+1$ terms, corresponding exactly to the $\eta+1$ choices of ``first shifted column''. This immediately yields $\det[x_i^a+zx_i^{a+1}]_{1\le i\le \eta;\,0\le a < \eta}=\sum_{m=0}^\eta z^mA_{\eta-m}(x)$, so that indeed $  e_{\eta-m}(x)\Delta(x)=A_m(x)$. From Eq.~\eqref{eq:g} we therefore have
\begin{equation}\label{eq:ga}
  g_\eta(x)\Delta(x)=\sum_{r=0}^\eta(-1)^{\eta-r}\frac{\binom{n+1}{r}}{\binom{\eta}{r}}e_r(x)\Delta(x)=\sum_{m=0}^\eta(-1)^{m}\frac{\binom{n+1}{\eta-m}}{\binom{\eta}{m}}A_{m}(x)=\sum_{m=0}^\eta(-1)^{m}\ell_mA_{m}(x)\,,
\end{equation}
where we have defined $\ell_m=\binom{n+1}{\eta-m}/\binom{\eta}{m}$.

Now, we can relate the denominator of Eq.~\eqref{eq:muratio} to the determinant of the moment matrix Eq.~\eqref{eq:mommatrix} by recalling \textit{Andreief’s identity}, which states for sequences of integrable functions $f_j,\,g_j$ that
\begin{equation}
    \int \det[f_j(x_k)]\det[g_j(x_k)]\prod_{j=1}^\eta {\rm d}\mu(x_j) = \eta!\det\int f_j(x)g_k(x){\rm d}\mu(x)
\end{equation}
(which one can choose to interpret as a sort of continuous version of the Cauchy–Binet formula Eq.~\eqref{eq:cb}). Indeed, taking $f_j(x)=g_j(x)=x^j$ and ${\rm d}\mu(x)=x(1-x)^{n-2\eta}{\rm d}x$ we see that the denominator of Eq.~\eqref{eq:muratio} becomes 
\begin{equation}
    \int_{[0,1]^\eta}\Delta(x)^2\prod_i x_i(1-x_i)^{n-2\eta}{\rm d} x_i = \eta! \det \int_{[0,1]} x^{j+k}  x(1-x)^{n-2\eta}{\rm d} x=\eta!\det H^{(\eta-1)}\,,
\end{equation}
where we have recalled Eq.~\eqref{eq:mommatrix}. For the numerator, Eq.~\eqref{eq:ga} similarly yields
\begin{align}
    \int_{[0,1]^\eta}g_\eta(x)^2\Delta(x)^2\prod_i x_i(1-x_i)^{n-2\eta} {\rm d} x_i &=\int_{[0,1]^\eta}\left(\sum_{m=0}^\eta(-1)^{m}\ell_mA_{m}(x)\right)^2\prod_i x_i(1-x_i)^{n-2\eta} {\rm d} x_i\\
&=\sum_{m=0}^\eta\sum_{l=0}^\eta(-1)^{m+l}\ell_m\ell_l\int_{[0,1]^\eta} A_{m}(x)A_{l}(x)\prod_i x_i(1-x_i)^{n-2\eta} {\rm d} x_i\\
&=\eta!\sum_{m=0}^\eta\sum_{l=0}^\eta(-1)^{m+l}\ell_m\ell_l\det H^{(\eta)}_{\widehat{m},\widehat{l}} \\
&=\eta!\sum_{m=0}^\eta\sum_{l=0}^\eta \det\big(H^{(\eta)}\big)\ell_m\ell_l  (H^{(\eta)})^{-1}_{ l,m} \\
&=\eta! \det\big(H^{(\eta)}\big)\ell^\mst  (H^{(\eta)})^{-1} \ell
\end{align}
where we have used Jacobi's cofactor formula for the matrix elements of an inverse matrix, and that the alternants may equivalently be defined as $A_m(x)=\det [\tilde{f}_j(x_k)]$, where 
\begin{equation}
    \tilde{f}_j(x) = \begin{cases}
        x^j&j<m\\
        x^{j+1}&j\ge m\\
    \end{cases}\,.
\end{equation}
To summarise our progress to this point, then, we have in the pursuit of Eq.~\eqref{eq:mulem} established
\begin{equation}
    \mu_0 = \frac{\det H^{(\eta)} }{\det H^{(\eta-1)} } \ell^\mst  (H^{(\eta)})^{-1} \ell\,,
\end{equation}
and would therefore like to further show $\ell^\mst  (H^{(\eta)})^{-1} \ell=\sum_{j=0}^\eta {\calL(J_j)^2}/{\|J_j\|^2}$; let us now see it. Indeed, recall that $H^{(\eta)}$ is the Gram matrix with respect to  the monomial basis $1,x,\ldots,x^\eta$ for the inner product $  \langle f,g\rangle=\int_{x\sim [0,1]}f(x)g(x)w(x) $, and that $\ell^\mst = (\ell_0,\ell_1, \ldots, \ell_\eta) = (\mcl(1),\mcl(x),\ldots, \mcl(x^\eta))$ is the vector corresponding to the linear functional $\mcl$, written with respect to the monomial basis. But the contraction $\ell^\mst  (H^{(\eta)})^{-1} \ell$ is of course independent of which basis of the space of degree-at-most-$\eta$-polynomials-on-$[0,1]$ we used to write down the matrices; using the basis given by the shifted Jacobi polynomials (with respect to which the (inverse) Gram matrix is diagonal) immediately yields $\ell^\mst  (H^{(\eta)})^{-1} \ell=\sum_{j=0}^\eta {\calL(J_j)^2}/{\|J_j\|^2}$, so that we are done.
\end{proof}
\begin{restatable}{lem}{lemjacobi}\label{lem:jacobi} 
The action of $\mcl$ on the shifted Jacobi polynomial $J_j$ satisfies
\begin{equation}
\calL(J_j)=\binom{n+1}{\eta}\frac{N-(\eta-j)_{j+1}/(N+1)_j}{M+j+1}\,,
\end{equation}
where $(a)_{j+1}=a(a+1)\cdots(a+j)$ is the Pochhammer symbol, and we adopt the convention  that  $(0)_{\eta+1}=0$.
\end{restatable}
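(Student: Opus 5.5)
The plan is to compute $\calL(J_j)$ directly: expand $J_j$ in monomials, apply $\calL$ termwise, recognise the result as a terminating hypergeometric series at unit argument, and sum it in closed form with the Chu--Vandermonde identity. All steps are finite-sum manipulations, since $J_j$ has degree $j\le\eta$.

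\textbf{Step 1: simplify the moments $\ell_k$.} Write $\ell_k=\binom{n+1}{\eta-k}/\binom{\eta}{k}$ in factorial form. Since $n+1-\eta+k=N+k$,
\begin{equation*}
\ell_k=\frac{(n+1)!\,k!}{\eta!\,(N+k)!}=\binom{n+1}{\eta}\frac{k!}{(N+1)_k}\,.
\end{equation*}
The common prefactor $\binom{n+1}{\eta}$ then pulls out of everything.

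\textbf{Step 2: apply $\calL$ to the expansion of $J_j$.} From the definition,
\begin{equation*}
J_j(x)=(j+1)\sum_{k=0}^j\frac{(-j)_k(j+M+2)_k}{(2)_k\,k!}x^k\,.
\end{equation*}
Applying $\calL$ termwise, the $k!$ from $\ell_k$ cancels the $k!$ in the denominator, and $(2)_k=(k+1)!$. This gives
\begin{equation*}
\calL(J_j)=\binom{n+1}{\eta}(j+1)\sum_{k=0}^j\frac{(-j)_k(j+M+2)_k}{(k+1)!\,(N+1)_k}\,,
\end{equation*}
which is a terminating ${}_3F_2(-j,\,j+M+2,\,1;\,2,\,N+1;\,1)$.

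\textbf{Step 3: remove the $1/(k+1)$ by an index shift.} I would use the three shift identities
\begin{equation*}
(a)_k=\frac{(a-1)_{k+1}}{a-1}\,,\qquad (N+1)_k=\frac{(N)_{k+1}}{N}\,,\qquad (k+1)!=(1)_{k+1}\,,
\end{equation*}
applied with $a=-j$ and $a=j+M+2$. Setting $k'=k+1$, the sum becomes
\begin{equation*}
\frac{N}{(-j-1)(j+M+1)}\sum_{k'\ge1}\frac{(-j-1)_{k'}(j+M+1)_{k'}}{(N)_{k'}\,k'!}
=\frac{N}{-(j+1)(j+M+1)}\Bigl[{}_2F_1(-j-1,\,j+M+1;\,N;\,1)-1\Bigr]\,.
\end{equation*}

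\textbf{Step 4: evaluate the ${}_2F_1$ by Chu--Vandermonde.} The identity ${}_2F_1(-m,b;c;1)=(c-b)_m/(c)_m$, with $m=j+1$, $b=j+M+1$, $c=N$, gives $c-b=\eta-j$. Substituting, the factor $(j+1)$ cancels and
\begin{equation*}
\calL(J_j)=\frac{\binom{n+1}{\eta}}{M+j+1}\left(N-\frac{N(\eta-j)_{j+1}}{(N)_{j+1}}\right)\,.
\end{equation*}
Finally $N/(N)_{j+1}=1/(N+1)_j$, which yields the claimed formula. When $j=\eta$ we have $(\eta-j)_{j+1}=(0)_{\eta+1}=0$, consistent with the stated convention.

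\textbf{Main obstacle.} The only delicate point is the index shift in Step 3. It must be checked that the $k'=0$ term of the shifted series is exactly the $1$ being subtracted. It must also be checked that no division by zero occurs: $a-1=-j-1\neq0$ and $j+M+1>0$. Everything else is routine Pochhammer bookkeeping, which I would verify on small cases such as $j=0$, where both sides equal $\binom{n+1}{\eta}$.
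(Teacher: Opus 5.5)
Your proof is correct, and it takes a genuinely different route from the paper at the key summation step. You and the paper reach the same sum $\binom{n+1}{\eta}(j+1)\sum_{k}\frac{(-j)_k(j+M+2)_k}{(k+1)!\,(N+1)_k}$. From there the paper writes $\frac{1}{k+1}=\int_0^1 z^k\,{\rm d}z$ to get $\int_0^1{}_2F_1(-j,j+M+2;N+1;z)\,{\rm d}z$. It then inserts Euler's integral representation, does the $z$-integral, and evaluates the resulting beta functions. Euler's representation requires $N+1>j+M+2$, i.e.\ $j<\eta$, so the paper must treat $j=\eta$ separately, where the ${}_2F_1$ collapses to $(1-z)^\eta$. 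You instead absorb the $1/(k+1)$ algebraically by the index shift. This turns the terminating ${}_3F_2(-j,j+M+2,1;2,N+1;1)$ into $-\frac{N}{(j+1)(j+M+1)}\bigl[{}_2F_1(-j-1,j+M+1;N;1)-1\bigr]$, and you finish with Chu--Vandermonde. Your route is purely finite and needs no convergence or positivity hypotheses. It also covers $j=\eta$ uniformly: there $c=b$, so $(0)_{\eta+1}=0$ comes out automatically rather than as a convention. The paper's argument is in effect an integral proof of the same Vandermonde-type evaluation, paid for with the case split.
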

\begin{proof}
The standard series expansion of the hypergeometric function ${}_2F_1$ gives
\begin{equation}\label{eq:jseries}
    J_j(x) = (j+1){}_2F_1(-j, j+M+2;2;x) = (j+1)\sum_{m=0}^j (-1)^m \binom{j}{m}\frac{(j+M+2)_m}{(2)_m}x^m \,,
\end{equation}
so that
\begin{align}
    \calL(J_j) &= \binom{n+1}{\eta}(j+1)\sum_{m=0}^j \frac{(-j)_m(j+M+2)_m}{(2)_m(N+1)_m}\\
     &= \binom{n+1}{\eta}(j+1)\sum_{m=0}^j \frac{(-j)_m(j+M+2)_m}{m!(N+1)_m}\frac{1}{m+1}\\
     &= \binom{n+1}{\eta}(j+1)\sum_{m=0}^j \frac{(-j)_m(j+M+2)_m}{m!(N+1)_m}\int_{z\in [0,1]}z^m\\
     &= \binom{n+1}{\eta}(j+1)\int_{z\in [0,1]}{}_2F_1(-j,j+M+2;N+1;z)\label{eq:fint}
\end{align}
Now, when $N+1>j+M+2$ (equivalently, $\eta>j$) we can insert Euler's beta integral form of ${}_2F_1$, giving
\begin{align}
    \calL(J_j) &= \binom{n+1}{\eta}(j+1)\int_{z\in [0,1]}{}_2F_1(-j,j+M+2;N+1;z)\\
    &= \binom{n+1}{\eta}(j+1)B(j+M+2, N-j-M-1)^{-1}\int_{z\in [0,1]} \int_{s\in [0,1]}  s^{j+M+1}(1-s)^{N-j-M-2}(1-zs)^j\\
    &= \binom{n+1}{\eta}B(j+M+2, N-j-M-1)^{-1}\int_{s\in [0,1]}  s^{j+M}(1-s)^{N-j-M-2}(1-(1-s)^{j+1})\\
    &= \binom{n+1}{\eta}B(j+M+2, N-j-M-1)^{-1}\left[\int_{s\in [0,1]}  s^{j+M}(1-s)^{N-j-M-2}-\int_{s\in [0,1]}  s^{j+M}(1-s)^{N-M-1} \right]\\
    &= \binom{n+1}{\eta}B(j+M+2, N-j-M-1)^{-1}\left[B(j+M+1, N-j-M-1)-B(j+M+1, N-M) \right]\\
    &= \binom{n+1}{\eta}\frac{\Gamma(N+1)}{\Gamma(j+M+2)\Gamma(N-j-M-1)}\left[\frac{\Gamma(j+M+1)\Gamma(N-j-M-1)}{\Gamma(N)}-\frac{\Gamma(j+M+1)\Gamma(N-M)}{\Gamma(N+j+1)}  \right]\\
    &= \binom{n+1}{\eta}\left(\frac{N}{j+M+1}-\frac{(N-j-M-1)_{j+1}}{(j+M+1)(N+1)_j}\right) 
\end{align}
Recalling the definitions $N=n-\eta+1=M+\eta+1$ concludes the proof when $\eta>j$. When $\eta=j$ we have $j+M+2=N+1$, so that the above integral form of the hypergeometric function is not valid. It is however easy to deal with this special case directly; indeed we have 
\begin{equation}
    {}_2F_1(-\eta,N+1;N+1;z) =\sum_{m=0}^\eta (-1)^m\binom{\eta}{m}\frac{(N+1)_m}{(N+1)_m}z^m =\sum_{m=0}^\eta \binom{\eta}{m} (-z)^m = (1-z)^\eta\,,
\end{equation}
so that from Eq.~\eqref{eq:fint} we have
\begin{align}
    \mcl(J_\eta)&=\binom{n+1}{\eta}(\eta+1)\int_{z\in [0,1]}(1-z)^\eta=\binom{n+1}{\eta}\,,
\end{align}
as in the claim of the lemma.
\end{proof}

At this point we are ready to give an expression for $\mu_0$ as a simple finite sum in one variable (recall that we let $M=n-2\eta,\ N=n-\eta+1=M+\eta+1$):
\begin{restatable}{lem}{lemmusum}\label{lem:musum} 
We have
\begin{equation}
  \mu_0=\frac{\eta+1}{(n+2)N}\sum_{j=0}^\eta\frac{M+2j+2}{(j+1)(M+j+1)}\left(N-\frac{(\eta-j)_{j+1}}{(N+1)_j}\right)^2\,.
\label{eq:singlesum}
\end{equation}
\end{restatable}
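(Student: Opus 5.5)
Our plan is to combine Lemma~\ref{lem:mulin}, Lemma~\ref{lem:jacobi} and the norm formula Eq.~\eqref{eq:jnorm}. The only missing ingredient is an explicit evaluation of the prefactor $\det H^{(\eta)}/\det H^{(\eta-1)}$.

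Substituting Lemma~\ref{lem:jacobi} and Eq.~\eqref{eq:jnorm} into Eq.~\eqref{eq:mulem} gives, term by term,
\begin{equation*}
\frac{\calL(J_j)^2}{\|J_j\|^2}=\binom{n+1}{\eta}^2\frac{(M+2j+2)(M+j+1)}{j+1}\cdot\frac{\left(N-\frac{(\eta-j)_{j+1}}{(N+1)_j}\right)^2}{(M+j+1)^2}=\binom{n+1}{\eta}^2\frac{M+2j+2}{(j+1)(M+j+1)}\left(N-\frac{(\eta-j)_{j+1}}{(N+1)_j}\right)^2 .
\end{equation*}
This already has the summand structure of Eq.~\eqref{eq:singlesum}. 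It therefore remains to show that
\begin{equation*}
\frac{\det H^{(\eta)}}{\det H^{(\eta-1)}}\binom{n+1}{\eta}^2=\frac{\eta+1}{(n+2)N}.
\end{equation*}

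To evaluate the ratio, we use the standard fact that $\det H^{(d)}/\det H^{(d-1)}$ is the squared norm of the \emph{monic} orthogonal polynomial of degree $d$. This follows from Gram--Schmidt on the monomials, i.e.\ from the $LDL^{\mst}$ factorisation of the Gram matrix. The monic polynomial of degree $\eta$ is $J_\eta/c_\eta$, where $c_\eta$ is the leading coefficient of $J_\eta$. From the series Eq.~\eqref{eq:jseries},
\begin{equation*}
c_\eta=(-1)^\eta(\eta+1)\frac{(\eta+M+2)_\eta}{(2)_\eta}=(-1)^\eta\frac{(\eta+M+2)_\eta}{\eta!}.
\end{equation*}
Combining this with Eq.~\eqref{eq:jnorm} at $j=\eta$, we obtain
\begin{equation*}
\frac{\det H^{(\eta)}}{\det H^{(\eta-1)}}=\frac{\|J_\eta\|^2}{c_\eta^2}=\frac{\eta+1}{(M+2\eta+2)(M+\eta+1)}\cdot\frac{(\eta!)^2}{\big((\eta+M+2)_\eta\big)^2}.
\end{equation*}

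Next we rewrite everything in terms of $n$. Using $M+2\eta+2=n+2$ and $M+\eta+1=N=n-\eta+1$, the Pochhammer symbol becomes
\begin{equation*}
(\eta+M+2)_\eta=(N+1)_\eta=\frac{(n+1)!}{(n-\eta+1)!}=\eta!\binom{n+1}{\eta}.
\end{equation*}
Hence
\begin{equation*}
\frac{(\eta!)^2}{\big((N+1)_\eta\big)^2}=\binom{n+1}{\eta}^{-2},
\end{equation*}
which exactly cancels the $\binom{n+1}{\eta}^2$ coming from $\calL(J_j)^2$. What survives is the prefactor $(\eta+1)/((n+2)N)$, as claimed.

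The main obstacle is bookkeeping rather than conceptual. We must verify the determinant-ratio identity with the correct normalisation, and check that Eq.~\eqref{eq:jnorm} is consistent with that normalisation. As a sanity check, we would test the case $\eta=0$: there $H^{(-1)}$ is taken to have determinant $1$, and $\det H^{(0)}=B(2,M+1)=1/((n+1)(n+2))$. This matches $(\eta+1)/((n+2)N)$ with $N=n+1$, confirming the conventions.
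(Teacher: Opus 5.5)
Your proposal is correct and follows the paper's proof essentially step for step. You identify $\det H^{(\eta)}/\det H^{(\eta-1)}$ as $\|J_\eta\|^2/\kappa_\eta^2$, with the leading coefficient read off from Eq.~\eqref{eq:jseries}, then substitute Lemma~\ref{lem:jacobi} and Eq.~\eqref{eq:jnorm}, and finally cancel $\binom{n+1}{\eta}^2$ against $(\eta!)^2/\big((N+1)_\eta\big)^2$.
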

\begin{proof}
The linear algebraic fact that we need here is that the ratio of the determinant of the $\eta$\textsuperscript{th} Gram matrix to that of the $(\eta-1)$\textsuperscript{th} is exactly given by the squared norm of the corresponding monic orthogonal polynomial of degree $\eta$~\cite{lalley2013orthogonal}. Our shifted Jacobi polynomials are not monic; indeed Eq.~\eqref{eq:jseries} shows that  the leading coefficient is $\kappa_j=(-1)^j(j+M+2)_j/j!$.
Eq.~\eqref{eq:jnorm} and Lemmas~\ref{lem:mulin} and~\ref{lem:jacobi} then combine to yield
\begin{align}
    \mu_0 &=\frac{ \|J_\eta\|^2}{\kappa_\eta^2} \sum_{j=0}^\eta\frac{\calL(J_j)^2}{\|J_j\|^2}\\
    &=\frac{(\eta+1)(\eta!)^2}{(M+2\eta+2)(M+\eta+1)(\eta+M+2)_\eta^2} \sum_{j=0}^\eta \binom{n+1}{\eta}^2\left(\frac{N-(\eta-j)_{j+1}/(N+1)_j}{M+j+1}\right)^2\frac{(M+2j+2)(M+j+1)}{j+1} \\
    &=\frac{(\eta+1)(\eta!)^2}{(n+2)(n-\eta+1)(n-\eta+2)_\eta^2}\binom{n+1}{\eta}^2 \sum_{j=0}^\eta \frac{M+2j+2}{(j+1)(M+j+1)}\left(N-\frac{(\eta-j)_{j+1}}{(N+1)_j}\right)^2\\
    &=\frac{(\eta+1)((n+1)!)^2}{(n+2)N((n+1-\eta)!^2)(n-\eta+2)_\eta^2} \sum_{j=0}^\eta \frac{M+2j+2}{(j+1)(M+j+1)}\left(N-\frac{(\eta-j)_{j+1}}{(N+1)_j}\right)^2\\
    &=\frac{\eta+1}{(n+2)N} \sum_{j=0}^\eta \frac{M+2j+2}{(j+1)(M+j+1)}\left(N-\frac{(\eta-j)_{j+1}}{(N+1)_j}\right)^2
\end{align}
where we have used $M=n-2\eta$ and $N=n-\eta+1$.
\end{proof}
\noindent
We are now ready to prove Theorem~\ref{thm:1} (restated for convenience):
\thmone*
\begin{proof}
As established in the main text, 
$\Var [\widehat\Pi_\phi] \le \|\Gamma_R\|_\infty$, with $\Gamma_R$ as defined in Eq.~\eqref{eq:gammar}. The spectral norm of (the positive semidefinite) $\Gamma_R$ is given by its largest eigenvalue, which by Lemmas~\ref{lem:muzero} and~\ref{lem:musum} is given by 
\begin{equation}\label{eq:muthm1}
\mu_0=\frac{\eta+1}{(n+2)N}
  \sum_{j=0}^\eta
  \frac{M+2j+2}{(j+1)(M+j+1)}
  \left(N-\frac{(\eta-j)_{j+1}}{(N+1)_j}\right)^2\,,
\end{equation}
where $M=n-2\eta $ and $ N=n-\eta+1$. 
We now turn to bounding the various terms in Eq.~\eqref{eq:muthm1}, which can be done in an elementary fashion. Let us define $r_j= (\eta-j)_{j+1}/(N+1)_j$. We then calculate
\begin{equation}
r_j=\frac{(\eta-j)(\eta-j+1)\cdots\eta}{(N+1)(N+2)\cdots(N+j)} =(j+1)\frac{\binom\eta{j+1}}{\binom{N+j}{j}} =\eta\frac{\binom{\eta-1}{j}}{\binom{N+j}{j}}\le \eta< N\,,
\end{equation}
where we have used that $\eta\le n/2$ implies $N>\eta$. It follows that $(N-r_j)^2\le N^2$, the substitution of which into Eq.~\eqref{eq:muthm1} yields
\begin{equation}
    \mu_0\le (\eta+1) 
  \sum_{j=0}^\eta \frac{1}{j+1}
  \frac{N(M+2j+2)}{(n+2)(M+j+1)}\,.
\end{equation}
Finally, inserting $N=M+\eta+1$ yields $(n+2)(M+j+1)-N(M+2j+2) = M(\eta-j)\ge 0$, whence $N(M+2j+2)/((n+2)(M+j+1))\le 1$, so that 
\begin{equation}
    \mu_0\le (\eta+1) 
  \sum_{j=0}^\eta \frac{1}{j+1} = (\eta+1)H_{\eta+1}\,,
\end{equation}
establishing the theorem. 
\end{proof}
\noindent
We now turn to the (significantly less involved) proof of Theorem~\ref{thm:tight}:
\thmtight*
\begin{proof}
We without loss of generality pick a basis of $\mbc^n$ so that $\Pi_\phi = \ketbra{1,2,\ldots \eta} = \Pi_R$. Now, 
the assumption that the input state $\rho$ \textit{equals} the Slater determinant we are trying to  estimate simplifies things enormously. First, from $\Var [\widehat\Pi_R] = \expect [\widehat\Pi_R^2] - \expect[\widehat\Pi_R]^2$, we have $\Var [\widehat\Pi_R]=\expect [\widehat\Pi_R^2] -1$; second, from the previously seen fact that $  \Gamma_R|_{\mch_s}=\mu_s \id_{\mch_s}$, where $\mch_s={\rm span}\,\{|S\rangle:d(S,R)=s\}$ is the Johnson shell at distance $s$ from $R$, we have $ \expect [\widehat\Pi_R^2] = \Tr[\Gamma_R \rho]= \Tr[\Gamma_R \ketbra{R}]=\mu_0$. Indeed, $R$ is at a Johnson distance of zero from $R$. Our task is therefore just to  asymptotically analyse $\mu_0$ under the assumption $n=c\eta$. Starting from the expression Eq.~\eqref{eq:singlesum}, and defining for convenience $M=n-2\eta,\ N=n-\eta+1, p_j = (\eta-j)_{j+1}/ (N(N+1)_j)$ and $b_j=N(M+2j+2)/((n+2)(M+j+1))$, we have 
\begin{align}
  \frac{\mu_0}{\eta+1}&=\frac{1}{(n+2)N}\sum_{j=0}^\eta\frac{M+2j+2}{(j+1)(M+j+1)}\left(N-\frac{(\eta-j)_{j+1}}{(N+1)_j}\right)^2\\
  &=\sum_{j=0}^\eta\frac{b_j}{j+1 }\left(1-p_j \right)^2\,.
\end{align}
One obtains the upper bound of Theorem~\ref{thm:1} by the bound $b_j (1-p_j)^2\le 1$. 
The technical observation we need is that most of the mass of the harmonic sum lives below $\delta\eta$ for any fixed $\d>0$; indeed from $\log(\d\eta) = \log\eta + \log\d$ we have $\sum_{j=0}^{\delta\eta}1/(j+1)=\log\eta + \mco_\d(1)$, where for some error term $\mco_\d(1)$ denotes a ``constant'' which may depend on $\d$. Now, with $n=c\eta$ we have $M=(c-2)\eta,\ N\sim (c-1)\eta$, so that 
\begin{equation}
    b_j = \frac{(c-1+1/\eta)((c-2)+2j/\eta+2/\eta)}{(c+2/\eta)((c-2)+j/\eta+1/\eta)} = \frac{c-1}{c} + \mco_c(\d) + o(1)\,,
\end{equation}
where we assume $j/\eta\le \d$, and the $o(1)$ term signifies an error that will vanish as $\eta\to\infty$. Meanwhile, the numerator of $p_j$ contains $j+1$ factors (each at most $\eta$) while its denominator contains $j+1$ factors (each at least $N$). As $\eta/N\to 1/(c-1)<1$ there is evidently some $q<1$ such that $p_j\le q^{j+1}$, from which we conclude that $(1-p_j)^2=1+\mco(p_j)$ and that 
\begin{equation}
    \sum_{j=0}^\eta \frac{p_j}{j+1}\le\sum_{j=0}^\eta \frac{q^{j+1}}{j+1}\le\sum_{j=0}^\infty \frac{q^{j+1}}{j+1} = \mco_c(1)\,,
\end{equation}
as $q$ depended on $c$. Putting it all together we have
\begin{equation}
    \frac{\mu_0}{\eta+1}= \sum_{j=0}^{\d\eta}\frac{b_j}{j+1 }\left(1-p_j \right)^2+  \sum_{j=\d\eta}^{\eta}\frac{b_j}{j+1 }\left(1-p_j \right)^2 = \left(\frac{c-1}{c} + \mco_c(\d)  + o(1)\right)H_{\eta+1} +\mco_c(1)+  \mco_\d(1)\,.
\end{equation}
Dividing by $H_{\eta+1}$, taking the limit $\eta\to\infty$, and then taking the limit $\d\to 0$ establishes the claim of the theorem. 
    
\end{proof}

\noindent
We now turn to the proof of Theorem~\ref{thm:comp}:
\thmcomp*
\begin{proof}
Let us write the target state $\ket\phi = \phi_1\wedge\cdots \wedge \phi_\eta$ as an $n\times \eta$ matrix, $\Phi = [\phi_1,\ldots \phi_\eta]$, extended to a unitary $W = [\Phi, \Phi_\perp]\in U(n)$. The introduction of $W$ is simply for algebraic convenience in the proof; we will never have to actually calculate $\Phi_\perp$, which will in fact disappear from the final answer. Now, with $v_Z$ such that $v_ZR=Z$,
\begin{align}
\widehat\Pi_\phi(u,Z)&=\Tr\left[\Pi_\phi U_\eta(u)^\dagger \mcm^{-1}(\Pi_Z)U_\eta(u)\right]\\
  &=\Tr\left[U_\eta(W)\Pi_R U_\eta(W)^\dagger U_\eta(u)^\dagger \mcm^{-1}(U_\eta(v_Z)\Pi_RU_\eta(v_Z\ad))U_\eta(u)\right]\\
  &=\Tr\left[\Pi_R U_\eta(v_Z\ad u W)^\dagger \mcm^{-1}(\Pi_R)U_\eta(v_Z\ad u W)\right]\\
  &=g_{\eta}\left([(v_Z\ad u W)_{R,R}]\ad [(v_Z\ad u W)_{R,R}] \right)\,,
\end{align}
where we have used Lemma~\ref{lem:hlem}. The defining properties of $v_Z$ and $W$ then give:
\begin{equation}
g_{\eta}\left([(v_Z\ad u W)_{R,R}]\ad [(v_Z\ad u W)_{R,R}] \right)
  =g_{\eta}\left([(  u W)_{Z,R}]\ad [( u W)_{Z,R}] \right)
  =g_{\eta}\left([  u_{Z,:} W_{:,R}]\ad [ u_{Z,:} W_{:,R}] \right)
  =g_{\eta}\left([  u_{Z,:} \Phi_{:,R}]\ad [ u_{Z,:} \Phi_{:,R}] \right)\,,
\end{equation}
so that, with $A=u_{Z,:} \Phi_{:,R}\in\mbc^{\eta\times\eta}$, we have $\widehat\Pi_\phi(u,Z)=g_\eta(A\ad A)$. Calculating the matrix product which defines $A$ takes (using, say, naive matrix multiplication algorithms) time $\mco(n\eta^2)$; at the price of rather large constants one could improve this slightly using fast matrix multiplication algorithms. Calculating the product $A\ad A$ is $\mco(\eta^3)$; as $\eta\le n/2$ this is no worse than $\mco(n\eta^2)$. Now, what is the cost of evaluating $g_\eta(A\ad A)$?  By Lemma~\ref{lem:hlem} we can evaluate $g_\eta(A\ad A)$ performing a weighted sum (with easily computable weights) over the first $\eta$ (well, $\eta+1$ if we count the trivial  $e_0$) elementary symmetric polynomials in the eigenvalues of $A\ad A$ (themselves acquirable in time $\mco(\eta^3)$). The first $\eta$  elementary symmetric polynomials can be calculated by the following simple recursive algorithm: First, initialise $e_0=1$, $e_r=0$ for all $r>0$. Then, for each eigenvalue $\lm_i$, update to polynomials according to $e_r\leftarrow e_r + \lm_i e_{r-1}$, where we update ``downwards'',  i.e. in the order $r=i,i-1,\ldots 1$. This takes time $\mco(\eta^2)$, and so does not affect the  asymptotic runtime of the full  algorithm, which remains at $\mco(n\eta^2)$.
\end{proof}

\section{Zonal spherical functions and shell eigenvalues}\label{sec:zsf}
In this appendix we prove Lemma~\ref{lem:muzero} by employing some ideas from the theory of \textit{zonal spherical functions}, the pertinent details of which we begin by briefly reviewing. Our starting point is the fact that $(G,K)=( U(n), U(\eta)\times  U(n-\eta))$ is a \textit{compact Gelfand pair}. For our purposes the relevant consequences of this begin with the fact that the Hilbert space $L^2(G/K)$ decomposes multiplicity-freely under the (left) action of the group, 
\begin{equation}\label{eq:l2}
L^2(G/K)= \bigoplus_{\lambda\in\Omega}\mch_\lambda\,,    
\end{equation}
where each irrep $\mch_\lm$ possesses a unique one-dimensional subspace of $K$-fixed vectors. From each such line we choose a unit vector $v_\lambda$.  The corresponding zonal spherical function is then defined to be
\begin{equation}
  \phi_\lambda(gK)=\braket{v_\lambda}{\pi_\lambda(g)v_\lambda}\,,
\end{equation}
where $\pi_\lambda$ is the representation of $G$ on $\mch_\lambda$; for our choice of $G$ and $K$, the spherical functions can be taken to be real-valued~\cite{remling2015convolution}. The Schur-orthogonality relation implies that the spherical functions are orthogonal in the sense that
\begin{equation}\label{eq:ortho}
\int_{x\sim G/K} \phi_\lambda(g^{-1}x)\phi_\mu(x) =\frac{\delta_{\lambda\mu}}{d_\lambda}\phi_\lambda(g)\,.
\end{equation}

Now, let us call a  $K$-bi-invariant function $f$ on $G$ \textit{positive-semidefinite} if, for every finite list $g_1,\ldots,g_m\in G$, the matrix
$\bigl[f(g_i^{-1}g_j)\bigr]_{i,j=1}^m$
is positive-semidefinite. For example, let us see that the spherical functions (lifted in the obvious way to functions on $G$) are  positive-semidefinite. Firstly, they are $K$-bi-invariant: 
\begin{align}
    \phi_\lm(k_1gk_2) &= \braket{v_\lambda}{\pi_\lambda(k_1gk_2)v_\lambda}\\
    &= \braket{v_\lambda}{\pi_\lambda(k_1)\pi_\lm(g)\pi_\lm(k_2)v_\lambda}\\
    &= \braket{\pi_\lambda(k_1^{-1})v_\lambda}{\pi_\lm(g)\pi_\lm(k_2)v_\lambda}\\
    &= \braket{v_\lambda}{\pi_\lm(g) v_\lambda}\\
    &= \phi_\lm(g)
\end{align}
Secondly, for arbitrary $\{c_i\}_i\subset\mbc$ and $\{g_i\}_i\subset G$, we have
\begin{equation}
    \sum_{i,j}\overline{c_i}c_j\phi_\lm(g_i^{-1}g_j) = \sum_{i,j}\overline{c_i}c_j\braket{v_\lambda}{\pi_\lm(g_i^{-1}g_j) v_\lambda} = \Big\langle \sum_{i}c_i\pi_\lm(g_i)v_\lambda \Big\vert \sum_{j}c_j\pi_\lm(g_j) v_\lambda \Big\rangle \ge 0\,.
\end{equation}
More generally, any $f=\sum_\lm a_\lm \phi_\lm$ for non-negative coefficients $a_\lm$ is positive-semidefinite; conversely, the Bochner–Godement theorem states that a positive-semidefinite $K$-bi-invariant function has such an expansion, with nonnegative spherical coefficients $a_\lm$. Our next task is to prove:
\begin{restatable}{lem}{lemposdef}\label{lem:posdef}
The functions $p_0$ and $F^2$ are positive-semidefinite.
\end{restatable}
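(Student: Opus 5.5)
Both claims will be proved by writing each function as a matrix coefficient $\langle v|\pi(g)v\rangle$ of a unitary representation with a $K$-invariant vector, or as a product of such. Positive-semidefiniteness of $[f(g_i^{-1}g_j)]$ then follows from the Gram-matrix computation already used above for $\phi_\lambda$. Products of positive-semidefinite kernels are positive-semidefinite by the Schur product theorem, so it suffices to exhibit $p_0$ and $F$ (or $F^2$ directly) in this form. $K$-bi-invariance is already known for $p_0$ from $U_\eta(k)|R\rangle=\det(k_{R,R})|R\rangle$, and for $F$ from the argument in the proof of Lemma~\ref{lem:hlem}.

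For $p_0$, write $p_0(g)=|\langle R|U_\eta(g)|R\rangle|^2=\langle R\otimes\bar R|\,(U_\eta(g)\otimes\overline{U_\eta(g)})\,|R\otimes\bar R\rangle$. The vector $v=|R\rangle\otimes\overline{|R\rangle}$ is $K$-fixed under $U_\eta\otimes\overline{U_\eta}$, because the phases $\det(k_{R,R})$ and $\overline{\det(k_{R,R})}$ cancel. Then
\[
\sum_{i,j}\bar c_ic_jp_0(g_i^{-1}g_j)=\Big\|\sum_j c_j(U_\eta\otimes\overline{U_\eta})(g_j)v\Big\|^2\ge0 .
\]
Equivalently, this is the Schur product of the kernel $\langle R|U_\eta(g_i^{-1}g_j)|R\rangle$ with its complex conjugate.

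For $F^2$, note that $F$ itself is a matrix coefficient of the conjugation representation $\mathrm{Ad}(g)A=U_\eta(g)AU_\eta(g)^\dagger$ on $\mathrm{End}(\mch_\eta)$ with the Hilbert–Schmidt inner product: $F(w)=\langle \mathrm{Ad}(w)\Pi_R,\ \mcm^{-1}(\Pi_R)\rangle_{HS}$. This is not symmetric, so I would rewrite it. Since $\mcm$ is $\mathrm{Ad}$-equivariant and self-adjoint (and positive) with respect to the Hilbert–Schmidt inner product, $\mcm^{-1}$ is positive definite and commutes with $\mathrm{Ad}$. Hence
\[
F(w)=\big\langle \mcm^{-1/2}\Pi_R,\ \mathrm{Ad}(w)\,\mcm^{-1/2}\Pi_R\big\rangle_{HS},
\]
where I use that $F$ is real and that $\mathrm{Ad}(w)$ commutes with $\mcm^{-1/2}$. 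The vector $v=\mcm^{-1/2}\Pi_R$ is $K$-fixed because $\mathrm{Ad}(k)\Pi_R=\Pi_R$ and $\mcm^{-1/2}$ is equivariant. So $F$ is positive-semidefinite, and $F^2$ is the entrywise (Schur) product of $[F(g_i^{-1}g_j)]$ with itself, hence positive-semidefinite. One could also use $v\otimes v$ in $\mathrm{Ad}\otimes\mathrm{Ad}$ directly.

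The main obstacle is justifying that $\mcm$ is Hilbert–Schmidt self-adjoint and positive, so that $\mcm^{-1/2}$ exists and is equivariant. For self-adjointness and positivity, I would write
\[
\langle A,\mcm(B)\rangle_{HS}=\mathbb{E}_u\sum_Z \overline{\Tr[A\,U_\eta(u)^\dagger\Pi_ZU_\eta(u)]}\,\Tr[B\,U_\eta(u)^\dagger\Pi_ZU_\eta(u)],
\]
which is manifestly a positive Hermitian form, and strictly positive by the tomographic completeness cited from Ref.~\cite{low2022classical}. For equivariance, since $\mcm$ commutes with $\mathrm{Ad}$ and is positive, its functional calculus $\mcm^{-1/2}$ also commutes with $\mathrm{Ad}$.
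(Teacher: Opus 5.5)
Your proof is correct. For $p_0$ it is the paper's argument: a diagonal matrix coefficient of $U_\eta\otimes\overline{U_\eta}$ at the $K$-fixed vector $|R\rangle\otimes\overline{|R\rangle}$. For $F$ you take a genuinely different and more direct route.

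The paper passes to $L^2(G/K)$ via $(\Phi A)(x)=\Tr(\Pi_x A)$ and writes $\mcm=\binom{n}{\eta}\Phi^*\Phi$, so that $F=T^{-1}p_0$ with $T=\binom{n}{\eta}\Phi\Phi^*$ (pseudo-inverse). It then diagonalizes $T$ on the multiplicity-free decomposition of $L^2(G/K)$, using the Bochner--Godement expansion of $p_0$ and spherical-function orthogonality, to get eigenvalues $a_\lambda=\binom{n}{\eta}b_\lambda/d_\lambda$. This yields the explicit expansion $F=\binom{n}{\eta}^{-1}\sum_{\lambda\in H}d_\lambda\phi_\lambda$, from which positive-semidefiniteness is read off.

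You instead stay on $\mathrm{End}(\mch_\eta)$. Because $\mcm$ is Hilbert--Schmidt positive and $\mathrm{Ad}$-equivariant, $\mcm^{-1/2}$ exists and commutes with $\mathrm{Ad}$. Hence $F(w)=\langle v,\mathrm{Ad}(w)v\rangle_{HS}$ with the $K$-fixed vector $v=\mcm^{-1/2}\Pi_R$, which gives positive-semidefiniteness and $K$-bi-invariance at once.

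Your route uses no Gelfand-pair structure, no Bochner--Godement theorem and no pseudo-inverse bookkeeping. It needs only the manifest positivity of $\langle A,\mcm(B)\rangle_{HS}$ plus the invertibility of $\mcm$ that the paper already assumes; even that could be relaxed to the pseudo-inverse square root.

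What the paper's route buys is structural information. The spherical coefficients of $F$ are explicitly $\binom{n}{\eta}^{-1}d_\lambda$ on the support $H$ of $p_0$, so up to that factor $F$ is the kernel, based at $eK$, of the orthogonal projection onto $\bigoplus_{\lambda\in H}\mch_\lambda$. Your argument yields only nonnegativity of these coefficients, obtained afterwards via Bochner--Godement. That is, however, all the proof of Lemma~\ref{lem:muzero} actually uses.
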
 
\begin{proof}
We begin by showing that $p_0$ is positive-semidefinite: indeed, $p_0(gK)=\lvert\langle R|U_\eta(g)|R\rangle\rvert^2
  =\langle w,(U_\eta(g)\otimes\overline{U_\eta(g)})w\rangle $, where $
  w=|R\rangle\otimes\overline{|R\rangle}$, so that $p_0$
is a diagonal matrix coefficient of a unitary representation, and therefore positive-semidefinite. From the Bochner–Godement theorem  we therefore have an expansion
\begin{equation}\label{eq:p0sum}
  p_0=\sum_{\lambda\in H}b_\lambda\phi_\lambda
\end{equation}
for some subset $H$ of the irreps of $G$, with all $b_\lambda>0$. To show that $F^2$ is positive-semidefinite we find it  helpful to introduce the map
\begin{equation}
  \Phi:\operatorname{End}(\mch_\eta)\longrightarrow L^2(G/K),
  \qquad (\Phi A)(x)=\Tr(\Pi_xA)\,,
\end{equation}
where, for any $x\in gK$, we define $  \Pi_x=U_\eta(g)\Pi_RU_\eta(g)^\dagger$.
The adjoint of $\Phi$ is given by 
\begin{equation}
\Phi^*:L^2(G/K)\longrightarrow \operatorname{End}(\mch_\eta),
  \qquad\Phi^*f=\int_{x\sim G/K}f(x)\Pi_x\,.
\end{equation}
By the cyclicity of the trace, we have from Eq.~\eqref{eq:h} that
\begin{equation}
F(w)=\Tr\left[\Pi_RU_\eta(w)^\dagger \mcm^{-1}(\Pi_R)U_\eta(w)\right]= \Tr\left[\Pi_w  \mcm^{-1}(\Pi_R)\right]=   (\Phi \mcm^{-1}(\Pi_R))(w)\,,
\end{equation}
so that $F=\Phi (\mcm^{-1}(\Pi_R))$. As we have also that $\mcm=\binom{n}{\eta}\Phi^*\Phi$ and $p_0=\Phi (\Pi_R)$, defining $T=\binom{n}{\eta}\Phi\Phi^*$ it follows that
\begin{equation}\label{eq:htp}
    F = \Phi (\mcm^{-1}(\Pi_R))= \binom{n}{\eta}^{-1} \Phi ((\Phi^*\Phi)^{-1}(\Pi_R))= \binom{n}{\eta}^{-1} (\Phi\Phi^*)^{-1}\Phi (\Pi_R) = T^{-1}\Phi (\Pi_R) = T^{-1}p_0\,,
\end{equation}
where the ``inverses'' of the maps are taken to be their corresponding Moore-Penrose pseudo-inverses. 
Now, $T$  is $G$-equivariant: indeed, if $L_a$ denotes left translation, $(L_af)(y)=f(a^{-1}y)$, then unraveling the definitions quickly yields $T(L_af)=L_a(Tf)$. Combined with the multiplicity-freeness of $L^2(G/K)$, Schur's lemma forces $  T|_{\mch_\lambda}=a_\lambda \id_{\mch_\lm}$ for all $\lm$, and in particular $  T\phi_\lambda=a_\lambda\phi_\lambda$. We can calculate the eigenvalues directly; indeed
\begin{align}
    (T\phi_\mu)(y) &= \binom{n}{\eta}\int_{x\sim G/K}\phi_\mu(x)\Tr[\Pi_x\Pi_y]\\
    &=\binom{n}{\eta}\int_{x\sim G/K}\phi_\mu(x)p_0(y^{-1}x)\\
    &=\binom{n}{\eta}\int_{x\sim G/K}\phi_\mu(x)\sum_{\lambda\in H}b_\lambda\phi_\lambda(y^{-1}x)\\
    &=\binom{n}{\eta}\frac{b_\mu}{d_\mu}\phi_\mu(y)\,,
\end{align}
where we have used Eqs.~\eqref{eq:ortho} and~\eqref{eq:p0sum}. So $a_\mu=\binom{n}{\eta}b_\mu/d_\mu\ge 0$. Eq.~\eqref{eq:htp} then implies that
\begin{equation}
    F=T^{-1}p_0 = \sum_{\lambda\in H}\frac{b_\lambda}{a_\lm}\phi_\lambda =\binom{n}{\eta}^{-1} \sum_{\lambda\in H} d_\lm\phi_\lambda 
\end{equation}
is positive-semidefinite. The  Schur product theorem then implies that $F^2$ is  in turn positive-semidefinite.
\end{proof}
In our next lemma we relate the spherical expansion of $p_s$ to that of $p_0$. Abusing notation slightly by identifying a point $x\in G/K$ with the subspace of $\mbc^n$ spanned by its first $\eta$ columns (that is, a point of the \textit{Grassmannian} ${\rm Gr}(\eta,n)$) we have:
\begin{restatable}{lem}{lempssum}\label{lem:psdef}
Let $x_s\in G/K$ be any coordinate $\eta$-plane at Johnson distance $s$ from $R$. Then we have $ p_s=\sum_\lambda b_\lambda \phi_\lambda(x_s) \phi_\lambda$, where the $b_\lm$ are the spherical coefficients of the expansion of $p_0$. 
\end{restatable}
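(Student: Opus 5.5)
The plan is to express $p_s$ as an average of translates of $p_0$ and then use the convolution property of spherical functions. For $x = gK$, the amplitude $\langle S|U_\eta(g)|R\rangle$ is the overlap between the coordinate plane $S$ and the plane $x$. Hence $|\langle S|U_\eta(g)|R\rangle|^2 = \Tr[\Pi_S\Pi_x] = p_0(y_S^{-1}x)$, where $y_S\in G$ is any unitary (for instance a permutation $v_S$) with $y_S R = S$. This uses the fact, already seen in the proof of Lemma~\ref{lem:posdef}, that $\Tr[\Pi_x\Pi_y]=p_0(y^{-1}x)$. Consequently
\begin{equation}
p_s(x)=\frac{1}{N_s}\sum_{\substack{S\in\mcz_\eta\\ d(S,R)=s}}p_0(v_S^{-1}x)\,.
\end{equation}

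The second step replaces this finite shell average by a $K$-average. For fixed $x_s = v_{S_0}K$ at distance $s$, I will show that
\begin{equation}
p_s(x)=\int_{k\sim K}p_0\big(x_s^{-1}k^{-1}x\big)\,dk,
\end{equation}
where $x_s^{-1}$ is understood via a representative $g_s$. To see this, note that $\int_K |\langle R|U_\eta(g_s^{-1}k^{-1}g)|R\rangle|^2dk = \int_K \Tr[U_\eta(k)\Pi_{S_0}U_\eta(k)^\dagger\,\Pi_x]\,dk$. By Schur's lemma applied to the irreducible $K$-module $\mch_s$ (Lemma~\ref{lem:gamma}), the twirl $\int_K U_\eta(k)\Pi_{S_0}U_\eta(k)^\dagger\,dk$ equals $P_s/N_s$, because $\Pi_{S_0}$ is supported in $\mch_s$. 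Therefore the integral equals $N_s^{-1}\Tr[P_s\Pi_x] = p_s(x)$. This identity is the heart of the argument, and it is the step I would check most carefully: one has to confirm that the left/right conventions and the inverses are consistent, using Lemma~\ref{lem:psinv} and the realness of the spherical functions to swap $g\leftrightarrow g^{-1}$ if needed.

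In the third step I insert the expansion $p_0=\sum_\lambda b_\lambda\phi_\lambda$ from Eq.~\eqref{eq:p0sum}. This requires the functional equation of spherical functions,
\begin{equation}
\int_K\phi_\lambda(a k b)\,dk=\phi_\lambda(a)\phi_\lambda(b),
\end{equation}
which follows immediately from $\phi_\lambda(g)=\langle v_\lambda|\pi_\lambda(g)v_\lambda\rangle$ together with the fact that $\int_K\pi_\lambda(k)\,dk=|v_\lambda\rangle\langle v_\lambda|$, by uniqueness of the $K$-fixed line. Applying it with $a=g_s^{-1}$ and $b=g^{-1}x$ gives
\begin{equation}
p_s(x)=\sum_\lambda b_\lambda\,\phi_\lambda(x_s^{-1})\phi_\lambda(x).
\end{equation}
Finally, $\phi_\lambda(g^{-1})=\overline{\phi_\lambda(g)}=\phi_\lambda(g)$ by unitarity and realness, so this equals $\sum_\lambda b_\lambda\phi_\lambda(x_s)\phi_\lambda(x)$, as claimed. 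Independence of the choice of $x_s$ follows because $\phi_\lambda$ is $K$-bi-invariant and any two coordinate planes at Johnson distance $s$ from $R$ are related by a permutation in $K$. Interchanging the sum and the integral is harmless, since the expansion of $p_0$ is finite: $p_0$ is a matrix coefficient of a finite-dimensional representation.
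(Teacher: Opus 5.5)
Your proposal is correct and follows essentially the paper's argument: you write $p_s$ as a $K$-average of translates of $p_0$ by applying Schur's lemma to the irreducible shell $\mch_s$, insert the spherical expansion of $p_0$, and collapse the $K$-integral using $\int_K\pi_\lambda(k)\,dk=|v_\lambda\rangle\langle v_\lambda|$. The only difference is the final inversion step: with your ordering of arguments you use realness of $\phi_\lambda$ to replace $\phi_\lambda(x_s^{-1})$ by $\phi_\lambda(x_s)$, whereas the paper arrives at $\phi_\lambda(y^{-1})$ and finishes with Lemma~\ref{lem:psinv}. Also, in your functional-equation step $b$ should be a representative $g$ of $x$, not $g^{-1}x$.
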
 
\begin{proof}
We begin by observing that for $y\in G/K$ we have
\begin{equation}
    p_s(y) = \frac{1}{N_s}\Tr[\Pi_y P_s] = \Tr[\Pi_y \int_{k\sim K}\Pi_{kx_s}] = \int_{k\sim K}p_0(y^{-1}kx_s)\,,
\end{equation}
where in the second equality we have used Schur's lemma; indeed notice that $\int_{k\sim K}\Pi_{kx_s}$ commutes with the action of $K$, while having support only on the $K$-irrep $\mch_s$ (c.f. the proof of Lemma~\ref{lem:gamma}). Inserting the expansion Eq.~\eqref{eq:p0sum} then yields
\begin{align}
    p_s(y) &= \int_{k\sim K}\sum_{\lambda\in H}b_\lambda\phi_\lambda(y^{-1}kx_s)\\
    &= \int_{k\sim K}\sum_{\lambda\in H}b_\lambda   \braket{v_\lambda}{\pi_\lambda(y^{-1}kx_s)v_\lambda}  \\
    &= \sum_{\lambda\in H}b_\lambda  \Big\langle v_\lambda \Big\vert \pi_\lambda(y^{-1})\left(\int_{k\sim K} \pi_\lm(k)\right)\pi_\lm(x_s)v_\lambda\Big\rangle  \\
    &= \sum_{\lambda\in H}b_\lambda  \braket{v_\lambda}{\pi_\lambda(y^{-1}) \ketbra{v_\lm}{v_\lm} \pi_\lm(x_s)v_\lambda}  \\
    &= \sum_{\lambda\in H}b_\lambda \phi_\lm(x_s)\phi_\lm (y^{-1})  
\end{align}
where we have used that $\int_{k\sim K} \pi_\lm(k)$ projects onto the $K$-fixed subspace of $\mch_\lm$, which is spanned exactly by $v_\lm$. The result of Lemma~\ref{lem:psinv} now immediately concludes the proof. 
\end{proof}
We have now finally assembled the ingredients necessary to prove:
\lemmuzero*
\begin{proof}
    By Lemmas~\ref{lem:posdef} and~\ref{lem:psdef} we have spherical expansions
    \begin{align}
        p_0&=\sum_\lambda b_\lambda  \phi_\lambda\\
        p_s&=\sum_\lambda b_\lambda \phi_\lambda(x_s) \phi_\lambda\\
        F^2&=\sum_\lambda c_\lambda \phi_\lambda
    \end{align}
    where $b_\lm,\,c_\lm\ge 0$ for all $\lm$. By Lemma~\ref{lem:muint} we then find
    \begin{align}
  \mu_s&=\binom n\eta\int_{x\sim G/K}F(x)^2p_s(x)\\
  &=\binom n\eta\int_{x\sim G/K}\sum_{\lm,\mu}b_\lambda c_\mu \phi_\lambda(x_s) \phi_\lambda(x)\phi_\mu(x)\\
  &=\binom n\eta \sum_{\lm}\frac{b_\lambda c_\lm }{d_\lm}  \phi_\lambda(x_s) \,,
    \end{align}
    where we have used the spherical function orthogonality relation Eq.~\eqref{eq:ortho}. Now, as the $ \phi_\lambda(x_s)$ are matrix elements of  unitary representations, their magnitudes are bounded by one, so that we have
    \begin{equation}
        \mu_s = \binom n\eta \sum_{\lm}\frac{b_\lambda c_\lm }{d_\lm}  \phi_\lambda(x_s)\le \binom n\eta \sum_{\lm}\frac{b_\lambda c_\lm }{d_\lm}  \lvert\phi_\lambda(x_s)\rvert\le \binom n\eta \sum_{\lm}\frac{b_\lambda c_\lm }{d_\lm}  = \mu_0\,,
    \end{equation}
    concluding the proof.
\end{proof}

\section{Estimating particle-preserving quadratic hamiltonians}\label{sec:ham}
Our first task in this appendix is to build towards a proof of Theorem~\ref{thm:ham}. As this will involve quite a lot of switching backwards and forwards between working in $\mbc^n$ and $\mch_\eta\cong\mbc^{\binom{n}{\eta}}$, we begin by giving an explicit example of what the various objects we will be considering look like in these two spaces. 
First, recall that a   point on the Grassmannian ${\rm Gr}(\eta,n)$ is equivalently thought of as an $\eta$-dimensional  subspace of $\C^n$, which may be specified by some  rank-$\eta$  projector $P$. 
If $f_1,\ldots,f_\eta$ is any orthonormal basis of the span of $P$, then the corresponding Slater determinant state is $\ket{P}=f_1\wedge\cdots\wedge f_\eta\in\Lambda^\eta\C^n$, 
and the corresponding many-body (rank-one) projector is $\Pi_P=\ketbra{P}{P}\in{\rm End}\,(\mch_\eta)$ (note that this definition is independent of the orthonormal basis of the span of $P$, since changing the basis by a unitary simply multiplies the wedge by a phase; indeed if $u$ preserves the span of $P$ then $U_\eta(u)\ket P = \det (u_{P,P}) \ket P$). Let us give an explicit example with  $n=4,\eta=2$. Taking (for no particular reason) $f_1=(e_1+e_3)/\sqrt2,\
  f_2=( e_2+e_4)/\sqrt2 $ yields
  \begin{equation}
      P=\ketbra{f_1}+\ketbra{f_2}
  =\frac12\begin{pmatrix}
  1&0&1&0\\
  0&1&0&1\\
  1&0&1&0\\
  0&1&0&1
  \end{pmatrix}\,;
  \end{equation}
  meanwhile, the associated Slater determinant is $\ket{ P}=f_1\wedge f_2
  =\frac12\big(\ket{12}+\ket{14}-\ket{23}+\ket{34}\big)$. With respect, then, to the   two-particle basis $ \ket{12},\ket{13},\ket{14},\ket{23},\ket{24},\ket{34},$
the  projector $\Pi_P=\ketbra{P}$ is the rank-one matrix
\begin{equation}
  \Pi_P=\frac14
  \begin{pmatrix}
  1&0&1&-1&0&1\\
  0&0&0&0&0&0\\
  1&0&1&-1&0&1\\
  -1&0&-1&1&0&-1\\
  0&0&0&0&0&0\\
  1&0&1&-1&0&1
  \end{pmatrix}.
\end{equation}
In the sequel, we will need the following identity:
\begin{equation}
  \Tr_{\mch_\eta}\bigl(d{U_\eta}(h)\Pi_P\bigr)=\Tr_{\C^n}(Ph)\,;
  \label{eq:trinv}
\end{equation}
indeed, unraveling the definitions we have
\begin{equation}
    \Tr_{\mch_\eta}\big(d{U_\eta}(h)\Pi_P\big) = \braket{P}{d{U_\eta}(h)|P}= \bra{f_1\wedge\cdots\wedge f_\eta}\sum_{r=1}^\eta\ket{f_{1}\wedge\cdots\wedge(hf_{r})\wedge\cdots\wedge f_{\eta}} = \sum_{r=1}^\eta  \braket{f_r}{hf_r}=\Tr_{\C^n}(Ph)\,.
\end{equation}
Now, let us consider the action of the shadow channel on $d{U_\eta}(h)$. Happily,  this is quite a lot simpler than the Slater determinant case:
\begin{restatable}{lem}{lemmquad}\label{lem:mquad} 
For every traceless Hermitian $h\in{\rm End}\,(\C^n)$ we have
\begin{equation}
  \mcm(d{U_\eta}(h))=\frac1{n+1}d{U_\eta}(h)\,.
  \label{eq:mquad}
\end{equation}
\end{restatable}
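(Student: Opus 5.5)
Two facts do the work: $\mcm$ is $U(n)$-equivariant, and the traceless quadratic operators $dU_\eta(h)$ span a single copy of the adjoint irrep inside $\eh$. Together these force $\mcm(dU_\eta(h))=c\,dU_\eta(h)$ for one scalar $c$, and we then fix $c$ by a direct computation.

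\emph{Step 1: $\mcm(dU_\eta(h))$ is proportional to $dU_\eta(h)$.} The map $h\mapsto dU_\eta(h)$ is $U(n)$-equivariant, because $U_\eta(u)dU_\eta(h)U_\eta(u)^\dagger=dU_\eta(uhu^\dagger)$. It is also injective on traceless $h$ once $1\le\eta\le n-1$. Its image is therefore a copy of the adjoint irrep $\mathfrak{su}(n)_\mbc$. Composing $\mcm$ with this map gives an equivariant map from the adjoint irrep into $\eh$.

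To conclude that this composite has the form $h\mapsto c\,dU_\eta(h)$, we need the adjoint to occur in $\eh\cong\Lambda^\eta\otimes\overline{\Lambda^\eta}$ with multiplicity one. By the Pieri rule, $\Lambda^\eta\otimes(\Lambda^\eta)^*$ decomposes as $\bigoplus_{k=0}^{\eta}V_{(1^k,0,\dots,0,(-1)^k)}$, which is multiplicity-free. The adjoint is the $k=1$ summand. (This is the multiplicity-freeness already quoted in the main text.) Schur's lemma then gives the claim.

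\emph{Step 2: compute $c$ by testing against a Slater determinant.} Since $c$ does not depend on the test state, pair both sides with $\Pi_R$. Self-adjointness of $\mcm$ gives
\[
c\,\Tr(h P_R)=\Tr(\Pi_R\,\mcm(dU_\eta(h)))=\Tr(\mcm(\Pi_R)\,dU_\eta(h)),
\]
where we used Eq.~\eqref{eq:trinv} and $P_R$ is the coordinate projector onto ${\rm span}\{e_1,\dots,e_\eta\}$.

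Expanding $\mcm(\Pi_R)$ and using Eq.~\eqref{eq:trinv} for each rotated Slater projector, the right-hand side becomes
\[
\int_{u}\sum_{Z}|\langle Z|U_\eta(u)|R\rangle|^2\,\Tr\big(u^\dagger P_Z u\,h\big).
\]
Write $Q=u^\dagger P_Z u$ and note that $|\langle Z|U_\eta(u)|R\rangle|^2=\Tr(\Pi_Q\Pi_R)$. After summing over $Z$ and using Haar invariance, the right-hand side is $\binom n\eta$ times a Haar average over a uniformly random $\eta$-plane $Q$:
\[
\binom n\eta\,\expect_Q\big[\lvert\det(Q\text{-overlap with }R)\rvert^2\,\Tr(Qh)\big].
\]

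\emph{Step 3: evaluate the average.} By symmetry under $K$, the matrix $\expect_Q[|\det(V^\dagger R)|^2\,Q]$ equals $\alpha P_R+\beta(\id-P_R)$ for scalars $\alpha,\beta$. Since $h$ is traceless, pairing with $h$ gives $(\alpha-\beta)\Tr(hP_R)$, so
\[
c=\binom n\eta(\alpha-\beta).
\]
We determine $\alpha,\beta$ from two scalar equations.
\begin{itemize}
\item Taking the trace: $\eta\alpha+(n-\eta)\beta=\eta\,\expect[p_0]=\eta\binom n\eta^{-1}$, using $\expect[p_0]=1/\binom n\eta$ (a 1-design identity).
\item Pairing with $P_R$: $\eta\alpha=\expect[p_0\,\Tr(X)]$ with $X=A^\dagger A$. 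Under the det-biased measure, $X\sim{\rm Beta}^{\rm I}_\eta(\eta+1,n-\eta)$, whose mean is $\frac{\eta+1}{n+1}\id$. Hence $\expect[\Tr X]=\eta(\eta+1)/(n+1)$ and $\alpha=\binom n\eta^{-1}\frac{\eta+1}{n+1}$.
\end{itemize}
Solving the first equation for $\beta$ gives $\beta=\binom n\eta^{-1}\frac{\eta}{n+1}$. Therefore $c=\frac{1}{n+1}$.

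The main obstacle is Step 3, in particular justifying the matrix-beta mean $\frac{\eta+1}{n+1}$. A cleaner alternative is to compute $\expect[p_0\,|u_{11}|^2]$ directly from $\det(A^\dagger A)$: by exchangeability, the $\det$-weight raises the moment from $\frac{1}{n}$ to $\frac{\eta+1}{n+1}\cdot\frac{1}{\binom n\eta}$ after normalisation. A useful sanity check is $\eta=1$, where this reduces to the standard 2-design value $1/(n+1)$.
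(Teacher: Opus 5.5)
Your proof is correct. Step 1 is the paper's argument: equivariance of $\mcm$, multiplicity-freeness of $\eh$, and Schur's lemma.

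Where you genuinely differ is in fixing the scalar.

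\emph{The paper's route.} It evaluates the ``Rayleigh quotient'' $\Tr[dU_\eta(h)\mcm(dU_\eta(h))]/\Tr[dU_\eta(h)^2]$. The numerator becomes $\binom n\eta\,\expect_P\Tr[hP]^2$ over Haar-random rank-$\eta$ projectors, a degree-two Weingarten integral (Eq.~\eqref{eq:fixedspec} with $r_1=r_2=\eta$). The denominator $\binom{n-2}{\eta-1}\Tr(h^2)$ is a subset count.

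\emph{Your route.} You pair with $\Pi_R$ and move $\mcm$ across by self-adjointness. You then only need the one-body reduced density matrix of $\mcm(\Pi_R)$, namely $\expect_Q[p_0(Q)\,Q]$, which $K$-invariance reduces to the two scalars $\alpha,\beta$.

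\emph{The trade-off.} You exchange a second moment under the flat measure for a first moment under the $p_0$-tilted measure. Since $p_0=|\det u_{R,R}|^2$ has degree $\eta$ in both $u$ and $\bar u$, this tilt is out of reach of low-order Weingarten calculus. That is why you need the mean $\frac{\eta+1}{n+1}\id$ of ${\rm Beta}^{\rm I}_\eta(\eta+1,n-\eta)$. Within the paper that input is exactly Eq.~\eqref{eq:es1} of Lemma~\ref{lem:jactwo}, proved independently by integration by parts, so there is no circularity. Your route ties the eigenvalue $1/(n+1)$ to the same det-biased Jacobi ensemble that governs $\mu_0$, and it yields the one-body marginal of $\mcm(\Pi_R)$ as a by-product. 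The paper's route is the more elementary and self-contained of the two.

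Two small points:
\begin{enumerate}
\item Before writing $c=\binom n\eta(\alpha-\beta)$, you should fix a traceless $h$ with $\Tr(hP_R)\neq 0$. For example, $h={\rm diag}(1,0,\dots,0,-1)$ works, since $\eta<n$.
\item Your closing ``cleaner alternative'' via exchangeability is an assertion rather than a derivation. Rely on the matrix-beta mean (or Lemma~\ref{lem:jactwo}) instead.
\end{enumerate}
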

\begin{proof}
The important representation-theoretic input here is that $dU_\eta$ intertwines the one-body conjugation
action with the many-body conjugation action; that is, $  d{U_\eta}(uhu\ad)=U_\eta(u)d{U_\eta}(h)U_\eta(u)\ad$. So, $d{U_\eta}:\mfu(n)\hookrightarrow{\rm End}\,(\mch_\eta)$ embeds the traceless one-body hamiltonians into a copy of the adjoint module of $ U(n)$ in ${\rm End}\,(\mch_\eta)$. The multiplicity-freeness of the decomposition of  ${\rm End}\,(\mch_\eta)$ into $ U(n)$-modules combined with the (previously discussed) equivariance of $\mcm$ and Schur's lemma then implies that we must have $  \mcm(d{U_\eta}(h))=\lambda d{U_\eta}(h)$ for some scalar $\lm$, the computation of which is our remaining job~\cite{fulton1991representation}. We calculate:
\begin{align}
    \Tr_{\mch_\eta} [d{U_\eta}(h)\mcm(d{U_\eta}(h))] &= \Tr_{\mch_\eta} [d{U_\eta}(h)\int_{u\sim U(n)} \sum_{Z\in\mcz_\eta} \Tr_{\mch_\eta}[d{U_\eta}(h) U_\eta(u)\ad \Pi_ZU_\eta(u)]U_\eta(u)\ad \Pi_Z U_\eta(u)]\\
     &= \binom{n}{\eta} \int_{u\sim U(n)}   \Tr_{\mch_\eta}[d{U_\eta}(h) U_\eta(u)\ad \Pi_RU_\eta(u)]\Tr_{\mch_\eta} [d{U_\eta}(h)U_\eta(u)\ad \Pi_R U_\eta(u) ]\\
     &= \binom{n}{\eta} \int_{P\sim {\rm Gr}(\eta, n)}   \Tr_{\mch_\eta}[d{U_\eta}(h)   \Pi_P ]^2\\
     &= \binom{n}{\eta} \int_{P\sim {\rm Gr}(\eta, n)}   \Tr_{\mbc^n}[ h P ]^2
\end{align}
where we have used Eq.~\eqref{eq:trinv} and converted to an integral over the Grassmannian ${\rm Gr}(\eta, n)$, parameterised by subspaces $P=UQ_RU\ad$ of $\mbc^n$, where $  Q_R=\begin{pmatrix}I_\eta&0\\0&0\end{pmatrix}$. The final integral is that of a Haar-random rank-$\eta$ projector, which may be evaluated via elementary Weingarten calculus~\cite{mele2024introduction}; one concludes
\begin{equation}
    \Tr_{\mch_\eta} [d{U_\eta}(h)\mcm(d{U_\eta}(h))] = \binom{n}{\eta} \frac{\eta(n-\eta)}{n(n^2-1)}\Tr[h^2]\,.
\end{equation}
On the other hand, we know that $d{U_\eta}(h)$ is an eigenvector of $\mcm$ (of as-yet unknown eigenvalue $\lm$), so that (taking wlog $h={\rm diag}\,(x_1,\ldots,x_n)$)   for any $Z\in\mcz_\eta$ we have $  d{U_\eta}(h)\ket Z=\left(\sum_{i\in Z}x_i\right)\ket Z,
$ whence
\begin{align}
\Tr_{\mch_\eta} [d{U_\eta}(h)\mcm(d{U_\eta}(h))] &= \lm\Tr_{\mch_\eta}(d{U_\eta}(h)^2)\\
&=\lm\sum_{Z\in\mcz_\eta}\left(\sum_{i\in Z}x_i\right)^2         \\                      
  &=\lm\binom{n-1}{\eta-1}\sum_i x_i^2
  +2\binom{n-2}{\eta-2}\sum_{i<j}x_ix_j\\
  &=\lm
  \left[\binom{n-1}{\eta-1}-\binom{n-2}{\eta-2}\right]\Tr(h^2)\\
  &=\lm
  \binom{n-2}{\eta-1}\Tr(h^2)        
\end{align}
Here we have used the tracelessness of $h$, as well as that each $x_i^2$ occurs in $\binom{n-1}{\eta-1}$ subsets, and each cross term $x_ix_j$ occurs in $\binom{n-2}{\eta-2}$ subsets.  It follows (well, assume $h$ is non-zero) that  we have
\begin{equation}
    \lm =  \binom{n}{\eta} \frac{\eta(n-\eta)}{n(n^2-1)}\binom{n-2}{\eta-1}^{-1} = \frac{1}{n+1}\,,
\end{equation}
which concludes the proof.   
\end{proof}
Lemma~\ref{lem:mquad} immediately yields an explicit expression for the single-shot estimator of a traceless particle-preserving quadratic operator $h_0$; indeed, we have
\begin{align}
  \widehat{o}_{h_0}(u,Z)  &=\Tr\left(d{U_\eta}(h_0)\mcm^{-1}\left(U_\eta(u)\ad\Pi_ZU_\eta(u)\right)\right)\\
    &=(n+1)\Tr\left(d{U_\eta}(h_0)U_\eta(u)\ad\Pi_ZU_\eta(u)\right)\\
  &=(n+1)\Tr\left(d{U_\eta}(uh_0u\ad)\Pi_Z\right)\\
  &=(n+1)\Tr[ uh_0u\ad Q_Z]\,,\label{eq:hest}
\end{align}
where $Q_Z = \sum_{i\in Z}\ketbra{i}$ and we have used Eq.~\eqref{eq:trinv}. The extension to traceful operators $h$ is immediate; indeed we have 
\begin{align}
    \widehat{o}_{\id_n}(u,Z)&=\Tr\left(d{U_\eta}(\id_n)\mcm^{-1}\left(U_\eta(u)\ad\Pi_ZU_\eta(u)\right)\right)\\
    &= \Tr\left( \eta\id_{\mch_\eta}\mcm^{-1}\left(U_\eta(u)\ad\Pi_ZU_\eta(u)\right)\right)\\
    &=\eta \Tr\left( \mcm^{-1}\left(\id_{\mch_\eta}\right)U_\eta(u)\ad\Pi_ZU_\eta(u)\right)\\
    &=\eta \Tr\left( U_\eta(u)\ad\Pi_ZU_\eta(u)\right)\\
    &=\eta
\end{align}
so that for a general particle-preserving
quadratic operator $h=\Tr[h]\id_n/n + h_0$ we have
\begin{equation}\label{eq:hestfull}
    \widehat{o}_{h}(u,Z)  =\eta\Tr[h]/n + (n+1)\Tr[ uh_0u\ad Q_Z]\,.
\end{equation}
Now, as in the case of estimating Slater determinants, our primary objective boils down to bounding the infinity norm of a certain second moment operator, given in this instance (for a target hamiltonian $h$) by
\begin{equation}
      \Gamma_h=\int_{u\sim U(n)}\sum_{Z\in\mcz_\eta}\widehat{o}_h(u,Z)^2U_\eta(u)\ad\Pi_ZU_\eta(u)\,.
  \label{eq:gammah}
\end{equation}
To get some technical preliminaries out of the way, we return to the consideration of some properties of the measure Eq.~\eqref{eq:mario}. In particular, we prove:
\begin{restatable}{lem}{lemjactwo}\label{lem:jactwo} 
Let $x_1,\ldots,x_\eta$ have density proportional to $\mu(x)=
  \Delta(x)^2\prod_{i=1}^\eta x_i(1-x_i)^{n-2\eta}$, and set $s_1=\sum_i x_i,$ and $s_2=\sum_i x_i^2$.
Then
\begin{align}
  \expect_{x\sim\mu}[s_1]&=\frac{\eta(\eta+1)}{n+1},\label{eq:es1}\\
  \expect_{x\sim\mu}[s_1^2]
  &=
  \frac{\eta(\eta+1)(n\eta^2+n\eta+n+\eta^2-\eta)}{n(n+1)(n+2)},\label{eq:es1sq}\\
  \expect_{x\sim\mu}[s_2]
  &=
  \frac{\eta(\eta+1)(2n\eta+n-\eta^2+\eta)}{n(n+1)(n+2)}.\label{eq:es2}
\end{align}
\end{restatable}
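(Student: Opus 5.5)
Our plan is to avoid integrating over $[0,1]^\eta$ at all. Instead we read the three moments off the random-matrix picture already used for Eq.~\eqref{eq:muexp}. There, $x_1,\ldots,x_\eta$ are the eigenvalues of $X=u_{R,R}^\dagger u_{R,R}$ under the measure $\binom{n}{\eta}\det X\,{\rm d}u$ for Haar-random $u\in U(n)$. So for any symmetric function $f$ we have
\begin{equation}
\expect_{x\sim\mu}[f]=\binom{n}{\eta}\int_{u\sim U(n)} f(X)\,\lvert\det u_{R,R}\rvert^2 .
\end{equation}
We express $s_1=\Tr X$, $s_1^2=(\Tr X)^2$ and $s_2=\Tr X^2$ as polynomials in the entries of $u$. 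The three expectations then become Haar integrals of polynomials in $u,\bar u$, weighted by $p_0(u)=\lvert\langle R|U_\eta(u)|R\rangle\rvert^2$.

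To make these integrals tractable we move to the $\eta$-particle space. From Eq.~\eqref{eq:trinv}, $\Tr X=\Tr_{\C^n}[Q_R\, u Q_R u^\dagger]$ equals $\Tr_{\mch_\eta}[dU_\eta(Q_R)\Pi_x]$, where $\Pi_x=U_\eta(u)\Pi_RU_\eta(u)^\dagger$. Also $p_0=\Tr[\Pi_R\Pi_x]$. Hence
\begin{equation}
\expect_\mu[s_1]=\binom{n}{\eta}\int \Tr\!\left[(dU_\eta(Q_R)\otimes\Pi_R)\,\Pi_x^{\otimes2}\right].
\end{equation}
The average of $\Pi_x^{\otimes 2}$ over the Grassmannian is the normalized projector onto the $U(n)$-irrep generated by $\ket R^{\otimes 2}$ inside $\mch_\eta^{\otimes2}$, i.e.\ the two-column rectangular Young diagram $(2^\eta)$.

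For the first moment there is a more elementary route. Use symmetry: $\Pi_R$ and $X$ involve only the first $\eta$ columns of $u$, and averaging over $k\in K$ reduces everything to a one-dimensional computation. Concretely, $\expect[s_1]=\eta\,\expect[\lvert u_{11}\rvert^2\text{-type weight}]$, which we evaluate by writing $\det u_{R,R}$ with the Cauchy--Binet expansion as in Lemma~\ref{lem:psinv}.

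Alternatively, and more cleanly, we use the orthogonal polynomial machinery of Lemma~\ref{lem:mulin}. By Andreief's identity and the alternant trick, $\expect_\mu[e_1]$ and $\expect_\mu[e_2]$ are ratios of bordered moment determinants. For the Jacobi weight $w(x)=x(1-x)^M$ these are determined by the one-point correlation function $K_\eta(x,x)w(x)$ and the two-point function. Specifically:
\begin{itemize}
\item $\expect[s_1]=\int x\,\rho_1(x)\,{\rm d}x$ and $\expect[s_2]=\int x^2\rho_1(x)\,{\rm d}x$, with $\rho_1=\sum_{j<\eta}J_j(x)^2w(x)/\|J_j\|^2$.
\item $\expect[s_1^2]=\expect[s_2]+\int\!\!\int xy\,\rho_2(x,y)$, with $\rho_2$ the determinantal two-point function.
\end{itemize}

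Using the three-term recurrence $xJ_j=a_jJ_{j+1}+b_jJ_j+c_jJ_{j-1}$ for shifted Jacobi polynomials, and orthogonality Eq.~\eqref{eq:jnorm}, the one-point quantities reduce to:
\begin{itemize}
\item $\expect[s_1]=\sum_{j<\eta}b_j$;
\item $\expect[s_2]=\sum_{j<\eta}(b_j^2+a_jc_{j+1}+a_{j-1}c_j)$, suitably normalized.
\end{itemize}
For the two-point term, the determinantal structure gives
\begin{equation}
\int\!\!\int xy\,\rho_2=\Big(\sum_{j<\eta}b_j\Big)^2-\sum_{j,k<\eta}\frac{\langle xJ_j,J_k\rangle^2}{\|J_j\|^2\|J_k\|^2}.
\end{equation}
The subtracted sum equals $\sum_j b_j^2$ plus the off-diagonal terms $2\sum_{j<\eta-1}a_jc_{j+1}$, with everything normalized. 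Substituting, we get $\Var[s_1]=$ just the boundary term $a_{\eta-1}c_\eta$ (normalized), a standard Christoffel--Darboux telescoping.

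The main obstacle is bookkeeping: extracting correct recurrence coefficients for $J_j$ with parameters $(\alpha,\beta)=(1,M)$ on $[0,1]$, then summing $\sum_{j<\eta}b_j$ and $\sum_{j<\eta} b_j^2$ in closed form. We would first compute $b_j$ from the leading and subleading coefficients read off the series in Eq.~\eqref{eq:jseries}. Since $b_j$ is a rational function of $j$, the sum $\sum_{j<\eta} b_j$ should telescope to $\eta(\eta+1)/(n+1)$ (checking $\eta=1$: $b_0=2/(M+3)=2/(n+1)$, consistent). We then verify the remaining two formulas against $\eta=1$, where everything is a single Beta integral, before trusting the algebra.
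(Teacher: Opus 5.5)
Your main line, the determinantal/orthogonal-polynomial route, is valid and genuinely different from the paper's; one step of your summation plan fails as written and needs a small repair (second paragraph).

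\textbf{Comparison with the paper.} The paper never uses the correlation kernel. It derives loop (Schwinger--Dyson) equations by integrating $\sum_i\partial_i[\varphi_i\mu]$ over $[0,1]^\eta$ for the boundary-vanishing fields $\varphi_i=x_i(1-x_i)$, $x_i^2(1-x_i)$ and $x_i(1-x_i)s_1$. It pairs the Vandermonde terms $(i,j)\leftrightarrow(j,i)$ and solves the resulting linear system for $\mathbb{E}[s_1],\mathbb{E}[s_1^2],\mathbb{E}[s_2]$. That is short and needs no recurrence data.

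Your identities are all correct: $\mathbb{E}[s_1]=\sum_{j<\eta}b_j$, $\mathbb{E}[s_2]=\sum_{j<\eta}(b_j^2+\gamma_j+\gamma_{j+1})$ with $\gamma_j:=a_{j-1}c_j$, and the telescoping $\Var[s_1]=\gamma_\eta$. The bookkeeping does close. With $u_j=2j+M+1$, write $\sigma_j=-\frac{j(j+1)}{u_j}$ for the subleading coefficient of the monic $J_j$. Then $b_j=\sigma_j-\sigma_{j+1}=\frac{2(j+1)(j+M+1)}{u_j(u_j+2)}$, and the sum telescopes to $\eta(\eta+1)/(n+1)$ as you predicted. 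Also $\gamma_j=\frac{j(j+1)(j+M)(j+M+1)}{u_j^2(u_j^2-1)}$, so $\gamma_\eta=\frac{\eta(\eta+1)(n-\eta)(n-\eta+1)}{n(n+1)^2(n+2)}$; adding $\mathbb{E}[s_1]^2$ gives Eq.~\eqref{eq:es1sq}. Your route buys structure: the variance of $s_1$ is the squared boundary off-diagonal entry of the Jacobi matrix, and any polynomial linear statistic can be treated the same way. The cost is importing recurrence coefficients, where the paper's route is self-contained.

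\textbf{The step that fails.} You plan to sum $\sum_{j<\eta}b_j^2$ in closed form, but by itself it has no rational closed form. Its partial-fraction expansion contains $\frac{(M^2-1)^2}{16}\bigl(u_j^{-2}+u_{j+1}^{-2}\bigr)$. This produces trigamma-type partial sums $\sum_{j<\eta}(2j+M+1)^{-2}$ whenever $M\neq 1$, for instance at half filling $M=0$. These pieces cancel exactly against the same pieces of $\gamma_j+\gamma_{j+1}$, since $\gamma_j=\frac{1}{16}\bigl(1-\frac{(M^2-1)^2}{u_j^2}+\frac{M^2(M^2-4)}{u_j^2-1}\bigr)$. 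So you must sum the combination $b_j^2+\gamma_j+\gamma_{j+1}$, not $b_j^2$ alone.

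\textbf{A cleaner fix with no summation.} By Heine's formula, $\mathbb{E}_\mu[\prod_i(z-x_i)]$ equals the monic $\hat J_\eta=J_\eta/\kappa_\eta$.
\begin{itemize}
\item Eq.~\eqref{eq:jseries} then gives directly $\mathbb{E}[e_1]=\frac{\eta(\eta+1)}{n+1}$ and $\mathbb{E}[e_2]=\binom{\eta}{2}\frac{\eta(\eta+1)}{n(n+1)}$.
\item $\gamma_\eta=\|\hat J_\eta\|^2/\|\hat J_{\eta-1}\|^2$ follows from Eq.~\eqref{eq:jnorm}.
\item $\mathbb{E}[s_2]=\mathbb{E}[s_1^2]-2\mathbb{E}[e_2]$ then reproduces Eq.~\eqref{eq:es2}.
\end{itemize}

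Finally, the Haar/$(2^\eta)$-projector and ``$\lvert u_{11}\rvert^2$-type weight'' sketches at the start are never carried out and should be dropped.
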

\begin{proof}
Our trick is going to be to obtain multiple equations in the unknown expectation values by integrating various functions against $\mu$. First, notice that
\begin{equation}\label{eq:partial}
    \partial_i \log \mu(x) =   2\sum_{j\ne i}\frac1{x_i-x_j}+\frac1{x_i}-\frac{n-2\eta}{1-x_i}.
\end{equation}
So, if we integrate a vector field $\varphi_i$ containing a factor of $x_i(1-x_i)$ for each $i$, the vanishing at the boundary yields
\begin{equation}\label{eq:intparts}
    0 = \int_{x\in[0,1]^\eta}\sum_i\partial_i\left[\varphi_i(x) \mu(x)\right]= \sum_i\int_{x\in[0,1]^\eta}\mu(x)\left[\partial_i\varphi_i(x) + \varphi_i (x)\partial_i\log\mu(x)\right] = \sum_i\expect_{x\sim\mu}\left[\partial_i\varphi_i(x) + \varphi_i (x)\partial_i\log\mu(x)\right]
\end{equation}
(note that the singularities at $x_i=x_j$ in Eq.~\eqref{eq:partial} are harmless due to the quadratic vanishing of the Vandermonde). Now, let us pick some vector fields and evaluate Eq.~\eqref{eq:intparts}. First we take $ \varphi_i=x_i(1-x_i)$, so that $ \sum_i\partial_i\varphi_i=\eta-2s_1$.  The Vandermonde term may be simplified by pairing the contribution from $(i,j)$ with that of $(j,i)$:
\begin{align}
2\sum_i x_i(1-x_i)\sum_{j\ne i}\frac1{x_i-x_j}  &=2\sum_{i<j}\left[\frac{x_i(1-x_i)}{x_i-x_j}+\frac{x_j(1-x_j)}{x_j-x_i}\right]\\
  &=2\sum_{i<j}(1-x_i-x_j)\\
  &=\eta(\eta-1)-2(\eta-1)s_1.
\end{align}
Multiplying the other two terms on the RHS of Eq.~\eqref{eq:partial} and summing over $i$ yields respectively $\eta-s_1$ and $-(n-2\eta)s_1$; adding up the various contributions then reveals
\begin{equation}
    0=\expect_{x\sim\mu}[\eta-2s_1 + \eta(\eta-1)-2(\eta-1)s_1+\eta-s_1+(2\eta-n)s_1]\,,
\end{equation}
whence
\begin{equation}\label{eq:s1}
  \expect_{x\sim\mu} s_1=  \frac{\eta(\eta+1)}{n+1}\,.
\end{equation}
Playing the same game with the vector fields $ \varphi_i=x_i^2(1-x_i)$ and $ \varphi_i=x_i(1-x_i)s_1$ yield respectively the equations
\begin{equation}
    0 = (2\eta+1)\expect_{x\sim\mu}[s_1] - \expect_{x\sim\mu} [s_1^2] - (n+1)\expect_{x\sim\mu} [s_2]
\end{equation}
and
\begin{equation}
    0 = (\eta^2+\eta+1)\expect_{x\sim\mu}[s_1] -(n+1) \expect_{x\sim\mu} [s_1^2] - \expect_{x\sim\mu} [s_2]\,.
\end{equation}
Substituting in the result of Eq.~\eqref{eq:s1} and solving simultaneously for the remaining two unknowns gives the claim of the lemma. 
\end{proof}

The following   technical lemma builds fairly directly on the result of Lemma~\ref{lem:jactwo}, combined with the (readily verified by elementary Weingarten calculus) identity
\begin{equation}
    \int_{V\sim U(d)} \left(\Tr [V {\rm diag}\,(\lambda_1,\ldots,\lambda_d) V\ad A]\right)^2 =   \frac{dr_1^2-r_2}{d(d^2-1)}(\Tr A)^2+\frac{dr_2-r_1^2}{d(d^2-1)}\Tr(A^2),
  \label{eq:fixedspec}
\end{equation}
where $  r_1=\sum_i \lm_i$ and $  r_2=\sum_i \lm_i^2$.
\begin{restatable}{lem}{lemjacthree}\label{lem:jacthree} 
Let $h={\rm diag}\,(\lambda_1,\ldots,\lambda_n)$ with $\Tr h = 0$, and define $a_R=\sum_{i\in R}\lambda_i$, $b_R=\sum_{i\in R}\lambda_i^2$ and $T=\sum_{i=1}^n\lambda_i^2$. 
Then, 
\begin{equation}
  \binom{n}{\eta}\int_{u\sim U(n)}\left(\Tr(uQ_Ru\ad h)\right)^2 \lvert\det u_{R,R}\rvert^2=\frac{\eta(n-\eta+1)}{n(n+1)(n+2)}T+\frac{n-2\eta}{n(n+1)(n+2)}b_R+\frac1{n(n+1)}a_R^2\,.
  \label{eq:bls}
\end{equation}
\end{restatable}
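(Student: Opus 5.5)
The plan is to reduce the integral to a computation involving only the spectral data of the random compression $X=u_{R,R}\ad u_{R,R}$. Then Lemma~\ref{lem:jactwo} and the fixed-spectrum Weingarten identity Eq.~\eqref{eq:fixedspec} do the rest.

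\emph{Step 1: block decomposition.} Write $P=uQ_Ru\ad$ and $h=h_R\oplus h_{R^\perp}$ with respect to $\mbc^n=R\oplus R^\perp$. Since $h$ is diagonal, $\Tr(Ph)=\Tr(P_{RR}h_R)+\Tr(P_{R^\perp R^\perp}h_{R^\perp})$. The weight $\lvert\det u_{R,R}\rvert^2$ is $K$-bi-invariant, so I use the $KaK$ (cosine--sine) decomposition from the proof of Lemma~\ref{lem:hlem}. This writes $P_{RR}=V_1\,{\rm diag}(\lambda_i)V_1\ad$ with $V_1\in U(\eta)$, where the $\lambda_i$ are the eigenvalues of $u_{R,R}u_{R,R}\ad$, which has the same spectrum as $X$. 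Likewise $P_{R^\perp R^\perp}=V_2\,{\rm diag}(1-\lambda_1,\ldots,1-\lambda_\eta,0,\ldots,0)V_2\ad$ with $V_2\in U(n-\eta)$.

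Haar invariance under left multiplication by $K$ lets me take $V_1,V_2$ independent and Haar distributed, independent of $\lambda$. The reweighting by $\binom n\eta\lvert\det u_{R,R}\rvert^2=\binom n\eta\det X$ turns the law of $\lambda$ into exactly the measure $\mu$ of Lemma~\ref{lem:jactwo}, by the same observation used before Eq.~\eqref{eq:mario}. The normalisation is correct because $\binom n\eta\int p_0=1$, which follows from $\sum_s N_s\,p_s\equiv 1$ and the shell symmetry $\int p_s=\int p_0$.

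\emph{Step 2: expand the square.} Write $\Tr(Ph)=A_1+A_2$, where $A_1=\Tr(V_1\Lambda_1V_1\ad h_R)$ and $A_2=\Tr(V_2\Lambda_2V_2\ad h_{R^\perp})$. I handle the three resulting terms separately.
\begin{itemize}
\item[(i)] $\mbe[A_1^2]$: apply Eq.~\eqref{eq:fixedspec} with $d=\eta$, $r_1=s_1$, $r_2=s_2$, and $A=h_R$, so that $\Tr A=a_R$ and $\Tr A^2=b_R$.
\item[(ii)] $\mbe[A_2^2]$: apply Eq.~\eqref{eq:fixedspec} with $d=n-\eta$, $r_1=\eta-s_1$, $r_2=\eta-2s_1+s_2$, and $A=h_{R^\perp}$. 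Tracelessness gives $\Tr A=-a_R$ and $\Tr A^2=T-b_R$.
\item[(iii)] The cross term: by independence of $V_1$ and $V_2$ it factorises into first moments, giving $\mbe\big[\tfrac{s_1}{\eta}a_R\cdot\tfrac{\eta-s_1}{n-\eta}(-a_R)\big]$.
\end{itemize}
Every term is now linear in $\mbe[s_1]$, $\mbe[s_1^2]$ and $\mbe[s_2]$, which I substitute from Eqs.~\eqref{eq:es1}--\eqref{eq:es2}. Collecting coefficients of $T$, $b_R$ and $a_R^2$ should yield Eq.~\eqref{eq:bls}.

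\emph{Step 3: sanity checks.} Before committing to the algebra, I would check the edge cases. For $\eta=n/2$ the $b_R$ coefficient should vanish. For $\eta=1$ the result should reduce to a direct computation with $\lvert u_{11}\rvert^2$.

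The main obstacle is the rational-function bookkeeping in Step 2. The Weingarten denominators $\eta(\eta^2-1)$ and $(n-\eta)((n-\eta)^2-1)$ must cancel against factors of the moments to leave the clean denominator $n(n+1)(n+2)$. There are also degenerate cases, $\eta=1$ and $n-\eta=1$, where Eq.~\eqref{eq:fixedspec} is singular; these I would treat separately or by continuity in the final polynomial identity. I would organise the collection by coefficient, first $T$, then $b_R$, then $a_R^2$, and verify each numerically for small $(n,\eta)$ before simplifying.
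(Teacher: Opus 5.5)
Your plan is correct and follows the paper's proof essentially step for step. You split $\Tr(uQ_Ru^\dagger h)$ into its $R$ and $R^\perp$ blocks, use left-$K$ invariance to make the eigenvector rotations independent and Haar conditional on the spectrum, apply Eq.~\eqref{eq:fixedspec} to each block, factorise the cross term into first moments, and substitute the moments from Lemma~\ref{lem:jactwo}. In the bookkeeping, note that the cross term enters as $2\,\mbe[A_1A_2]$; without that factor of $2$ the $a_R^2$ coefficient does not come out to $1/(n(n+1))$. Your remark about the $d=1$ singularity of Eq.~\eqref{eq:fixedspec} addresses a point the paper passes over silently.
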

\begin{proof}
A convenient simplification arises from considering the decomposition $\mbc^n=R\oplus R^\perp$, and the corresponding $h=h_R\oplus h_C$. Similarly decomposing $uQ_Ru\ad$ into block form, $uQ_Ru\ad=\begin{pmatrix}
    X&*\\ *&Y
\end{pmatrix}$, we have $\Tr(uQ_Ru\ad h) = \Tr(X h_R) +\Tr(Y h_C) $ (with the off diagonal  blocks of $uQ_Ru\ad$ playing no role in the sequel), where $X=Q_RuQ_Ru\ad Q_R = u_{R,R}u_{R,R}\ad$ and $Y=(\id_n - Q_R)uQ_Ru\ad (\id_n - Q_R)$. Now, condition on $X$ having some fixed eigenvalues $x_1,\ldots,x_\eta$ and set $s_1=\sum_i x_i,$ and $s_2=\sum_i x_i^2$. By the invariance of the distribution whence $X$ came under the action of $K=U(\eta)\times U(n-\eta)$ we then have $X=V {\rm diag}\,(x_1,\ldots,x_\eta) V\ad$  for $V\sim U(\eta)$. Applying Eq.~\eqref{eq:fixedspec} then gives
\begin{equation}
\int_{V\sim U(\eta)} \left(\Tr [V {\rm diag}\,(x_1,\ldots,x_\eta) V\ad h_R]\right)^2 =   \frac{\eta s_1^2-s_2}{\eta (\eta ^2-1)}(\Tr h_R)^2+\frac{\eta s_2-s_1^2}{\eta (\eta ^2-1)}\Tr(h_R^2)
\end{equation}
Now, as we previously established, (seeing also Ref.~\cite{kieburg2016singular}), the eigenvalues of $X$ under the det-biased measure are distributed as the $\mu(x)$ of Lemma~\ref{lem:jactwo}, so that the result of that lemma yields
\begin{align}
\binom{n}{\eta}\int_{u\sim U(n)}\left(\Tr(X h_R)\right)^2 \lvert\det u_{R,R}\rvert^2&=\expect_{x\sim\mu} \left[\frac{\eta s_1^2-s_2}{\eta (\eta ^2-1)}(\Tr h_R)^2+\frac{\eta s_2-s_1^2}{\eta (\eta ^2-1)}\Tr(h_R^2)\right]\\
&=\frac{(\eta+1)(n\eta+n+\eta)}{n(n+1)(n+2)}(\Tr h_R)^2+\frac{(n-\eta)(\eta+1)}{n(n+1)(n+2)}\Tr(h_R^2)\label{eq:xterm}
\end{align}
Now, the eigenvalues of $Y$ are given by $y_i=1-x_i$ (with trailing zeroes for $i>\eta$), so that $\Tr Y=\eta-s_1$ and $\Tr Y^2 = \eta-2s_1 + s_2$. Applying Eq.~\eqref{eq:fixedspec} and Lemma~\ref{lem:jactwo} again then ensures
\begin{equation}
    \binom{n}{\eta}\int_{u\sim U(n)}\left(\Tr(Y h_C)\right)^2 \lvert\det u_{R,R}\rvert^2 =    \frac{\eta(n\eta+\eta-1)}{n(n+1)(n+2)}(\Tr h_C)^2+   \frac{\eta(n-\eta+1)}{n(n+1)(n+2)}\Tr(h_C^2)\label{eq:yterm}\,.
\end{equation}
We have also to compute a cross term; as (conditional on the eigenvalues $x$) the resulting freedoms are independent Haar random rotations on the eigenvectors we have 
\begin{align}
    \binom{n}{\eta}\int_{u\sim U(n)} \Tr(X h_R)\Tr(Y h_C)  \lvert\det u_{R,R}\rvert^2 &= \expect_{x\sim \mu} \left[\left(\expect_{V\sim U(\eta)} \Tr(V{\rm diag}\, (x)V\ad h_R)\right)\left(\expect_{V\sim U(n-\eta)} \Tr(V{\rm diag}\, (1-x)V\ad h_C)\right)\right] \\
     &= \expect_{x\sim \mu} \left[\left(\frac{s_1}{\eta}\Tr h_R\right)\left(\frac{\eta-s_1}{n-\eta}\Tr h_C\right)\right] \\
    &=  \frac{(\eta+1)(n\eta+\eta-1)}{n(n+1)(n+2)} \Tr h_R \Tr h_C\label{eq:cross}
\end{align}
(where ``$1-x$''  implicitly includes $n-2\eta$ padded zeros). 
Summing the three contributions Eqs.~\eqref{eq:xterm},~\eqref{eq:yterm} and~\eqref{eq:cross}, and using $\Tr h=0$, then swiftly yields the result of the lemma. 
\end{proof}

We are now in the favourable position of being able to derive a simple explicit expression for $\Gamma_h$ (Eq.~\eqref{eq:gammah}):

\begin{restatable}{lem}{lemgamhev}\label{lem:gamhev} 
 Let $h=h\ad$, $\Tr h=0$. Then
\begin{equation}
  \Gamma_h=A_{n,\eta}\Tr(h^2)\id_{\mch_\eta}
  +B_{n,\eta}d{U_\eta}(h^2)
  +C_n d{U_\eta}(h)^2\,,
\end{equation}
where $A_{n,\eta}=(n+1)\eta(n-\eta+1)/(n(n+2))$, $  B_{n,\eta}= (n+1)(n-2\eta)/(n(n+2))$, and $C_n=(n+1)/n$.
\end{restatable}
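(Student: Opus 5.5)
Since $\Tr h=0$, Eq.~\eqref{eq:hestfull} gives $\widehat o_h(u,Z)=(n+1)\Tr[uhu^\dagger Q_Z]$. The plan is first to cut down the space of operators $\Gamma_h$ can lie in, and then to fix the three coefficients by computing matrix elements with Lemma~\ref{lem:jacthree}.

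\emph{Symmetry reduction.} By $U(n)$-covariance, $\Gamma_{vhv^\dagger}=U_\eta(v)\Gamma_hU_\eta(v)^\dagger$, which follows from Haar invariance under $u\mapsto uv^\dagger$. We may therefore take $h={\rm diag}(\lambda_1,\ldots,\lambda_n)$ and, since the claimed right-hand side is covariant in the same way, it suffices to treat this case. The right-hand side is then diagonal in the occupation basis. So we need two things.
\begin{enumerate}
\item $\Gamma_h$ is diagonal in the occupation basis.
\item For every $S\in\mcz_\eta$, $\bra S\Gamma_h\ket S$ equals $A_{n,\eta}T+B_{n,\eta}b_S+C_na_S^2$, where $a_S=\sum_{i\in S}\lambda_i$, $b_S=\sum_{i\in S}\lambda_i^2$ and $T=\Tr h^2$.
\end{enumerate}
Note that $\bra S dU_\eta(h^2)\ket S=b_S$ and $\bra S dU_\eta(h)^2\ket S=a_S^2$.

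For diagonality, conjugate $\Gamma_h$ by the diagonal phase unitaries $D\in U(1)^n$, which commute with $h$. We get $U_\eta(D)\Gamma_hU_\eta(D)^\dagger=\Gamma_{DhD^\dagger}=\Gamma_h$. The occupation states carry pairwise distinct $U(1)^n$ weights, so $\Gamma_h$ must be diagonal.

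\emph{Diagonal entries.} Fix $S$ and use a permutation $v_S$ with $v_SR=S$; relabelling $h$ reduces everything to $S=R$. Then
\[
\bra R\Gamma_h\ket R=(n+1)^2\sum_Z\int_u\Tr[uhu^\dagger Q_Z]^2\,|\det u_{Z,R}|^2 .
\]
Substitute $u\mapsto v_Zu$ in each term. This uses $v_Z^\dagger Q_Zv_Z=Q_R$ and $|\det(v_Zu)_{Z,R}|=|\det u_{R,R}|$. All $\binom n\eta$ terms become equal, giving
\[
\binom n\eta(n+1)^2\int_u \Tr(uhu^\dagger Q_R)^2|\det u_{R,R}|^2 .
\]
Write $\Tr(uhu^\dagger Q_R)=\Tr(u^\dagger Q_Ru\,h)$ and substitute $u\to u^\dagger$. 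This is harmless because $|\det u_{R,R}|=|\det (u^\dagger)_{R,R}|$. The integral is now exactly the left side of Lemma~\ref{lem:jacthree}.

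Multiplying that lemma's output by $(n+1)^2$ gives coefficients $(n+1)\eta(n-\eta+1)/(n(n+2))$ on $T$, $(n+1)(n-2\eta)/(n(n+2))$ on $b_R$, and $(n+1)/n$ on $a_R^2$. These are precisely $A_{n,\eta}$, $B_{n,\eta}$ and $C_n$, so the formula holds for $S=R$. The relabelling $v_S$ then gives it for arbitrary $S$.

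\emph{Main obstacle.} I expect the delicate step to be the bookkeeping in the reduction: the sums over $Z$ and $S$ must both be absorbed by Haar invariance without losing the $\binom n\eta$ factor or confusing $u$ with $u^\dagger$. Getting this wrong would mismatch the Lemma~\ref{lem:jacthree} integrand by a transpose. The diagonality argument via the torus weights is clean, and the remaining coefficient matching is routine.
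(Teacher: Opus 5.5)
Your proposal is correct and follows the paper's proof essentially step for step. The steps are the torus-weight argument for diagonality in the occupation basis, and Haar/permutation invariance to collapse the $Z$-sum into $\binom{n}{\eta}$ copies of a single integral. Lemma~\ref{lem:jacthree} then reads off $A_{n,\eta}$, $B_{n,\eta}$, $C_n$, and covariance passes from diagonal to general $h$. Your explicit inversion $u\to u^\dagger$, which matches the integrand $\Tr(uQ_Ru^\dagger h)$ of Lemma~\ref{lem:jacthree}, and your relabelling from $S$ to $R$ are details the paper leaves implicit; they do not change the argument.
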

\begin{proof}
 Let us begin with the case where  $ h={\rm diag}\,(\lambda_1,\ldots,\lambda_n)$  is diagonal, so that  $h$ commutes with the diagonal torus $T^n=\{{\rm diag}\,(e^{i\theta_1},\ldots,e^{i\theta_n})\}_{\theta_1,\ldots \theta_n\in \mbr}$. Now, the readily verified equivariance $\Gamma_{vhv\ad}=U_\eta(v)\Gamma_hU_\eta(v)\ad$ of the second-moment operator applied to a $v\in T^n$ immediately implies that  $\Gamma_h$ commutes with $U_\eta(v)$ for every diagonal unitary $v$. On the other hand, the occupation basis is multiplicity-free for this torus; indeed, $U_\eta(v)\ket S=e^{i\sum_{i\in S}\theta_i}\ket S$, so that different choices of  $S$ specify different characters. We conclude that  any operator commuting with all $U_\eta(v)$, $v\in T^n$, is diagonal in the occupation basis. In particular  $\Gamma_h$ is diagonal in the occupation basis, and it suffices to compute $\bra S\Gamma_h\ket S$ for all $S\in\mcz_\eta$. We calculate:
\begin{align}
  \bra S\Gamma_h\ket S&=\int_{u\sim U(n)}\sum_{Z\in\mcz_\eta}\widehat{o}_h(u,Z)^2\bra S U_\eta(u)\ad\Pi_ZU_\eta(u)\ket S\\
  &=(n+1)^2\int_{u\sim U(n)}\sum_{Z\in\mcz_\eta} \Tr[h u\ad Q_Z u] ^2\bra S U_\eta(u)\ad\Pi_ZU_\eta(u)\ket S\\
  &=(n+1)^2\binom{n}{\eta}\int_{u\sim U(n)}  \Tr[h u\ad Q_S u]^2 \bra S U_\eta(u)\ad\Pi_SU_\eta(u)\ket S\\
  &=(n+1)^2\binom{n}{\eta}\int_{u\sim U(n)}  \Tr[h u\ad Q_S u]^2 \lvert\det u_{S,S}\rvert^2
\end{align}
where we have used Eq.~\eqref{eq:hest} and the observation that the transitivity of the action of $U_\eta(u)$ on the occupation basis renders the sum independent of $Z$, so that we may as well set $Z=S$ and multiply by the number $\binom{n}{\eta}$ of elements of the sum. Lemma~\ref{lem:jacthree} now gives
\begin{equation}
  \bra S\Gamma_h\ket S=A_{n,\eta}\sum_i\lambda_i^2+B_{n,\eta}\sum_{i\in S}\lambda_i^2+C_n\left(\sum_{i\in S}\lambda_i\right)^2\,,\label{eq:geigenvals}
\end{equation}
establishing the lemma for diagonal $h$. In general, let $h$ be diagonalised as $h=v\Lambda v\ad$. Then equivariance and the lemma applied to the diagonal operator $\Lambda$ give
\begin{align}
    \Gamma_h = \Gamma_{v\Lambda v\ad} = U_\eta(v)\Gamma_\Lambda U_\eta(v\ad) &= U_\eta(v)\left(A_{n,\eta}\Tr(\Lambda^2)\id_{\mch_\eta}+B_{n,\eta}d{U_\eta}(\Lambda^2)+C_n d{U_\eta}(\Lambda)^2\right) U_\eta(v\ad) \\
    &= \left(A_{n,\eta}\Tr(\Lambda^2)\id_{\mch_\eta}+B_{n,\eta}d{U_\eta}(v\Lambda^2v\ad)+C_n d{U_\eta}(v\Lambda v\ad)^2\right)\\ 
    &= \left(A_{n,\eta}\Tr(h^2)\id_{\mch_\eta}+B_{n,\eta}d{U_\eta}(h^2)+C_n d{U_\eta}(h)^2\right) \,,
\end{align}
concluding the proof. 
\end{proof}
\noindent
We are now ready to prove:
\thmham*
\begin{proof}
    As usual~\cite{huang2020predicting}, we have $\Var[\hat h] = \mbe[\hat{h}^2]-\mbe[\hat{h}]^2 = \mbe[\hat{h}_0^2]-\mbe[\hat{h}_0]^2\le \mbe[\hat{h}_0^2]\le \|\Gamma_{h_0}\|_\infty$, so that it suffices to bound the spectral norm of  the second moment operator (Eq.~\eqref{eq:gammah}) corresponding to    the traceless $h_0=h-\Tr[h]\id_n/n$.  Now, if the eigenvalues of $h_0$ are $\lm_1,\ldots \lm_n$, Lemma~\ref{lem:gamhev} immediately gives
    \begin{equation}
    \|\Gamma_{h_0}\|_\infty=
\max_{S\in\mcz_\eta}\left[A_{n,\eta}\sum_i\lambda_i^2+B_{n,\eta}\sum_{i\in S}\lambda_i^2+C_n\left(\sum_{i\in S}\lambda_i\right)^2
  \right]\, .\label{eq:gaminf}
    \end{equation}
So, consider some $S\in\mcz_\eta$. The  $S$-dependent part of \eqref{eq:gaminf} is the quadratic form $  q_S(\lambda)
  =B_{n,\eta}\sum_{i\in S}\lambda_i^2+C_n\left(\sum_{i\in S}\lambda_i\right)^2$, restricted by the tracelessness of $h_0$ to the hyperplane $\boldsymbol{1}^\perp=  \left\{\lambda\in\mbr^n: \sum_i\lambda_i=0\right\}$. We will produce a bound of the form $  q_S(\lambda)\le\kappa_{n,\eta}\norm{\lambda}_2^2$, for a $\kappa_{n,\eta}$ independent of $\lm$ (but dependent on $n $ and $\eta$). In order to analyse $q_S(\lambda)$ it is helpful to introduce the decomposition $  \boldsymbol{1}^\perp=W_S\oplus W_{S^c}\oplus L_S$, where
  \begin{align}
      W_S&=\Big\{\lambda:\sum_i\lambda_i=0,\ \ {\rm supp}\, \lm \subseteq S \Big\}\\
      W_{S^c}&=\Big\{\lambda:\sum_i\lambda_i=0,\ \ {\rm supp}\, \lm \subseteq [n]\setminus S \Big\}\\
  L_S&=
  \Big\{\lambda:\sum_i\lambda_i=0,\ \ \lambda_i=a\ \ \forall i\in S,\ \lambda_i=b\ \ \forall i\in [n]\setminus S \Big\}\,.
  \end{align}
As far as the analysis of $q_S(\lambda)$ is concerned, the virtue of this decomposition is that for the corresponding expansion  $\lm = u+v+w$ we have 
\begin{equation}
    q_S(u+v+w)=q_S(u)+q_S(v)+q_S(w)= B_{n,\eta}\|u\|_2^2 + B_{n,\eta}\eta a^2 + C_n \eta^2 a^2= B_{n,\eta}\|u\|_2^2 + \frac{(B_{n,\eta}+\eta C_n)(n-\eta)}{n} \|w\|_2^2\,,
\end{equation}
where we have denoted by $a$ the constant value taken by $w$ on $S$. So $q_S$ has eigenvalues $  0,
  B_{n,\eta},$ and $
  (B_{n,\eta}+C_n\eta)(n-\eta)/n$. One readily checks that $B_{n,\eta}\ge 0$ is always smaller than the third  eigenvalue, so that 
  \begin{align}
      \|\Gamma_{h_0}\|_\infty&=\max_{S\in\mcz_\eta}\left[A_{n,\eta}\sum_i\lambda_i^2+B_{n,\eta}\sum_{i\in S}\lambda_i^2+C_n\left(\sum_{i\in S}\lambda_i\right)^2\right]\\
&\le \left(A_{n,\eta} +\frac{(B_{n,\eta}+\eta C_n)(n-\eta)}{n}\right) \|\lm\|_2^2   \\
&= \left(  \frac{(n+1)\bigl(n(2\eta+1)-2\eta^2\bigr)}{n(n+2)}\right) \|\lm\|_2^2   \\
&\le ( 2\eta+1)  \|\lm\|_2^2  
  \end{align}
but $\|\lm\|_2^2   =\Tr[h_0^2]$, completing the proof. 
\end{proof}

\noindent
We now turn to the (very immediate) proofs of our last two results:
\thmhamscaling*
\begin{proof}
    With $h_0$ and $\ket\psi$ as defined we have $\braket{\psi|h_0}{\psi}=0$, so that $\Var [\hat{h}_0] = \mbe[\hat{h}_0^2]$. Both $\ketbra{S_+}$ and $\ketbra{S_-}$ are eigenvectors of $\Gamma_{h_0}$ with (from Eq.~\eqref{eq:geigenvals}) eigenvalue  $\eta(n+1)$, so that (using $\braket{S_+}{S_-}=0$) we have 
\begin{align}
    \Var [\hat{h}_0] = \frac{1}{2} \left(\bra{S_+} + \bra{S_-}\right) \Gamma_{h_0}\left(\ket{S_+} + \ket{S_-}\right)= \frac{1}{2} \left(\bra{S_+} + \bra{S_-}\right) \eta(n+1)\left(\ket{S_+} + \ket{S_-}\right)=\eta(n+1)\,.
\end{align}
As $\|h_0\|_2^2 = n$, we obtain the result. 
\end{proof}

\thmhamcomp*
\begin{proof}
Working with the estimator for a particle-preserving quadratic observable $h=\Tr[h]\id_n/n + h_0$ is considerably more straightforward than the estimator for Slater determinant overlaps. Indeed, for a sampled pair $(u,Z)$ we simply have (Eq.~\eqref{eq:hestfull}) 
\begin{equation}\label{eq:hest2}
    \widehat{o}_{h}(u,Z)=\eta\Tr[h]/n+(n+1)\Tr[ h_0u\ad Q_Z u] =\eta\Tr[h]/n+(n+1)\Tr[ h_0(u\ad)_{*,Z} u_{Z,*}] \,.
\end{equation}
Now, as $h\in\mbc^{n\times n},\,u_{Z,*}\in\mbc^{\eta\times n},$ and $(u\ad)_{*,Z}\in\mbc^{n\times \eta}$, we can perform the matrix multiplications in time $\mco(n^2 \eta )$; the trace term is of course but $\mco(n)$. 
\end{proof}
We remark that without this trick of expressing $u\ad Q_Z u$ in terms of a product of submatrices of $u$ and $u\ad$, the complexity of naively multiplying the matrices in Eq.~\eqref{eq:hest2} would scale as $\mco(n^3)$.

\end{document}